\theoremstyle{plain}
\newtheorem{theorem}{Theorem}
\newtheorem{lemma}{Lemma}[]
\newtheorem{proposition}{Proposition}[] 
\newtheorem{corollary}{Corollary}[]
\theoremstyle{definition}
\newtheorem{definition}{Definition}[]
\theoremstyle{remark}
\newtheorem{remark}{Remark}[]
\newcommand{\argmin}{\mathop{\text{argmin}}}
\newcommand{\bs}{\boldsymbol}
\newcommand{\vertiii}[1]{{\left\vert\kern-0.25ex\left\vert\kern-0.25ex\left\vert #1 
    \right\vert\kern-0.25ex\right\vert\kern-0.25ex\right\vert}}
\newcommand{\fnorm}[1]{\left\lVert#1\right\rVert_F}
\newcommand{\dom}{\mbox{dom}}
\renewenvironment{proof}{{\noindent\bfseries\em Proof.}}{$\hfill\blacksquare$\newline}
\newcommand*\samethanks[1][\value{footnote}]{\footnotemark[#1]}
\begin{document}

\title{\bf Penalized Maximum Likelihood Estimation of \\ Multi-layered Gaussian Graphical Models}

\author{Jiahe Lin \thanks{Equal Contribution.}}
\affil{{\em University of Michigan}. jiahelin@umich.edu}

\author{Sumanta Basu \samethanks}
\affil{{\em University of California, Berkeley}. sumbose@berkeley.edu}

\author{Moulinath Banerjee}
\affil{{\em University of Michigan}. moulib@umich.edu}

\author{George Michailidis \thanks{Corresponding Author.}}
\affil{{\em University of Florida}. gmichail@ufl.edu}

\date{\vspace{-5ex}}
\maketitle
\begin{abstract} 
Analyzing multi-layered graphical models provides insight into understanding the conditional relationships among nodes within layers after adjusting for and quantifying the effects of nodes from other layers. We obtain the penalized maximum likelihood estimator for Gaussian multi-layered graphical models, based on a computational approach involving screening of variables, iterative estimation of the directed edges between layers and undirected edges within layers and a final refitting and stability selection step that provides improved performance in finite sample settings. We establish the consistency of the estimator in a high-dimensional setting. To obtain this result, we develop a strategy that leverages the biconvexity of the likelihood function to ensure convergence of the developed iterative algorithm to a stationary point, as well as careful uniform error control of the estimates over iterations.  The performance of the maximum likelihood estimator is illustrated on synthetic data. 
\end{abstract}

\noindent
{\bf Key Words:} graphical models; penalized likelihood; block coordinate descent; convergence; consistency

\section{Introduction}

The estimation of directed and undirected graphs from high-dimensional data has received a lot of attention in the machine
learning and statistics literature \cite[e.g., see][and references therein]{buhlmann2011statistics}, due to
their importance in diverse applications including understanding of biological processes and disease mechanisms, financial
systems stability and social interactions, just to name a few \citep{sachs2005causal,wang2007local,sobel2000causal}. In the case of undirected graphs, the edges capture
conditional dependence relationships between the nodes, while for directed graphs they are used to model causal relationships \citep{buhlmann2011statistics}. 

However, in a number of applications the nodes can be {\em naturally partitioned} into sets that exhibit interactions both between
them and amongst them. As an example, consider an experiment where one has collected data for both genes and metabolites for
the same set of patient specimens. In this case, we have three types of interactions between genes and metabolites: regulatory interactions between the two of them and co-regulation within the gene and within the metabolic compartments. The latter two types of relationships can be expressed through undirected graphs within the sets of genes and metabolites, respectively, while
the regulation of metabolites by genes corresponds to directed edges. Note that in principle there are feedback mechanisms from the
metabolic compartment to the gene one, but these are difficult to detect and adequately estimate in the absence of carefully collected time course data. Another example comes from the area of financial economics, where one collects data on returns of financial assets (e.g. stocks, bonds) and also on key macroeconomic indicators (e.g. interest rate, prices indices, various
measures of money supply and various unemployment indices). Once again, over short time periods there is influence from
the economic variables to the returns (directed edges), while there are co-dependence relationships between the asset returns and the macroeconomic variables, respectively, that can be modeled as undirected edges.

Technically, such {\em layered} network structures correspond to multipartite graphs that possess undirected edges and exhibit a 
directed acyclic graph structure between the layers, as depicted in Figure~\ref{fig:diagram}, where we use directed solid edges to denote the dependencies across layers and dashed undirected edges to denote within-layer conditional depedencies. 
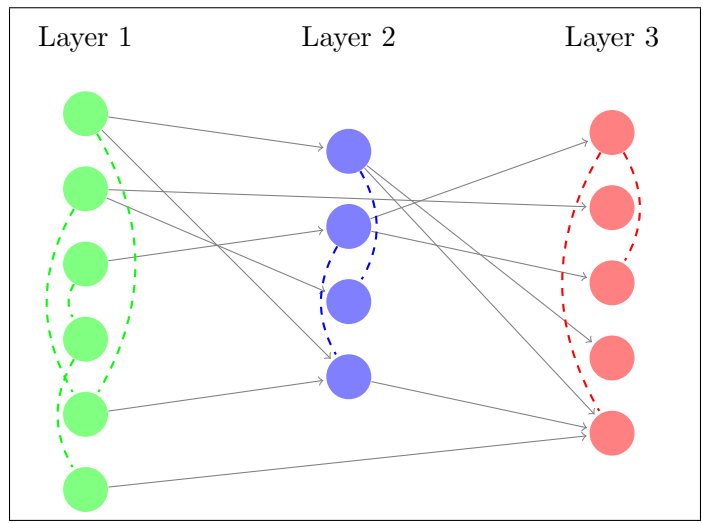
\begin{figure}[h]
\centering
\caption{Diagram for a three-layered network}\vspace*{2mm}\label{fig:diagram}
\begin{boxedminipage}{9.2cm}
\def\layersep{3.5cm}
\begin{tikzpicture}[shorten >=1pt,->,draw=black!50, node distance=\layersep]
    \tikzstyle{every pin edge}=[<-,shorten <=1pt]
    \tikzstyle{neuron}=[circle,fill=black!25,minimum size=17pt,inner sep=0pt]
    \tikzstyle{input neuron}=[neuron, fill=green!50];
    \tikzstyle{output neuron}=[neuron, fill=red!50];
    \tikzstyle{hidden neuron}=[neuron, fill=blue!50];
    \tikzstyle{annot} = [text width=4em, text centered]

    \foreach \name / \y in {1,...,6}
        \node[input neuron] (I-\name) at (0,-\y) {};

    \foreach \name / \y in {1,...,4}
        \path[yshift=-0.5cm]
            node[hidden neuron] (H-\name) at (\layersep,-\y cm) {};

    \foreach \name / \y in {1,...,5} 
	   \path[yshift=-0.25cm]
		   node[output neuron] (O-\name) at (2*\layersep, -\y cm) {};
	    
    \path (I-1) edge (H-1); \path (I-1) edge (H-4);
    \path (I-2) edge (H-3);
    \path (I-5) edge (H-4);
    \path (I-3) edge (H-2);

    \path (H-2) edge (O-1); \path (H-2) edge (O-3);
    \path (H-1) edge (O-4); \path (H-1) edge (O-5);
    \path (H-4) edge (O-5);
    
    \path (I-2) edge (O-2); \path (I-6) edge (O-5);
    
    \path[-,dashed, green,thick,bend left] (I-1) edge (I-5);
    \path[-,dashed, green,thick,bend right](I-2) edge (I-5);
	\path[-,dashed, green,thick,bend right](I-3) edge (I-4);
    \path[-,dashed, green,thick,bend right](I-4) edge (I-6);
    
    \path[-,dashed, blue,thick,bend left] (H-1) edge (H-3);
    \path[-,dashed, blue,thick,bend right] (H-2) edge (H-4);
     
    \path[-,dashed, red,thick,bend left] (O-1) edge (O-3);
    \path[-,dashed, red,thick,bend right] (O-1) edge (O-5);
    
    \node[annot,above of=H-1, node distance=1.5cm] (hl) {Layer 2};
    \node[annot,left of=hl] {Layer 1};
    \node[annot,right of=hl] {Layer 3};
\end{tikzpicture}
\end{boxedminipage}
\end{figure} Selected properties of such so-called {\em chain graphs} have been studied in the work of \citet{drton2008sinful}, with an emphasis on two alternative Markov properties including the LWF Markov property \citep{lauritzen1989graphical,frydenberg1990chain} and the AMP Markov property \citep{andersson2001alternative}. 

While layered networks being interesting from a theoretical perspective and having significant scope for applications, 
their estimation has received little attention in the literature. 
Note that for a 2-layered structure, the directed edges can be obtained through a multivariate regression procedure, while
the undirected edges in both layers through existing procedures for graphical models (for more technical details see Section~\ref{sec:estimation}). This is the strategy leveraged in the work of \citet{rothman2010sparse}, where for a 2-layered network structure proposed a multivariate regression with covariance estimation (MRCE) method for estimating the undirected edges in the second layer and the directed edges between them. A coordinate
descent algorithm was introduced to estimate the directed edges, while the popular glasso estimator \citep{friedman2008sparse} was used for the undirected edges. However, this method does not scale well according to the simulation results presented and no theoretical properties of the
estimates were provided. In follow-up work, \citet{lee2012simultaneous} proposed the Plug-in Joint Weighted Lasso (PWL) and the Plug-in Joint Graphical Weighted Lasso (PWGL) estimator for estimating the same 2-layered structure, where they use a weighted version of the algorithm in \citet{rothman2010sparse} and also provide theoretical results for the low dimensional setting, where the number of samples exceeds the number of potential directed and undirected edges to be estimated. Finally, \citet{cai2012covariate} proposed a method for estimating the same 2-layered structure and provided corresponding theoretical results in the high dimensional setting. The Dantzig-type estimator \citep{candes2007dantzig} was used for the regression coefficients and the corresponding residuals were used as surrogates, for obtaining the precision matrix through the CLIME estimator \citep{cai2011constrained}. While the above work assumed a Gaussian distribution for the data, in
more recent work by \citet{yang2014mixed}, the authors constructed the model under a general {\em mixed graphical model} framework, which allows each node-conditional distribution to belong to a potentially different univariate exponential family. In particular, with an underlying {\em mixed MRF} graph structure, instead of maximizing the joint likelihood, the authors proposed to estimate the homogeneous and heterogenous neighborhood for each node (which corresponds to undirected  and directed edges respectively, if put in the layered-network setting) by obtaining the $\ell_1$ regularized $M$-estimator of the node-conditional distribution parameters, using traditional approaches \citep[e.g.][]{meinshausen2006high} for neighborhood estimation. However, if we consider the overall error incurred by the neighborhood selection procedure of each individual node, the error bound becomes not tight due to the union bound operation used to obtain it.

In this work, we obtain the regularized maximum likelihood estimator under a sparsity assumption on both directed and undirected parameters for multi-layered Gaussian graphical models and establish its consistency properties in a high-dimensional setting. 
As discussed in Section~\ref{sec:Theory}, the problem is {\em not jointly convex} on the parameters, but convex on selected subsets of them. Further, it turns out that the problem is {\em biconvex} if
we consider a recursive multi-stage estimation approach that at each stage involves only regression parameters (directed edges) from preceeding layers and precision matrix parameters (undirected edges) for the {\em last layer considered} in that stage. Hence, we decompose the multi-layer network structure estimation into a sequence of 2-layer problems that allows us to establish the desired results.
Leveraging the biconvexity of the 2-layer problem, we establish the convergence of the iterates to the maximum-likelihood estimator, which under certain regularity conditions is arbitrarily close to the true parameters. The theoretical guarantees provided require a {\em uniform control} of the precision of the
regression and precision matrix parameters, which poses a number of theoretical challenges resolved in Section ~\ref{sec:Theory}.

In summary, despite the lack of overall convexity, we are able
to provide theoretical guarantees for the MLE in a high dimensional setting. 
We believe that the proposed strategy is generally applicable
to other non-convex statistical estimation problems that can be decomposed to two biconvex problems. Further, to enhance the numerical performance of the MLE in finite (and small) sample settings, we introduce a screening step that selects active nodes for the iterative algorithm used and that leverages recent developments in the high-dimensional regression literature \citep[e.g.,][]{van2014asymptotically,javanmard2014confidence,zhang2014confidence}. We also post-process the final MLE estimate through a stability selection procedure. As mentioned above, the screening and stability selection steps are beneficial to the performance of the MLE in finite samples and hence recommended for similarly structured problems.

The remainder of the paper is organized as follows. In Section \ref{sec:methodology}, we introduce the proposed methodology, with an emphasis on how the multi-layered network estimation problem is decomposed into a sequence of  two-layered network estimation problem(s).  In Section \ref{sec:Theory}, we provide theoretical guarantees for the estimation procedure posited. In particular, we show consistency of the estimates and convergence of the algorithm, under a number of common assumptions in high-dimensional settings. In Section \ref{sec: Implementation}, we show the performance of the proposed algorithm with simulation results under different simulation settings, and introduce serveral accerleration techniques which speed up the convergence of the algorithm and reduce the computing time in practical settings.

%
%
\section{Problem Formulation.}\label{sec:methodology}


Consider an $M$-layered Gaussian graphical model. Suppose there are $p_m$ nodes in Layer $m$, denoted by 
\begin{equation*}
\bs{X}^m=(X^m_1,\cdots,X^m_{p_m})', \quad \text{for }m=1,\cdots,M.
\end{equation*}
The structure of the model is given as follows: 
\begin{itemize}
\item[--] Layer 1. $\bs{X}^1=(X^1_1,\cdots,X^1_{p_1})'\sim \mathcal{N}(0,\Sigma^1)$.
\item[--] Layer 2. For $j=1,\cdots,p_2$: $X^2_j=(B^{12}_j)'\bs{X}^1 + \epsilon_j^2$, with $B^{12}_j\in\mathbb{R}^{p_1}$, and $\bs{\epsilon}^2 = (\epsilon^2_1,\cdots,\epsilon^2_{p_2})'\sim \mathcal{N}(0,\Sigma^2)$.
\item[ ] $\vdots$
\item[--] Layer $M$. For $j=1,2,\cdots,p_M$: 
\begin{equation*}
X^M_j = \sum_{m=1}^{M-1} \{(B^{mM}_j)'\bs{X}^m \}+ \epsilon_j^M, \quad \text{where }B^{mM}_j\in\mathbb{R}^{p_m}~~\text{for }m=1,\cdots,M-1,
\end{equation*}
and $\bs{\epsilon}^M=(\epsilon^M_1,\cdots,\epsilon^M_{p_M})'\sim\mathcal{N}(0,\Sigma^M)$.
\end{itemize}
The parameters of interest are {\em all directed edges} that encode the dependencies across layers, that is:
\begin{equation*}
B^{st} := \begin{bmatrix}B^{st}_1 & \cdots & B^{st}_{p_t}\end{bmatrix}, \quad \text{for } 1\leq s < t \leq M,
\end{equation*}
and {\em all undirected edges} that encode the conditional dependencies within layers after adjusting for the effects from directed edges, that is:
\begin{equation*}
\Theta^m := (\Sigma^m)^{-1}, \quad \text{for }m=1,\cdots,M.
\end{equation*}
It is assumed that $B^{st}$ and $\Theta^m$ are {\em sparse} for all $1,\dotsc, M$ and $1\leq s < t\leq M$.

Given centered data for all $M$ layers, denoted by $X^m=[X_1^m,\cdots,X^m_{p_m}]\in\mathbb{R}^{n\times p_m}$ for all $m=1,\cdots,M$, we aim to obtain the MLE for all $B^{st},1\leq s<t\leq M$ and all $\Theta^m,m=1,\cdots,M$ parameters. Henceforth, we use $\bs{X}^m$ to denote random vectors, and $X_j^m$ to denote the $j$th column in the data matrix $X_{n \times p_m}^m$ whenever there is no ambiguity. 

\medskip
Through Markov factorization \citep{lauritzen1996graphical}, the full log-likelihood function can be decomposed as:
\begin{equation*}
\scriptsize
\begin{split}
\ell(X^m; B^{st},\Theta^m,1\leq s<t\leq M,1\leq m\leq M)& =\ell(X^M|X^{M-1},\cdots,X^1;B^{1M},\cdots,B^{M-1,M},\Theta^M) \\ 
&\quad  + \ell(X^{M-1}|X^{M-2},\cdots,X^1;B^{1M-1},\cdots,B^{M-2,M-1},\Theta^{M-1})\\
& \quad + \cdots + \ell(X^2|X^1; B^{12},\Theta^2)+ \ell(X^1; \Theta^1)\\
& = \ell(X^1; \Theta^1) + \sum\nolimits_{m=2}^M \ell(X^m|X^1,\cdots,X^{m-1}; B^{1m},\cdots,B^{m-1,m},\Theta^m). 
\end{split}
\end{equation*}
Note that the summands share no common parameters, which enables us to maximize the likelihood with respect to individual parameters in the $M$ terms separately. More importantly, by conditioning Layer $m$ nodes on nodes in its previous $(m-1)$ layers, we can treat Layer $m$ nodes as the``response" layer, and all nodes in the previous $(m-1)$ layer combined as a super ``parent" layer. If we ignore the structure within the bottom layer ($X^1$) for the moment, the $M$-layered network can be viewed as $(M-1)$ two-layered networks, each comprising a response layer and a parent layer. Thus, the network structure in Figure~\ref{fig:diagram} can be viewed as a 2 two-layered network: for the first network, Layer 3 is the response layer, while Layers 1 and 2 combined form the ``parent" layer; for the second network, Layer 2 is the response layer, and Layer 1 is the ``parent" layer. Therefore, the problem for estimating all $\binom{M}{2}$ coefficient matrices and $M$ precision matrices can be translated into estimating $(M-1)$ two-layered network structures with directed edges from the parent layer to the response layer, and undirected edges within the response layer, and finally estimating the undirected edges within the bottom layer separately. 

Since all estimation problems boil down to estimating the structure of a 2-layered network, we focus the technical discussion on introducing our proposed methodology for a 2-layered network setting\footnote{In Appendix~\ref{appendix:example}, we give a detail example on how our proposed method works under a 3-layered network setting.}. The theoretical results obtained extend in a straightforward manner to an $M$-layered Gaussian graphical model.

\noindent
\begin{remark}
For the $M$-layer network structure, we impose certain identifiability-type condition on the largest ``parent" layer (encompassing $M-1$ layers), so that the directed edges of the entire network are estimable. The imposed condition translates into a minimum eigenvalue-type condition on the population precision matrix within layers, and conditions on the magnitude of dependencies across layers. Intuitively, consider a three-layered network: if $\bs{X}^1$ and $\bs{X}^2$ are highly correlated, then the proposed (as well as any other) method will exhibit difficulties in distinguishing the effect of $\bs{X}^1$ on $\bs{X}^3$ from that of $\bs{X}^2$ on $\bs{X}^3$. The (group) identifiability-type condition is thus imposed to obviate such circumstances. An in-depth discussion on this issue is provided in Section~\ref{sec:identifiability}.
\end{remark}

\subsection{\normalsize A Two-layered Network Set-up.}\label{sec:set-up}

Consider a two-layered Gaussian graphical model with $p_1$ nodes in the first layer, denoted by $\bs{X}=(X_1,\cdots,X_{p_1})'$, and $p_2$ nodes in the second layers, denoted by $\bs{Y}=(Y_1,\cdots,Y_{p_2})'$. The model is defined as follows: 
\begin{itemize}
\setlength\itemsep{1pt}
\renewcommand\labelitemi{--}
\item $\bs{X}=(X_1,\cdots,X_{p_1})'\sim \mathcal{N}(0,\Sigma_X)$.
\item For $j=1,2,\cdots,p_2$: $Y_j = B_j' \bs{X} + \epsilon_j$, $B_j\in\mathbb{R}^{p_1}$ and $\bs{\epsilon}=(\epsilon_1,\cdots,\epsilon_{p_2})^\top\sim\mathcal{N}(0,\Sigma_\epsilon)$.
\end{itemize}
The parameters of interest are: $\Theta_X:=\Sigma_X^{-1},\Theta_\epsilon:=\Sigma_\epsilon^{-1}$ and $B=[B_1,\cdots,B_{p_2}]$. As with most estimation problems in the high dimensional setting, we assume these parameters to be sparse. 

Now given data $X=[X_1,\cdots,X_{p_1}]\in\mathbb{R}^{n\times p_1}$ and $Y=[Y_1,\cdots,Y_{p_2}]\in\mathbb{R}^{n\times p_2}$, both centered, we would like to use the penalized maximum likelihood approach to obtain estimates for $\Theta_X$, $\Theta_\epsilon$ and $B$. Throughout this paper, we use $X$, $Y$ and $E$ to denote the size-$n$ realizations of the random vectors $\bs{X}$, $\bs{Y}$ and $\bs{\epsilon}$, respectively. Also, with a slight abuse of notation, we use $X_i,i=1,2,\cdots,p_1$ and $Y_{j},j=1,2,\cdots,p_2$ to denote the columns of the data matrix $X$ and $Y$, respectively, whenever there is no ambiguity. \\

The full log-likelihood can be written as 
\begin{equation}\label{eqn:opt}
\ell(X,Y;B,\Theta_\epsilon,\Theta_X) = \ell(Y|X;\Theta_\epsilon,B) + \ell(X;\Theta_X)
\end{equation}
Note that the first term only involves $\Theta_\epsilon$ and $B$, and the second term only involves $\Theta_X$. Hence, (\ref{eqn:opt}) can be maximized by maximizing $\ell(Y|X)$ w.r.t. $(\Theta_\epsilon,B)$, and maximizing $\ell(X)$ w.r.t. $\Theta_X$, respectively. $\widehat{\Theta}_X$ can be obtained using traditional methods for estimating undirected graphs, e.g., the Graphical Lasso \citep{friedman2008sparse} or the Nodewise Regression prcoedure \citep{meinshausen2006high}. Therefore, the rest of this paper will mainly focus on obtaining estimates for $\Theta_\epsilon$ and $B$. In the next subsection, we introduce our estimation procedure for obtaining the MLE for $\Theta_\epsilon$ and 
$B$.

\noindent
\begin{remark}
Our proposed method is targeted towards maximizing $\ell(Y|X;\Theta_\epsilon,B)$ (with proper penalization) in (\ref{eqn:opt}) only, which gives the estimates for across-layers dependencies between the response layer and the parent layer, as well as estimates for the conditional dependencies within the response layer each time we solve a 2-layered network estimation problem. For an $M$-layered estimation problem, the maximization regarding $\ell(X;\Theta_X)$ occurs only when we are estimating the within-layer conditional dependencies for the bottom layer.  
\end{remark}
%
%

\subsection{\normalsize Estimation Algorithm.}\label{sec:estimation}

The conditional likelihood for response $Y$ given $X$ can be written as:
\begin{eqnarray*}
L(Y|X) & = (\frac{1}{\sqrt{2\pi}})^{n{p_2}} |\Sigma_\epsilon\otimes I_n|^{-1/2}\exp\begin{Bmatrix}-\frac{1}{2} (\mathcal{Y}-\mathcal{X}\bs{\beta})^\top(\Sigma_\epsilon\otimes I_n)^{-1}(\mathcal{Y}-\mathcal{X}\bs{\beta})  \end{Bmatrix},
\end{eqnarray*}
where $\mathcal{Y} = vec(Y_1,\cdots,Y_{p_2})$, $\mathcal{X}=I_{p_2}\otimes X$ and $\beta=vec(B_1,\cdots,B_{p_2})$. After writing out the Kronecker product, the log-likelihood can be written as:
\begin{equation*}
\ell (Y|X) = \text{constant}+\frac{n}{2}\log\det\Theta_\epsilon-\frac{1}{2}\sum_{j=1}^{p_2}\sum_{i=1}^{p_2}\sigma^{ij}_\epsilon(Y_i-XB_i)^\top(Y_j-XB_j).
\end{equation*}
Here, $\sigma^{ij}_{\epsilon}$ denotes the $ij$-th entry of $\Theta_\epsilon$. With $\ell_1$ penalization which induces sparsity, the optimization problem can be formulated as:
\begin{equation}\label{eqn:obj}
\min\limits_{\substack{B\in\mathbb{R}^{p_1\times p_2} \\ \Theta_\epsilon\in \mathbb{S}_{++}^{p_2\times p_2}}} \left\{
\frac{1}{n}\sum_{j=1}^{p_2}\sum_{i=1}^{p_2} \sigma_\epsilon^{ij}(Y_i-XB_i)^\top(Y_j-XB_j)-\log\det \Theta_\epsilon + \lambda_n\sum_{j=1}^{p_2} \|B_j\|_1 + \rho_n \|\Theta_\epsilon\|_{1,\text{off}} \right\},
\end{equation}
and the first term in (\ref{eqn:obj}) can be equivalently written as:
\begin{equation*}
\text{tr}\begin{Bmatrix}\frac{1}{n}\begin{bmatrix}
(Y_1-XB_1)^\top \\  \vdots \\ (Y_{p_2}-XB_{p_2})^\top  
\end{bmatrix} \begin{bmatrix}
(Y_1-XB_1) & \cdots & (Y_{p_2}-XB_{p_2})
\end{bmatrix}\Theta_\epsilon \end{Bmatrix}:=\text{tr}(S\Theta_\epsilon).
\end{equation*}
where $S$ is defined as the sample covariance matrix of $E\equiv Y-XB$. This gives rise to the following optimization problem:
\begin{equation}\label{eqn:obj2}
\min\limits_{\substack{B\in\mathbb{R}^{p_1\times p_2} \\\Theta_\epsilon\in \mathbb{S}_{++}^{p_2\times p_2}}}  \left\{ \text{tr}(S\Theta_\epsilon) - \log\det\Theta_\epsilon +  \lambda_n\sum_{j=1}^{p_2} \|B_j\|_1 + \rho_n \|\Theta_\epsilon\|_{1,\text{off}} \right\}\equiv f(B,\Theta_\epsilon),
\end{equation}
where $\|\Theta\|_{1,\text{off}}$ is the absulote sum of the off-diagonal entries in $\Theta$, $\lambda_n$ and $\rho_n$ are both positive tuning parameters. This penalized log-likelihood corresponds to the objective function initially proposed in \citet{rothman2010sparse}, and has also been examined in \citet{lee2012simultaneous}. \\

Note that the objective function (\ref{eqn:obj2}) is {\em not jointly convex} in $(B,\Theta_\epsilon)$, but only 
convex in $B$ for fixed $\Theta_\epsilon$ and in $\Theta_\epsilon$ for fixed $B$; hence, it is bi-convex, which in turn
implies that the proposed algorithm may fail to converge to the global optimum, especially in settings where $p_1>n$, as pointed out by \citet{lee2012simultaneous}. As is the case with most non-convex problems, good initial parameters are beneficial for fast 
convergence of the algorithm, a fact supported by our numerical work on the present problem.  
Further, a good initialization is critical in establishing convergence of the algorithm for this problem (see Section~\ref{sec:convergence}).
To that end, we introduce a {\em screening step} for obtaining a good initial estimate for $B$. The theoretical 
justification for employing the screening step is provided in Section~\ref{sec:FWER}.

An outline of the computational procedure is presented in Algorithm~\ref{alg:as}, while the details of each step involved are discussed next.

\begin{algorithm}[t]
\SetKwData{Left}{left}\SetKwData{This}{this}\SetKwData{Up}{up}
\SetKwFunction{Union}{Union}\SetKwFunction{FindCompress}{FindCompress}
\SetKwInOut{Input}{Input}\SetKwInOut{Output}{Output}
\caption{Computational procedure for estimating $B$ and $\Theta_\epsilon$}\label{alg:as}

\Input{Data from the parent layer $X$ and the response layer $Y$.}
\BlankLine
\textbf{Screening:}\\
\hspace*{5mm}\begin{minipage}[t]{14.5cm}
 \For{$j=1,\cdots,p_2$}{regress $Y_j$ on $X$ using the de-biased Lasso procedure in \citet{javanmard2014confidence} and obtain the corresponding vector of $p$-values $P_j$\;}
 obtain adjusted $p$-values $\widetilde{P}_j$ by applying Bonferroni correction to  $\mathrm{vec}(P_1,\cdots,P_j)$\;
 determine the support set $\mathcal{B}_j$ for each regression using (\ref{eqn:support}).
\end{minipage}
\BlankLine
\textbf{Initialization:}\\
\hspace*{5mm}\begin{minipage}[t]{14.5cm}
Initialize column $j=1,\cdots,p_2$ of $\widehat{B}^{(0)}$ by solving (\ref{eqn:initialB}).\\
Initialize $\widehat{\Theta}_\epsilon^{(0)}$ by solving (\ref{eqn:initialTheta}) using the graphical lasso \citep{friedman2008sparse}.
\end{minipage}\\
\While{$|f(\widehat{B}^{(k)},\widehat{\Theta}_\epsilon^{(k)})-f(\widehat{B}^{(k+1)},\widehat{\Theta}_\epsilon^{(k+1)})|\geq \epsilon$}{
update $\widehat{B}$ with (\ref{eqn:updateB})\;
update $\widehat{\Theta}_\epsilon$ with (\ref{eqn:updateTheta})\;
}
\BlankLine
\textbf{Refitting $B$ and $\Theta_\epsilon$:}
\hspace*{5mm}\begin{minipage}[t]{14.5cm}
 \For{$j=1,\cdots,p_2$}{Obtain the refitted $\widetilde{B}_j$ using (\ref{eqn:refitB})\;}
 re-estimate $\widetilde{\Theta}_\epsilon$ using (\ref{glasso-final}) with $W$ coming from stability selection.
\end{minipage} 
\BlankLine
\Output{Final Estimates $\widetilde{B}$ and $\widetilde{\Theta}_\epsilon$.}
\end{algorithm}

\textit{\textbf{Screening.}} For each variable $Y_j, j=1,\cdots,p_2$ in the response layer, regress $Y_j$ on $X$ via the de-biased Lasso procedure proposed by \citet{javanmard2014confidence}. The output consists of the $p$-value(s) for each predictor in each regression, denoted by $P_j$, with $P_j\in
[0,1]^{p_1}$. To control the family-wise error rate of the estimates, we do a Bonferroni correction at level $\alpha$: define $\alpha^{\star} = \alpha/p_1 p_2$ and set $B_{j,k} = 0$ if the
$p$-value obtained for the $k$'th predictor in the $j$'th regression $P_{j,k}$ exceeds $\alpha^{\star}$. Further, let 
\begin{equation}\label{eqn:support}
\mathcal{B}_j=\{B_j\in\mathbb{R}^{p_1}:B_{j,k}=0\text{ if }k\in \widehat{S}_j^c\} \subseteq \mathbb{R}^{p_1},
\end{equation}
where $\widehat{S}_j$ is the collection of indices for those predictors deemed ``active" for response $Y_j$:
\begin{equation*}
\widehat{S}_j = \{k:P_{j,k}>\alpha^{\star}\}, \quad \text{for }j=1,\cdots,p_2.
\end{equation*}
Therefore, subsequent estimation of the elements of $B$ will be restricted to $\mathcal{B}_1\times\cdots\times \mathcal{B}_{p_2}$. \\

\textit{\textbf{Alternating Search.}} In this step, we utilize the bi-convexity of the problem and estimate $B$ and $\Theta_\epsilon$ by minimizing in an iterative fashion the objective function with respect to (w.r.t.) one set of parameters, while holding the other set fixed within each iteration. 

As with most iterative algorithms, we need an initializer; for $\widehat{B}^{(0)}$ it corresponds to a Lasso/Ridge regression estimate with a small penalty, while for $\widehat{\Theta}_\epsilon$ we use the Graphical Lasso procedure applied to the residuals obtained from the first stage regression. That is, for each $j=1,\cdots,p_2$, 
\begin{equation}\label{eqn:initialB}
\widehat{B}_j^{(0)} = \argmin\limits_{B_j\in\mathcal{B}_j} \left\{ \|Y_j - XB_j\|^2_2 + \lambda_n^0 \|B_j\|_1\right\},
\end{equation}
where $\lambda_n^0$ is some small tuning parameter for initialization, and set  $\widehat{E}^{(0)}_j := Y_j-X\widehat{B}_j^{(0)}$. An initial estimate for $\widehat{\Theta}_\epsilon$ is then given by solving for the following optimization problem with the graphical lasso \citep{friedman2008sparse} procedure:
 \begin{equation*}\label{eqn:initialTheta}
 \widehat{\Theta}_\epsilon^{(0)} = \argmin\limits_{\Theta_\epsilon\in\mathbb{S}_{++}^{p_2\times p_2}} \left\{
 \log\det\Theta_\epsilon - \text{tr}(\widehat{S}^{(0)}\Theta_\epsilon) + \rho_n\|\Theta_\epsilon\|_{1,\text{off}}
 \right\},
 \end{equation*}
where $\widehat{S}^{(0)}$ is the sample covariance matrix based on $(\widehat{E}^{(0)}_1,\cdots,\widehat{E}^{(0)}_{p_2})$.\\

Next we use an alternating block coordinate descent algorithm with $\ell_1$ penalization to reach a stationary point of the objective function (\ref{eqn:obj2}):
\begin{itemize}
\setlength{\itemsep}{0.2pt}
\item[--] Update $B$ as:
\begin{equation}\label{eqn:updateB}
\widehat{B}^{(k+1)} = \argmin\limits_{B\in \mathcal{B}_1\times \cdots\times \mathcal{B}_{p_2}} \left\{ \frac{1}{n}\sum_{i=1}^{p_2}\sum_{j=1}^{p_2}(\widehat{\sigma}_{\epsilon}^{ij})^{(k)}(Y_i-XB_i)^\top (Y_j - XB_j) + \lambda_n\sum_{j=1}^{p_2}\|B_j\|_1 \right\},
\end{equation}
which can be obtained by cyclic coordinate descent w.r.t each column $B_j$ of $B$, that is, update each column $B_j$ by: 
\begin{equation}\label{eqn:Bjunrestricted}
\widehat{B}_j^{(t+1)} = \argmin\limits_{B_j\in\mathcal{B}_j} \begin{Bmatrix} \frac{(\widehat{\sigma}^{jj}_\epsilon)^{(k)}}{n}\|Y_j+r_j^{{(t+1)}}-XB_j\|_2^2 + \lambda_n\|B_j\|_1 \end{Bmatrix},
\end{equation}
 where 
$$r_j^{(t+1)} = \frac{1}{(\widehat{\sigma}^{jj}_\epsilon)^{(k)}}\left[ \sum_{i=1}^{j-1} (\widehat{\sigma}^{ij}_\epsilon)^{(k)}(Y_i-X\widehat{B}_i^{(t+1)}) + \sum_{i=j+1}^{p_2} (\widehat{\sigma}^{ij}_\epsilon)^{(k)} (Y_i-X\widehat{B}_i^{(t)})\right],$$
and iterate over all columns until convergence. Here, we use $k$ to index the outer iteration while minimizing w.r.t. $B$ or $\Theta_\epsilon$, and use $t$ to index the inner iteration while cyclically minimizing w.r.t. each column of $B$. 
\item[--] Update  $\Theta_\epsilon$ as:
\begin{equation}\label{eqn:updateTheta}
 \widehat{\Theta}_\epsilon^{(k+1)} = \argmin\limits_{\Theta_\epsilon\in\mathbb{S}_{++}^{p_2\times p_2}} \left\{
 \log\det\Theta_\epsilon - \text{tr}(\widehat{S}^{(k+1)}\Theta_\epsilon) + \rho_n\|\Theta_\epsilon\|_{1,\text{off}}
 \right\},
 \end{equation}
 where $\widehat{S}^{(k+1)}$ is the sample covariance matrix based on $\widehat{E}^{(k+1)}_j = Y_j - X\widehat{B}_j^{(k+1)},j=1,\cdots,p_2$.
\end{itemize}

\textit{\textbf{Refitting and Stabilizing.}} As noted in the introduction, this step is beneficial in applications, especially when one deals with large scale multi-layer networks and relatively smaller sample sizes. Denote the solution obtained by the above iterative procedure by $B^{\infty}$ and $\Theta_\epsilon^\infty$. For each $j=1,\cdots,p_2$, set $\widetilde{\mathcal{B}}_j = \{B_j: B_{j,i}=0 \text{ if }B^\infty_{j,i}=0,B_j\in\mathbb{R}^{p_1}\}$ and the final estimate for $B_j$ is given by ordinary least squares: 
\begin{equation}\label{eqn:refitB}
\widetilde{B}_j = \argmin\limits_{B_j \in \widetilde{\mathcal{B}}_j} \|Y_j - XB_j\|^2.
\end{equation}
For $\Theta_\epsilon$, we obtain the final estimate by a combination of stability selection \citep{meinshausen2010stability} and graphical lasso \citep{friedman2008sparse}. That is, after obtaining the refitted residuals $\widetilde{E}_j:=Y_j-X\widetilde{B}_j,j=1,\cdots,p_2$, based on the stability selection procedure with the graphical lasso, we obtain the stability path, or probability matrix $W$ for each edge, which records the proportion of each edge being selected based on bootstrapped samples of $\widetilde{E}_j$'s. Then, using this probability matrix $W$ as a weight matrix, we obtain the final estimate of $\widetilde{\Theta}_\epsilon$ as follow:
\begin{equation}\label{glasso-final}
\widetilde{\Theta}_\epsilon = \argmin\limits_{\Theta_\epsilon\in\mathbb{S}^{p_2\times p_2}_{++}}\left\{ \log\det \Theta_\epsilon - \mbox{tr}(\widetilde{S}\Theta_\epsilon) + \widetilde{\rho}_n\|(1-W)*\Theta_\epsilon\|_{1,\text{off}} \right\},
\end{equation} 
where we use $*$ to denote the element-wise product of two matrices, and $\widetilde{S}$ is the sample covariance matrix based on the refitted residuals $\widetilde{E}$. Again, (\ref{glasso-final}) can be solved by the graphical lasso procedure \citep{friedman2008sparse}, with $\widetilde{\rho}_n$ properly chosen. 

%
%

\subsection{\normalsize Tuning Parameter Selection.}

To select the tuning parameters $(\lambda_n,\rho_n)$, we use the Bayesian Information Criterion(BIC), which is the summation of a goodness-of-fit term (log-likelihood) and a penalty term. The explicit form of BIC (as a function of $B$ and $\Theta_\epsilon$) in our setting is 
given by 
\begin{equation*}
\text{BIC}(B,\Theta_\epsilon) = -\log\det\Theta_\epsilon + \text{tr}(S\Theta_\epsilon) + \frac{\log n}{n} (\frac{\|\Theta_\epsilon\|_0-p_2}{2} + \|B\|_0)
\end{equation*}
where 
\begin{equation*}
S :=  \frac{1}{n}\begin{bmatrix}
(Y_1-XB_1)^\top \\  \vdots \\ (Y_{p2}-XB_{p_2})^\top  
\end{bmatrix} \begin{bmatrix}
(Y_1-XB_1) & \cdots & (Y_{p_2}-XB_{p_2})
\end{bmatrix},
\end{equation*}
and $\|\Theta_\epsilon\|_0$ is the total number of nonzero entries in $\Theta_\epsilon$. Here we penalize the non-zero elements in the upper-triangular part of $\Theta_\epsilon$ and the non-zero ones in 
$B$. We choose the combination $(\lambda_n^*,\rho_n^*)$ over a grid of $(\lambda,\rho)$ values, and $(\lambda_n^*,\rho_n^*)$ should minimize the BIC evaluated at $(B^{\infty},\Theta_\epsilon^{\infty})$.  \\

%
%

\section{Theoretical Results} \label{sec:Theory}

In this section, we establish a number of theoretical results for the proposed iterative algorithm. We focus the presentation on the two-layer structure,
since as explained in the previous section the multi-layer estimation problem decomposes to a series of two-layer ones. 
As mentioned in the introduction, one key challenge for estabilishing the theoretical results comes from the fact that the objective function (\ref{eqn:obj2}) is not jointly convex in $B$ and $\Theta_\epsilon$. Consequently, if we simply used properties of block-coordinate descent algorithms, we would not be able to provide the necessary theoretical guarantees for the estimates we obtain. On the other hand, the biconvex nature of the objective function allows
us to establish convergence of the alternating algorithm to a stationary point, provided it is initialized from a point close enough to the true parameters.
This can be accomplished using a Lasso-based initializer for $B$ and $\Theta_\epsilon$ as previously discussed. The details of algorithmic convergence 
are presented in Section~\ref{sec:convergence}.

Another technical challenge is that each update in the alternating search step relies on estimated quantities --namely the regression and precision matrix parameters --rather than the raw data, whose estimation precision needs to be controlled {\em uniformly} across all iterations. The details of establishing
consistency of the estimates for both fixed and random realizations are given in Section~\ref{sec:consistency}. 

Next, we outline the structure of this section. In Section \ref{sec:convergence} Theorem~\ref{thm:convergence}, we show that for any fixed set of realization of $X$ and $E$\footnote{We actually observe $X$ and $Y$, which is given by a corresponding set of realization in $X$ and $E$ based on the model.}, the iterative algorithm is guaranteed to converge to a stationary point if estimates for all iterations lie in a compact ball around the true value of the parameters. In Section \ref{sec:consistency}, we show in Theorem~\ref{thm:beta-theta-bound} that for any random $X$ and $E$, with high probability, the estimates for all iterations lie in a compact ball around the true value of the parameters. Then in Section~\ref{sec:FWER}, we show that asymptotically with $\log(p_1p_2)/n\rightarrow 0$, while keeping the family-wise type I error under some pre-specified level, the screening step correctly identifies the true support set for each of the regressions, based upon which the iterative algorithm is provided with an initializer that is close to the true value of the parameters. Finally in Section \ref{sec:identifiability}, we provide sufficient conditions for both directed and undirected edges to be 
identifiable (estimable) for multi-layered network.

Throughout this section, to distinguish the estimates from the true values, we use $B^*$ and $\Theta_\epsilon^*$ to denote the true values.

%
%

\subsection{\normalsize Convergence of the Iterative Algorithm}\label{sec:convergence}

In this subsection, we prove that the proposed block relaxation algorithm converges to a stationary point for any fixed set of data, provided that the estimates for all iterations lie in a compact ball around the true value of the parameters. This requirement is shown to be satisfied with high probability in the next subsection \ref{sec:consistency}.

Decompose the optimization problem in (\ref{eqn:obj2}) as follows:
\begin{equation*}
\min\limits_{\substack{B\in\mathbb{R}^{p_1\times p_2}\\ \Theta_\epsilon\in\mathbb{S}_{++}^{p_2\times p_2}}}f(B,\Theta_\epsilon) \equiv f_0(B,\Theta_\epsilon) + f_1(B) + f_2(\Theta_\epsilon) 
\end{equation*}
where 
\begin{equation*}
f_0(B,\Theta_\epsilon) =\frac{1}{n}\sum_{j=1}^{p_2}\sum_{i=1}^{p_2}\sigma^{ij}_\epsilon (Y_i-XB_i)'(Y_j - XB_j)-\log \det \Theta_\epsilon = \text{tr}(S\Theta_\epsilon) - \log \det \Theta_\epsilon, 
\end{equation*}
\begin{equation*}
f_1(B) = \lambda_n\|B\|_1, \quad f_2(\Theta_\epsilon) = \rho_n\|\Theta_\epsilon\|_{1,\text{off}}.
\end{equation*}
and $\mathbb{S}_{++}^{p_2\times p_2}$ is the collection of $p_2\times p_2$ symmetric positive definite matrices. Further, denote the limit point (if there is any) of $\{\widehat{B}^{(k)}\}$ and $\{\widehat{\Theta}_\epsilon^{(k)}\}$ by 
$B^\infty = \lim_{k\rightarrow\infty} \widehat{B}^{(k)}$ and $\Theta^\infty_\epsilon =  \lim_{k\rightarrow\infty} \widehat{\Theta}_\epsilon^{(k)}$, respectively. 

\begin{definition}[stationary point\citep{tseng2001convergence} pp.479] Define $z$ to be a stationary point of $f$ if $z\in \dom (f)$ and $f'(z;d)\geq 0,\forall \text{ direction }d=(d_1,\cdots,d_N)$ where $d_t$ is the $t^{\text{th}}$ coordinate block. 
\end{definition}

\begin{definition}[Regularity \citep{tseng2001convergence} pp.479] $f$ is regular at $z\in\dom(f)$ if $f'(z;d)\geq 0$ for all $d=(d_1,\cdots,d_N)$ such that 
\begin{equation*}
f'(z;(0,\cdots,d_t,\cdots,0))\geq 0,\qquad t=1,2,\cdots,N.
\end{equation*}  
\end{definition}

\begin{definition}[Coordinate-wise minimum] Define $(B^\infty,\Theta_\epsilon^\infty)$ to be a coordinate-wise minimum if 
\begin{eqnarray*}
f(B^{\infty},\Theta_\epsilon) &\geq& f(B^\infty,\Theta_\epsilon^\infty), \quad \forall \Theta_\epsilon\in \mathbb{S}_{++}^{p_2\times p_2}, \\
f(B,\Theta_\epsilon^{\infty})&\geq & f(B^{\infty},\Theta_\epsilon^{\infty}),\quad \forall B\in\mathbb{R}^{p_1\times p_2}.
\end{eqnarray*}
\end{definition}
Note for our iterative algorithm, we only have two blocks, hence with the above notation, $N=2$. 

\begin{remark}
\citet{tseng2001convergence} proved that if $f$ satisfies certain conditions \citep[see Theorem 4.1 (a), (b) and (c) for details]{tseng2001convergence}, the limit point given by the general block-coordinate descent algorithm (with $N\geq 2$ blocks) is a stationary point of $f$. However, in the high dimensional setting, the posited objective function does not satisfy {\em any} of the assumptions in that Theorem. Hence, for this problem, we need to employ a different strategy to prove convergence to a stationary point, and the resulting statements hold true for all problems that use a $2$-block coordinate descent algorithm.
\end{remark}
\medskip
Since $\dom(f_0)$ is open and $f_0$ is G\^{a}teaux-differentiable on the $\dom(f_0)$, by \citet{tseng2001convergence} Lemma 3.1, $f$ is regular in the $\dom (f)$. From the discussion on Page 479 of \citep{tseng2001convergence}, we then have: 
\newline
\newline
{\bf Fact 1:} 
Every coordinate-wise minimum is a stationary point of $f$.
\newline
\newline
The following theorem shows that any limit point $(B^\infty,\Theta_\epsilon^\infty)$ of the iterative algorithm described in Section~\ref{sec:estimation} is a stationary point of $f$, as long as all the iterates are within a closed ball around the truth.

\begin{theorem}[Convergence for fixed design]\label{thm:convergence}
Suppose for any fixed realization of $X$ and $E$, the estimates $\left\{(\widehat{B}^{(k)},\widehat{\Theta}_\epsilon^{(k)})\right\}_{k=1}^\infty$ obtained by implemeting the alternating search step satisfy the following bound:
\begin{equation*}
\fnorm{(\widehat{B}^{(k)},\widehat{\Theta}_\epsilon^{(k)})-(B^*,\Theta^*)}\leq R,\quad \text{for some }R>0,~\forall k\geq 1.
\end{equation*}
Then any limit point $(B^\infty,\Theta_\epsilon^\infty)$ of the iterative algorithm is a stationary point of $f$. 
\end{theorem}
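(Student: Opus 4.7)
The plan is to reduce the convergence claim to showing that any limit point $(B^\infty,\Theta_\epsilon^\infty)$ is a coordinate-wise minimum of $f$; by Fact~1 in the excerpt, regularity of $f$ on its domain then upgrades this directly to stationarity. So the real work lies in establishing coordinate-wise optimality under the compact-ball hypothesis, and the natural route is a monotonicity plus subsequence-compactness argument.

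First I would record two preliminary facts. \emph{Monotonicity}: each of the two block updates solves a convex subproblem, so $f(\widehat{B}^{(k+1)},\widehat{\Theta}_\epsilon^{(k)})\le f(\widehat{B}^{(k)},\widehat{\Theta}_\epsilon^{(k)})$ and $f(\widehat{B}^{(k+1)},\widehat{\Theta}_\epsilon^{(k+1)})\le f(\widehat{B}^{(k+1)},\widehat{\Theta}_\epsilon^{(k)})$, whence $f^{(k)}:=f(\widehat{B}^{(k)},\widehat{\Theta}_\epsilon^{(k)})$ is non-increasing. \emph{Continuity on the ball}: $f_1$ and $f_2$ are continuous everywhere, while continuity of $f_0$ reduces to continuity of $-\log\det\Theta_\epsilon$. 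Since each graphical-lasso update keeps iterates in $\mathbb{S}_{++}^{p_2\times p_2}$ and, provided $R$ is taken small enough relative to $\lambda_{\min}(\Theta^*)$, Weyl's inequality forces $\lambda_{\min}(\widehat{\Theta}_\epsilon^{(k)})\ge \lambda_{\min}(\Theta^*)-R>0$ uniformly in $k$, so $f$ is continuous on the closed ball together with its limit points. Being continuous on a compact set, $f$ is bounded below there, so $\{f^{(k)}\}$ decreases to a finite limit $L$, and the intermediate value $f(\widehat{B}^{(k+1)},\widehat{\Theta}_\epsilon^{(k)})$, sandwiched between $f^{(k+1)}$ and $f^{(k)}$, also converges to $L$.

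Next I would carry out the subsequential argument. Fix a limit point realized along $(\widehat{B}^{(k_j)},\widehat{\Theta}_\epsilon^{(k_j)})\to(B^\infty,\Theta_\epsilon^\infty)$. By Bolzano--Weierstrass inside the compact ball, extract a further subsequence (still written $k_j$) on which $\widehat{B}^{(k_j+1)}\to\widetilde{B}$. The defining minimizer property of the $B$-update gives $f(\widehat{B}^{(k_j+1)},\widehat{\Theta}_\epsilon^{(k_j)})\le f(B,\widehat{\Theta}_\epsilon^{(k_j)})$ for every admissible $B$; sending $j\to\infty$ and using continuity yields $f(\widetilde{B},\Theta_\epsilon^\infty)\le f(B,\Theta_\epsilon^\infty)$. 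Combining with the sandwich above, $f(\widetilde{B},\Theta_\epsilon^\infty)=L=f(B^\infty,\Theta_\epsilon^\infty)$, so $B^\infty$ itself minimizes $f(\cdot,\Theta_\epsilon^\infty)$. A completely symmetric argument, this time extracting a convergent sub-subsequence of $\widehat{\Theta}_\epsilon^{(k_j+1)}$ and using that $\widehat{\Theta}_\epsilon^{(k_j+1)}$ minimizes $f(\widehat{B}^{(k_j+1)},\cdot)$ over $\mathbb{S}_{++}^{p_2\times p_2}$, shows coordinate-wise minimality in the $\Theta_\epsilon$ direction. Hence $(B^\infty,\Theta_\epsilon^\infty)$ is a coordinate-wise minimum, and Fact~1 finishes the proof.

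The main obstacle, and the reason Tseng's general block-coordinate convergence theorem cannot simply be quoted, is that $f$ is not coercive in the high-dimensional regime and the $-\log\det$ barrier can blow up whenever $\widehat{\Theta}_\epsilon^{(k)}$ drifts toward the boundary of $\mathbb{S}_{++}^{p_2\times p_2}$. The compact-ball hypothesis is precisely what rules this out: by bounding iterates in Frobenius norm around $\Theta^*$ with $R$ small enough, it guarantees a uniform positive lower bound on $\lambda_{\min}(\widehat{\Theta}_\epsilon^{(k)})$, and hence uniform continuity of $f$ along the trajectory. Consequently the technical content of the theorem is not the subsequence chase itself, which is standard once continuity is in hand, but the reduction of the convergence question to an a priori bound on the iterates --- exactly the bound that the next subsection secures with high probability over the random design.
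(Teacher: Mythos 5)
Your proposal is correct and follows essentially the same route as the paper's proof: monotone descent of $f$ along the iterates, Bolzano--Weierstrass inside the compact ball, the sandwich identity $f(\widetilde{B},\Theta_\epsilon^\infty)=f(B^\infty,\Theta_\epsilon^\infty)$, coordinate-wise minimality, and then Fact~1. One small index slip: in the $\Theta_\epsilon$ direction you invoke optimality of $\widehat{\Theta}_\epsilon^{(k_j+1)}$ for $f(\widehat{B}^{(k_j+1)},\cdot)$, which in the limit yields $f(B^\infty,\Theta_\epsilon^\infty)\leq f(\widetilde{B},\Theta_\epsilon)$ rather than the needed $f(B^\infty,\Theta_\epsilon^\infty)\leq f(B^\infty,\Theta_\epsilon)$; the paper avoids this by using the optimality of $\widehat{\Theta}_\epsilon^{(k)}$ for $f(\widehat{B}^{(k)},\cdot)$ along the original subsequence, whose first argument converges to $B^\infty$ directly. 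Your explicit Weyl-inequality justification that the iterates stay uniformly in the interior of $\mathbb{S}_{++}^{p_2\times p_2}$ (so $-\log\det$ stays continuous and bounded) is a welcome addition that the paper leaves implicit.
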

\begin{proof}
We initialize the algorithm at $(\widehat{B}^{(0)},\widehat{\Theta}^{(0)}_\epsilon)\in\dom(f)$. Then for all $k\geq 1$:
\begin{eqnarray}
\widehat{B}^{(k)} &=& \argmin\limits_{B} f(B,\widehat{\Theta}_\epsilon^{(k-1)}) \label{getB}\\
\widehat{\Theta}^{(k)}_\epsilon & = & \argmin\limits_{\Theta_\epsilon} f(\widehat{B}^k,\Theta_\epsilon)\label{getTheta}
\end{eqnarray} 
Now, consider a limit point $(B^\infty,\Theta_\epsilon^\infty)$ of the sequence $\{(\widehat{B}^{(k)},\widehat{\Theta}_\epsilon^{(k)})\}_{k\geq 1}$. Note that such limit point exists by Bolzano-Weierstrass theorem since the sequence $\{(\widehat{B}^{(k)},\widehat{\Theta}_\epsilon^{(k)})\}_{k\geq 1}$ is bounded. Consider a subsequence $\mathcal{K}\subseteq\{1,2,\cdots\}$ such that $(\widehat{B}^{(k)},\widehat{\Theta}_\epsilon^{(k)})_{k\in \mathcal{K}}$ converges to $(B^\infty,\Theta_\epsilon^{\infty})$. Now for the bounded sequence $\{(\widehat{B}^{(k+1)},\widehat{\Theta}_\epsilon^{(k)})\}_{k\in\mathcal{K}}$, without loss of generality\footnote{switching to some further subsequence of $\mathcal{K}$ if necessary.}, we can say that 
\begin{equation*}
\{(\widehat{B}^{(k+1)},\widehat{\Theta}_\epsilon^{(k)})\}_{k\in\mathcal{K}}\rightarrow (\widetilde{B}^\infty,\widetilde{\Theta}_\epsilon^\infty), \quad \text{for some }(\widetilde{B}^\infty,\widetilde{\Theta}_\epsilon^\infty)\in\dom(f).
\end{equation*}
By (\ref{getB}) it follows immediately that $\widetilde{\Theta}_\epsilon^\infty = \Theta^\infty_\epsilon$. Also, the following inequality holds: 
\begin{equation*}
f(\widehat{B}^{(k+1)},\widehat{\Theta}_\epsilon^{(k+1)})\leq f(\widehat{B}^{(k+1)},\widehat{\Theta}_\epsilon^{(k)})\leq f(\widehat{B}^{(k)},\widehat{\Theta}_\epsilon^{(k)}).
\end{equation*}
Thus, by letting $k\rightarrow\infty$ over $\mathcal{K}$, we have 
\begin{equation*}
f(B^\infty,\Theta_\epsilon^\infty)\leq f(\widetilde{B}^\infty,\Theta_\epsilon^\infty)\leq f(B^\infty,\Theta_\epsilon^\infty),
\end{equation*}
since $f$ is continuous. This implies that 
\begin{equation} \label{equal}
f(\widetilde{B}^\infty,\Theta_\epsilon^\infty) = f(B^\infty,\Theta_\epsilon^\infty)
\end{equation}
Next, since $f(\widehat{B}^{(k+1)},\widehat{\Theta}_\epsilon^{(k)})\leq f(B,\widehat{\Theta}_\epsilon^{(k)})$, for all $B\in \mathbb{R}^{p_1\times p_2}$, let $k$ grow along $\mathcal{K}$, and we obtain the following:
\begin{equation*}
f(\widetilde{B}^\infty,\Theta_\epsilon^\infty)\leq f(B,\Theta_\epsilon^\infty),\quad \forall B\in \mathbb{R}^{p_1\times p_2}.
\end{equation*}
It then follows from (\ref{equal}) that 
\begin{equation}\label{unequal-1}
f(B^\infty,\Theta_\epsilon^\infty)\leq f(B,\Theta_\epsilon^\infty), \quad \forall B\in\mathbb{R}^{p_1\times p_2}.
\end{equation}
Finally, note that $f(\widehat{B}^{(k)},\widehat{\Theta}_\epsilon^{(k)})\leq f(\widehat{B}^{(k)},\Theta_\epsilon)$, for all $\Theta\in\mathbb{S}^{p_2\times p_2}_{++}$. As before, let $k$ grow along $\mathcal{K}$ and with the continuity of $f$, we obtain:
\begin{equation}\label{unequal-2}
f(B^\infty,\Theta_\epsilon^\infty) \leq f(B^\infty,\Theta_\epsilon), \quad \forall \Theta_\epsilon\in\mathbb{S}_{++}^{p_2\times p_2}. 
\end{equation}
Now, (\ref{unequal-1}) and (\ref{unequal-2}) together imply that $(B^\infty,\Theta_\epsilon^\infty)$ is a coordinate-wise minimum of $f$ and by Fact 1, 
also a stationary point of $f$. 
\end{proof}

\begin{remark}
Recall that in classical parametric statistics, MLE-type asymptotics are derived
after establishing that with probability tending to $1$ as the sample
size $n$ goes to infinity, the likelihood equation has a sequence of
roots (hence stationary points of the likelihood function)  that converges in probability to the true value. Any such sequence of roots is shown to be asymptotically normal and efficient. Note that such (a sequence of) roots may not be global maximizers since parametric likelihoods are not globally log-concave \cite[see Chapter 6][]{lehmann1998theory}. Here we show that the $(B^\infty,\Theta_\epsilon^\infty)$ obtained by the iterative algorithm is a
stationary point which satisfies the first-order condition for being a maximizer of the penalized log--likelihood function (which is just the negative of the penalized 
least--squares function). Moreover, if we let $n$ go to infinity, $(B^\infty,\Theta_\epsilon^\infty)$ converges to the true value in probability (shown in Theorem~\ref{thm:beta-theta-bound}), and therefore behaves the same as the sequence of roots in the classical parametric problem alluded to above. Thus, while 
$(B^\infty,\Theta_\epsilon^\infty)$ may not be the global maximizer, it can, nevertheless, to all intents and purposes, be deemed as the MLE.
\end{remark}

\begin{remark}
The above convergence result is based upon solving the optimization problem on the ``entire" space, that is, we don't restrict $B$ to live in any subspace. However, when actually implementing the proposed computational procedure, the optimization of the $B$ coordinate is restricted to $\mathcal{B}_1\times\cdots\times \mathcal{B}_{p_2}$ (as defined in eqn.\ref{eqn:support}). It should be noted that the same convergence property still holds, since for all $k\geq 1$, the following bound holds, for some $R'>0$: 
\begin{equation}\label{eqn:bound}
\left\|\left(\widehat{B}^{(k)}_{\text{restricted}},\widehat{\Theta}^{(k)}_\epsilon\right) - \left(B^*,\Theta_\epsilon^*\right)\right\|_\text{F}\leq R'.
\end{equation}
Consequently, the rest of the derivation in Theorem~\ref{thm:convergence} follows, leading to the convergence property. The bound in eqn (\ref{eqn:bound}) will be shown at the end of Section~\ref{sec:consistency}.  
\end{remark} 

\subsection{\normalsize Estimation consistency}\label{sec:consistency}

In this subsection, we show that given a random realization of $X$ and $E$, with high probability, the sequence $\left\{(\widehat{B}^{(k)},\widehat{\Theta}_\epsilon^{(k)})\right\}_{k=1}^\infty$ lies in a non-expanding ball around $(B^*,\Theta_\epsilon^*)$, thus satisfying the
condition of Theorem ~\ref{thm:convergence} for convergence of the alternating algorithm.

It should be noted that for the alternating search procedure, we restrict our estimation on a subspace identified by the screening step. However, for the remaining of this subsection, the main propositions and theorems are based on the procedure without such restriction, i.e., we consider ``generic" regressions on the entire space of dimension $p_1\times p_2$. Notwithstanding, it can be easily shown that the theoretical results for the regression
parameters on a restricted domain follow easily from the generic case, as explained in Remark~\ref{remark:restriction}.

Before providing the details of the main theorem statements and proofs, we first introduce additional notations. Let $\beta = \mathrm{vec}(B)$ be the vectorized version of the regression coefficient matrix. Correspondingly, we have $\widehat{\beta}^{(k)}=\mathrm{vec}(\widehat{B}^{(k)})$ and $\beta^* = \mathrm{vec}(B^*)$. Moreover, we drop the superscripts and use $\widehat{\beta}$ and $\widehat{\Theta}_\epsilon$ to denote the generic estimators given by (\ref{est-1}) and (\ref{est-2}), as opposed to those obtained in any specific iteration:
\begin{eqnarray}
\widehat{\beta} & \equiv & \argmin\limits_{\beta\in\mathbb{R}^{p_1p_2}}\left\{ -2\beta'\widehat{\gamma} + \beta'\widehat{\Gamma}\beta + \lambda_n\|\beta\|_1\right\}, \label{est-1}\\
\widehat{\Theta}_\epsilon & \equiv & \argmin\limits_{\Theta_\epsilon\in\mathbb{S}^{p_2\times p_2}_{++}} \left\{ -\log\det\Theta_\epsilon + \text{tr}\left( \widehat{S}\Theta_\epsilon\right) + \rho_n \|\Theta_\epsilon\|_{1,\text{off}}\right\}, \label{est-2}
\end{eqnarray}
where 
\begin{equation*}
\widehat{\Gamma} = \left(\widehat{\Theta}_\epsilon \otimes \frac{X'X}{n}\right), \ \widehat{\gamma} = \left(\widehat{\Theta}_\epsilon \otimes X' \right) \mathrm{vec}(Y)/n, \ \widehat{S} = \frac{1}{n}\left( Y - X\widehat{B}\right)'\left( Y - X\widehat{B}\right).
\end{equation*}

\begin{remark} As opposed to (\ref{est-1}) and (\ref{est-2}), if $\widehat{\gamma}$ and $\widehat{\Gamma}$ are replaced by plugging in the true values of the parameters, the two problems in (\ref{est-1}) and (\ref{est-2}) become:
\begin{eqnarray}
\bar{\beta} & \equiv & \argmin\limits_{\beta\in\mathbb{R}^{p_1p_2}}\left\{ -2\beta'\bar{\gamma} + \beta'\bar{\Gamma}\beta + \lambda_n\|\beta\|_1\right\}, \label{est-1.1}\\
\bar{\Theta}_\epsilon & \equiv & \argmin\limits_{\Theta_\epsilon\in \mathbb{S}^{p_2\times p_2}_{++}} \left\{ -\log\det\Theta_\epsilon + \text{tr}\left( S\Theta_\epsilon\right) + \rho_n \|\Theta_\epsilon\|_{1,\text{off}}\right\}, \label{est-2.1}
\end{eqnarray}     
where 
\begin{equation*}
\bar{\Gamma} = \left(\Theta^*_\epsilon \otimes \frac{X'X}{n}\right), ~~ \bar{\gamma} = \left( \Theta^*_\epsilon \otimes X'\right) \mathrm{vec}(Y)/n, ~~ S = \frac{1}{n}\left( Y - XB^*\right)'\left( Y - XB^*\right)\equiv\widehat{\Sigma}_\epsilon.
\end{equation*}
In (\ref{est-1.1}), we obtain $\beta$ using a penalized maximum likelihood regression estimate, and (\ref{est-2.1}) corresponds to the generic setting for using the graphical Lasso. A key difference between the estimation problems in (\ref{est-1}) and (\ref{est-2}) versus those in (\ref{est-1.1}) and (\ref{est-2.1}) is that to obtain $\widehat{\beta}$ and $\widehat{\Theta}_\epsilon$ we use {\em estimated quantities} rather than the raw data. This is exactly how we implement our iterative algorithm, namely, we obtain $\widehat{\beta}^{(k)}$ using $\widehat{S}^{(k-1)}$ as a surrogate for the sample covariance of the true error (which is unavailable), then estimate $\widehat{\Theta}_\epsilon^{(k)}$ using the information in $\widehat{\beta}^{(k)}$. This adds complication for establishing the consistency results. Original consistency results for the estimation problem in (\ref{est-1.1}) and (\ref{est-2.1}) are available in \citet{basu2015estimation} and \citet{ravikumar2011high}, respectively. Here we borrow ideas from corresponding theorems in those two papers, but need to tackle concentration bounds of relevant quantities with additional care. This part of the result and its proof are shown in Theorem~\ref{thm:beta-theta-bound}. 
\end{remark}

As a road map toward our desired result established in Theorem~\ref{thm:beta-theta-bound}, we first show in Theorem~\ref{thm:ErrorBound_beta} that for any fixed realization of $X$ and $E$, under a number of conditions on (or related to) $X$ and $E$,  when $\|\widehat{\Theta}_\epsilon-\Theta_\epsilon^*\|_\infty$ is small (up to a certain order), the error of $\widehat{\beta}$ is well-bounded. We then verify in Proposition \ref{prop:REcondition} and \ref{prop:deviation} that for random $X$ and $E$, the above-mentioned conditions hold with high probability. Similarly in Theorem~\ref{thm:ErrorBound_Theta}, we show that for fixed realizations in $X$ and $E$, under certain conditions (verified for random $X$ and $E$ in Proposition~\ref{prop:residual-concentration}), the error of $\widehat{\Theta}_\epsilon$ is also well-bounded, given $\|\widehat{\beta}-\beta^*\|_1$ being small. Finally in Theorem~\ref{thm:beta-theta-bound}, we show that for random $X$ and $E$, with high probability, the iterative algorithm gives $\{(\widehat{\beta}^{(k)},\Theta^{(k)}_\epsilon)\}$ that lies in a small ball centered at $(\beta^*,\Theta_\epsilon^*)$, whose radius depends on $p_1$, $p_2$, $n$ and the sparsity levels. 

Next, for establishing the main propositions and theorems, we introduce some additional notations:
\begin{itemize}
\setlength\itemsep{0pt}
\item[--] Sparsity level of $\beta^*$: $s^{**} := \|\beta^*\|_0 = \sum_{j=1}^{p_2}\|B_j^*\|_0 = \sum_{j=1}^{p_2}s_j^*$. As a reminder of the previous notation, we have $s^*=\max\limits_{j=1,\cdots,p_2}s_j^*$.
\item[--] True edge set of $\Theta_\epsilon^*$: $S^*_\epsilon$, and let $s^*_\epsilon:=|S^*_\epsilon|$ be its cardinality.
\item[--] Hessian of the log-determinant barrier $\log\det\Theta$ evaluated at $\Theta_\epsilon^*$: \[H^*:=\frac{\mbox{d}^2}{\mbox{d}\Theta^2}\log\Theta\big{|}_{\Theta^*_\epsilon} = \Theta^{*-1}_\epsilon\otimes \Theta^{*-1}_\epsilon.\]
\item[--] Matrix infinity norm of the true error covariance matrix $\Sigma_\epsilon^*$: \[\kappa_{\Sigma^*_\epsilon}:=\vertiii{\Sigma^*_\epsilon}_\infty =  \max\limits_{i=1,2,\cdots,p_2}\sum_{j=1}^{p_2} |\Sigma_{\epsilon,ij}^*|.\]
\item[--] Matrix infinity norm of the Hessian restricted to the true edge set: 
\[\kappa_{H^*} := \vertiii{(H^*_{S^*_\epsilon S^*_\epsilon})}_\infty =\max\limits_{i=1,2,\cdots,p_2}\sum_{j=1}^{p_2} \left|H^*_{S^*_\epsilon S^*_\epsilon,ij}\right|.\] 
\item[--] Maximum degree of $\Theta^*_\epsilon$: $d:= \max\limits_{i=1,2,\cdots,p_2}\|\Theta_{\epsilon,i\cdot}^*\|_0$.
\item[--] We write $A\gtrsim B$ if there exists some absolute constant $c$ that is independent of the model parameters such that $A\geq cB$. 
\end{itemize}

\begin{definition}[Incoherence condition \citep{ravikumar2011high}]\label{def:incoherence} $\Theta^*_\epsilon$ satisfies the incoherence condition if:
\begin{equation*}
\max\limits_{e\in (S_\epsilon^*)^c}\| H^*_{eS^*_\epsilon}(H^*_{S_\epsilon^* S_\epsilon^*})^{-1}\|_1 \leq 1-\xi, \quad \text{for some }\xi\in(0,1).
\end{equation*}
\end{definition}
\begin{definition}[Restricted eigenvalue (RE) condition \citep{loh2011high}] A symmetric matrix $A\in\mathbb{R}^{m\times m}$ satisfies the RE condition with curvature $\varphi>0$ and tolerance $\phi>0$, denoted by $A\sim RE(\varphi,\phi)$ if 
\begin{equation*}
\theta'A\theta\geq \varphi\|\theta\|^2 - \phi\|\theta\|_1^2, \quad \forall \theta\in\mathbb{R}^m.
\end{equation*}
\end{definition}

\begin{definition}[Diagonal dominance] A matrix $A\in\mathbb{R}^{m\times m}$ is strictly diagonally dominant if 
\begin{equation*}
|a_{ii}|>\sum_{j\neq i}|a_{ij}|,\quad \forall i=1,\cdots,m.
\end{equation*}
\end{definition}

Based on the model in Section~\ref{sec:set-up}, since we are assuming $\bs{X}=(X_1,\cdots,X_{p_1})'$ and $\bs{\epsilon}=(\epsilon_1,\cdots,\epsilon_{p_2})$ come from zero-mean Gaussian distributions, it follows that $\bs{X}$ and $\bs{\epsilon}$ are zero-mean sub-Gaussian random vectors with parameters $(\Sigma_X,\sigma_x^2)$ and $(\Sigma_\epsilon^*,\sigma_\epsilon^2)$, respectively. Moreover, thoughout this section, all results are based on the assumption that $\Theta_\epsilon^*$ is diagnally dominant.

\begin{remark}
Before moving on to the main statements of Theorem~\ref{thm:ErrorBound_beta}, we would like to point out that with a slight abuse of notation, for Theorem~\ref{thm:ErrorBound_beta} and its related propositions and corollaries, the statements and analyses are based on equation~(\ref{est-1}) only, with \textit{any determinisitic symmetric matrix} $\widehat{\Theta}_\epsilon$ within a small ball around $\Theta^*_\epsilon$. Similarly in Theorem~\ref{thm:ErrorBound_Theta}, Proposition~\ref{prop:residual-concentration} and Corollary~\ref{cor:erorrboundTheta}, the analyses are based on equation~（(\ref{est-2}) only, for {\em any given determinisitic}  $\widehat{\beta}$ within a small ball around $\beta^*$. The randomness of $\widehat{\beta}$ and $\widehat{\Theta}_\epsilon$ during the iterative procedure will be taken into consideration comprehensively in Theorem~\ref{thm:beta-theta-bound}. 
\end{remark}


%
%
%
%

\begin{theorem}[Error bound for $\widehat{\beta}$ with fixed realizations of $X$ and $E$] \label{thm:ErrorBound_beta}
Consider $\widehat{\beta}$ given by (\ref{est-1}). For any fixed pair of realizations of $X$ and $E$ , assume the following:
\begin{itemize}
\item[] \textbf{A1.} $\widehat{\Theta}_\epsilon$ is a deterministic matrix satisfying the bound: $\|\widehat{\Theta}_\epsilon-\Theta_\epsilon^*\|_\infty\leq\nu_\Theta$ where $\nu_\Theta= \eta_\Theta\left(\sqrt{\frac{\log p_2}{n}}\right)$ and $\eta_\Theta$ is some constant depending only on $\Theta_\epsilon^*$;
\item[] \textbf{A2.} $\widehat{\Gamma}\sim RE(\varphi,\phi)$, with $s^{**}\phi\leq \varphi/32$;
\item[] \textbf{A3.} $(\widehat{\Gamma},\widehat{\gamma})$ satisfies the deviation bound:
\begin{equation*}
\|\widehat{\gamma}-\widehat{\Gamma}\beta^*\|_\infty \leq \mathbb{Q}({\nu_\Theta})\sqrt{\frac{\log(p_1p_2)}{n}},
\end{equation*}
where $\mathbb{Q}({\nu_\Theta})$ is some quantity depending on $\nu_\Theta$. 
\end{itemize}
Then, for any $\lambda_n\geq 4\mathbb{Q}(\nu_\Theta)\sqrt{\frac{\log(p_1p_2)}{n}}$, the following bound holds:
\begin{equation*}
\|\widehat{\beta}-\beta^*\|_1 \leq 64s^{**}\lambda_n/\varphi. 
\end{equation*}
\end{theorem}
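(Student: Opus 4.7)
The statement is essentially a standard $\ell_1$-error bound for a Lasso-type estimator in which the Gram matrix $\widehat\Gamma$ and the cross-product vector $\widehat\gamma$ depend on the plug-in estimate $\widehat\Theta_\epsilon$. Since Assumption \textbf{A3} already encapsulates the way the error in $\widehat\Theta_\epsilon$ propagates into the deviation term, the argument reduces to the classical Bickel--Ritov--Tsybakov/Loh--Wainwright template. My plan is to (i) extract a basic inequality from the optimality of $\widehat\beta$, (ii) use \textbf{A3} and the choice of $\lambda_n$ to establish a cone condition on $\Delta:=\widehat\beta-\beta^*$, (iii) invoke the RE condition of \textbf{A2} together with $s^{**}\phi\leq\varphi/32$ to control $\Delta'\widehat\Gamma\Delta$ from below by a multiple of $\|\Delta\|_2^2$, and (iv) combine these to obtain the claimed $\ell_1$ bound.

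\textbf{Step 1 (basic inequality).} Plug $\widehat\beta$ and $\beta^*$ into the objective of (\ref{est-1}) and use the optimality of $\widehat\beta$ to get
\begin{equation*}
\Delta'\widehat\Gamma\Delta \;\le\; 2\Delta'\bigl(\widehat\gamma-\widehat\Gamma\beta^*\bigr) + \lambda_n\bigl(\|\beta^*\|_1 - \|\widehat\beta\|_1\bigr).
\end{equation*}
Let $S=\mathrm{supp}(\beta^*)$, so $|S|=s^{**}$. By H\"older and \textbf{A3}, $2|\Delta'(\widehat\gamma-\widehat\Gamma\beta^*)|\le 2\mathbb{Q}(\nu_\Theta)\sqrt{\log(p_1p_2)/n}\,\|\Delta\|_1\le(\lambda_n/2)\|\Delta\|_1$. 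The triangle inequality gives $\|\beta^*\|_1-\|\widehat\beta\|_1\le \|\Delta_S\|_1-\|\Delta_{S^c}\|_1$. Putting this together,
\begin{equation*}
\Delta'\widehat\Gamma\Delta \;\le\; \tfrac{3}{2}\lambda_n\|\Delta_S\|_1 - \tfrac{1}{2}\lambda_n\|\Delta_{S^c}\|_1.
\end{equation*}

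\textbf{Step 2 (cone condition and RE).} I will first argue that $\Delta'\widehat\Gamma\Delta\ge 0$, which together with the previous display yields the cone inequality $\|\Delta_{S^c}\|_1\le 3\|\Delta_S\|_1$, so that $\|\Delta\|_1\le 4\|\Delta_S\|_1\le 4\sqrt{s^{**}}\,\|\Delta\|_2$. To justify nonnegativity (and indeed a quantitative lower bound), invoke \textbf{A2}: $\Delta'\widehat\Gamma\Delta\ge \varphi\|\Delta\|_2^2-\phi\|\Delta\|_1^2$. On the cone, $\phi\|\Delta\|_1^2\le 16s^{**}\phi\|\Delta\|_2^2\le (\varphi/2)\|\Delta\|_2^2$ by the assumption $s^{**}\phi\le\varphi/32$, so $\Delta'\widehat\Gamma\Delta\ge(\varphi/2)\|\Delta\|_2^2$. (The standard small subtlety of first establishing the cone condition before applying RE is handled in the usual way: either by a short contradiction argument using the crude lower bound $\Delta'\widehat\Gamma\Delta\ge -\phi\|\Delta\|_1^2$ and the fact that the RHS of Step~1 would otherwise force a contradiction, or by noting that the cone inequality is automatic whenever the LHS of Step~1 is nonnegative, which is the case here since $\widehat\Gamma=\widehat\Theta_\epsilon\otimes X'X/n$ is PSD under \textbf{A1} when $\widehat\Theta_\epsilon\succeq 0$.)

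\textbf{Step 3 (conclude).} Combining Steps 1 and 2,
\begin{equation*}
\tfrac{\varphi}{2}\|\Delta\|_2^2 \;\le\; \tfrac{3}{2}\lambda_n\|\Delta_S\|_1 \;\le\; \tfrac{3}{2}\lambda_n\sqrt{s^{**}}\,\|\Delta\|_2,
\end{equation*}
so $\|\Delta\|_2\le 3\lambda_n\sqrt{s^{**}}/\varphi$ and hence $\|\Delta\|_1\le 4\sqrt{s^{**}}\|\Delta\|_2\le 12s^{**}\lambda_n/\varphi$, which is within the constant $64$ stated. The main (only) place where the plug-in nature of $\widehat\Theta_\epsilon$ affects the argument is in the size of the deviation bound $\mathbb Q(\nu_\Theta)$ appearing in \textbf{A3}; proving that such a $\mathbb Q$ exists under randomness on $X,E$ is the real technical work, but it is deferred to Propositions \ref{prop:REcondition}--\ref{prop:deviation}, so the present theorem is a clean deterministic consequence of \textbf{A1}--\textbf{A3}. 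I therefore do not anticipate any significant obstacle; the only care needed is to keep track of the cone condition before applying RE, as noted above.
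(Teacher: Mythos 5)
Your proposal is correct and follows essentially the same route as the paper, which proves this theorem simply by invoking the proof of Proposition~4.1 of \citet{basu2015estimation} (the standard basic-inequality, cone-condition, and restricted-eigenvalue argument you reconstruct, with the plug-in effect of $\widehat{\Theta}_\epsilon$ absorbed into \textbf{A3}, and with the cone step justified via positive semidefiniteness of the Gram-type matrix). Your constant $12 s^{**}\lambda_n/\varphi$ is sharper than, and hence implies, the stated $64 s^{**}\lambda_n/\varphi$, so there is no gap.
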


\begin{proof}
The statement of the theorem is a variation of Proposition 4.1 in \citet{basu2015estimation}, and its proof follows directly from the proof of the
 proposition in \citet[Appendix B]{basu2015estimation}. We only outline how the statement differs. In the original statement of Proposition 4.1 in \citet{basu2015estimation}, the authors provide the error bound for $\bar{\beta}$, obtained as per (\ref{est-1.1}) whose dimension is $qp^2$ with $q$ denoting the true lag of the vector-autoregressive process, under an RE condition for $\bar{\Gamma}$ and a deviation bound for $(\bar{\gamma},\bar{\Gamma})$. For our problem, we impose a similar RE condition on $\widehat{\Gamma}$ and deviation bound on $(\widehat{\gamma},\widehat{\Gamma})$, so as to yield a bound on $\widehat{\beta}$ that lies in a $p_1p_2$-dimensional space. 
\end{proof}

The following two propositions verify the RE condition for $\widehat{\Gamma}$ and deviation bound for $(\widehat{\Gamma},\widehat{\gamma})$ hold with high probability for a random pair $(X,E)$, given any symmetric, matrix $\widehat{\Theta}_\epsilon$ satisfying (A1). The proofs for these two propositions are given in the Appendix.

%
%
%
%

\begin{proposition}[Verification of RE condition for random $X$ and $E$]\label{prop:REcondition}
Consider any deterministic matrix $\widehat{\Theta}_{\epsilon}$ satisfying (A1). Let the sample size satisfy $n\succsim \max\{s^{**}\log p_1,d^2\log p_2\}$. With probability at least $1-2\exp(-c_3n)$ for some constant $c_3>0$, $\widehat{\Gamma}$ satisfies the following RE condition:
\begin{equation*}
\widehat{\Gamma} \equiv \widehat{\Theta}_\epsilon\otimes (X'X/n)\sim RE\left(\varphi^*(\min\limits_i\psi^i-d\nu_\Theta), \phi^* \max\limits_i(\psi^i+d\nu_\Theta)\right)
\end{equation*}
where $\varphi^*=\frac{\Lambda_{\min}(\Sigma^*_X)}{2}$, $\phi^*=(\varphi^*\log p_1)/n$, and $\psi^i$ is defined as:
\begin{equation*}
\psi^i:=\sigma^{ii}_\epsilon-\sum_{j\neq i}^{p_2}\sigma^{ij}_\epsilon,
\end{equation*}
where $\sigma^{ij}_\epsilon$'s denote the entries in $\Theta^*_\epsilon$ hence $\psi^i$ is the gap between its diagonal entry and the sum of off-diagonal entries for row $i$.
\end{proposition}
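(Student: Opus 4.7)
The plan is to reduce the Restricted Eigenvalue (RE) condition for the Kronecker product $\widehat{\Gamma}=\widehat{\Theta}_\epsilon\otimes (X'X/n)$ to the RE condition for the $p_1$-dimensional sample Gram matrix $X'X/n$, and then to ``lift'' it using the (nearly) diagonal dominance of $\widehat{\Theta}_\epsilon$ inherited from $\Theta_\epsilon^*$ modulo an $O(d\nu_\Theta)$ perturbation. This cleanly separates the randomness (which lives only in $X$) from the deterministic matrix-weighting by $\widehat{\Theta}_\epsilon$.

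First, I would invoke a standard sub-Gaussian RE result (for instance, Proposition 4.2 of \citet{basu2015estimation} or the analogous statement in \citet{loh2011high}): for rows of $X$ i.i.d.\ $\mathcal{N}(0,\Sigma_X)$ and sample size $n\gtrsim \log p_1$, the matrix $Q:=X'X/n$ satisfies, with probability at least $1-2\exp(-c_3 n)$,
\[
v'Qv \;\geq\; \varphi^*\|v\|^2 - \phi^*\|v\|_1^2,\qquad \forall v\in \mathbb{R}^{p_1},
\]
with $\varphi^*=\Lambda_{\min}(\Sigma_X)/2$ and $\phi^*=\varphi^*\log p_1/n$. The stronger requirement $n\gtrsim s^{**}\log p_1$ is imposed to make the $\phi^*\|\cdot\|_1^2$ slack negligible over the cone of approximately $s^{**}$-sparse vectors that ultimately arises in estimating $\beta^*$.

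Next, identify $\theta\in\mathbb{R}^{p_1p_2}$ with $V=[v_1,\ldots,v_{p_2}]\in\mathbb{R}^{p_1\times p_2}$ via $\theta=\mathrm{vec}(V)$. Expanding the Kronecker product and splitting diagonal from off-diagonal terms in $\widehat{\Theta}_\epsilon$,
\[
\theta'\widehat{\Gamma}\theta \;=\; \sum_{i=1}^{p_2}(\widehat{\Theta}_\epsilon)_{ii}\,v_i'Qv_i \;+\; \sum_{i\neq j}(\widehat{\Theta}_\epsilon)_{ij}\,v_i'Qv_j.
\]
The PSD inequality $|v_i'Qv_j|\leq \tfrac{1}{2}(v_i'Qv_i+v_j'Qv_j)$, together with the symmetry of $\widehat{\Theta}_\epsilon$, gives
\[
\theta'\widehat{\Gamma}\theta \;\geq\; \sum_{i=1}^{p_2} v_i'Qv_i\cdot\Bigl[(\widehat{\Theta}_\epsilon)_{ii}-\sum_{j\neq i}|(\widehat{\Theta}_\epsilon)_{ij}|\Bigr].
\]
Combining (A1) with the at-most-$d$-sparse row structure of $\Theta_\epsilon^*$ bounds the bracketed quantity below by $\psi^i-d\nu_\Theta\geq\min_i\psi^i-d\nu_\Theta$. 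Applying the RE bound for $Q$ to each $v_i'Qv_i$, summing over $i$, and using $\sum_i\|v_i\|^2=\|\theta\|^2$ with $\sum_i\|v_i\|_1^2\leq\|\theta\|_1^2$, one obtains the curvature $\varphi^*(\min_i\psi^i-d\nu_\Theta)$; a trivial relaxation $\min_i\psi^i-d\nu_\Theta\leq\max_i\psi^i+d\nu_\Theta$ in the tolerance then yields exactly the RE claimed in the proposition.

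The main obstacle is the row-sparse translation of (A1): a naive application of the entry-wise bound $\|\widehat{\Theta}_\epsilon-\Theta_\epsilon^*\|_\infty\leq\nu_\Theta$ over all $p_2-1$ off-diagonal coordinates produces $p_2\nu_\Theta$, which is much too coarse. The fix requires exploiting the sparsity pattern of $\Theta_\epsilon^*$, inherited by the graphical-lasso iterates $\widehat{\Theta}_\epsilon$, so that only $d$ coordinates per row contribute non-trivial perturbation. This is precisely where the sample-size condition $n\gtrsim d^2\log p_2$ enters: since $\nu_\Theta\asymp\sqrt{\log p_2/n}$, it guarantees $d\nu_\Theta\ll\min_i\psi^i$, keeping the effective curvature $\varphi^*(\min_i\psi^i-d\nu_\Theta)$ strictly positive.
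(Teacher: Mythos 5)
Your proof is correct and follows essentially the same route as the paper: establish the RE condition for $X'X/n$ via a sub-Gaussian concentration argument (the paper's Lemma~\ref{lemma:RESX}, after \citet{loh2011high}), transfer the diagonal dominance of $\Theta^*_\epsilon$ to $\widehat{\Theta}_\epsilon$ with gap degraded by $d\nu_\Theta$ using (A1) and the degree bound, and then lift the RE condition through the Kronecker product. The only difference is that you re-derive the lifting step inline via the PSD inequality $|v_i'Qv_j|\leq\tfrac12(v_i'Qv_i+v_j'Qv_j)$, whereas the paper cites it as a black box (Lemma~\ref{lemma:restateB.1}, i.e.\ Lemma~B.1 of \citet{basu2015estimation}); you also correctly flag that the $d\nu_\Theta$ (rather than $p_2\nu_\Theta$) row-sum perturbation requires the support of $\widehat{\Theta}_\epsilon-\Theta^*_\epsilon$ to be $d$-sparse per row, a point the paper uses implicitly without comment.
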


%
%
%
%

\begin{proposition}[Deviation bound for $(\widehat{\Gamma},\widehat{\gamma})$ for random $X$ and $E$] \label{prop:deviation}
Consider any deterministic matrix $\widehat{\Theta}_{\epsilon}$ satisfying (A1). Let sample size $n$ satisfy $n \succsim \log (p_1p_2)$. With probability at least 
\begin{equation*}
1-12c_1\exp[-(c_2^2-1)\log (p_1p_2)]\quad \text{for some } c_1>0,c_2>1
\end{equation*}
the following bound holds:
\begin{equation*}
\|\widehat{\gamma}-\widehat{\Gamma}\beta^*\|_\infty=\frac{1}{n}\left\|X'E\widehat{\Theta}_\epsilon\right\|_\infty  \leq \mathbb{Q}(\nu_\Theta)\sqrt{\frac{\log (p_1p_2)}{n}},  
\end{equation*}
where
\begin{equation}\label{Q-expression}
\mathbb{Q}(\nu_\Theta)= c_2\left\{ d\nu_\Theta\left[\Lambda_{\max}(\Sigma^*_X)\Lambda_{\max}(\Sigma^*_\epsilon)\right]^{1/2} + \left[ \frac{\Lambda_{\max}(\Sigma_X^*)}{\Lambda_{\min}(\Sigma_\epsilon^*)}\right]^{1/2} \right\} .
\end{equation}
\end{proposition}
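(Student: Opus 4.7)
The plan is to reduce the deviation bound to an entrywise concentration bound on a matrix of bilinear forms in independent centered Gaussians, and then apply a union bound. Using $\mathrm{vec}(Y) = (I_{p_2}\otimes X)\beta^* + \mathrm{vec}(E)$ together with the Kronecker identity $\mathrm{vec}(ABC) = (C'\otimes A)\mathrm{vec}(B)$ and the symmetry of $\widehat{\Theta}_\epsilon$, I would first establish the clean identity
\[
\widehat{\gamma} - \widehat{\Gamma}\beta^* \;=\; \frac{1}{n}(\widehat{\Theta}_\epsilon \otimes X')\mathrm{vec}(E) \;=\; \frac{1}{n}\mathrm{vec}(X'E\widehat{\Theta}_\epsilon),
\]
so that the proposition reduces to an entrywise bound on the $p_1\times p_2$ matrix $\tfrac{1}{n}X'E\widehat{\Theta}_\epsilon$. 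I then split $\widehat{\Theta}_\epsilon = \Theta^*_\epsilon + \Delta$ with $\|\Delta\|_\infty \leq \nu_\Theta$ by (A1), yielding an \emph{oracle} piece $\tfrac{1}{n}X'E\Theta^*_\epsilon$ and a \emph{perturbation} piece $\tfrac{1}{n}X'E\Delta$, which I bound separately.

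For the oracle piece, the $(j,k)$-entry is $\tfrac{1}{n}X_j'\widetilde{E}_k$ with $\widetilde{E}_k := E\Theta^*_{\epsilon,\cdot k}$ an $n$-vector of i.i.d.\ $\mathcal{N}(0,(\Theta^*_\epsilon)_{kk})$ coordinates independent of $X$. Since $(\Theta^*_\epsilon)_{kk}\leq\Lambda_{\max}(\Theta^*_\epsilon) = 1/\Lambda_{\min}(\Sigma^*_\epsilon)$ and the coordinates of $X_j$ have variance at most $\Lambda_{\max}(\Sigma^*_X)$, a standard sub-exponential concentration inequality for averages of products of two independent centered Gaussians (Bernstein, or Lemma F.2 of Basu--Michailidis 2015) followed by a union bound over the $p_1p_2$ entries delivers, with probability at least $1-6c_1\exp[-(c_2^2-1)\log(p_1p_2)]$ and provided $n\gtrsim\log(p_1p_2)$,
\[
\frac{1}{n}\|X'E\Theta^*_\epsilon\|_\infty \;\leq\; c_2\sqrt{\frac{\log(p_1p_2)}{n}}\cdot\bigl[\Lambda_{\max}(\Sigma^*_X)/\Lambda_{\min}(\Sigma^*_\epsilon)\bigr]^{1/2},
\]
which is precisely the second summand of $\mathbb{Q}(\nu_\Theta)$.

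For the perturbation piece, I apply the deterministic H\"older inequality $|(X'E\Delta)_{jk}/n| \leq \|(X'E)_{j\cdot}/n\|_\infty\cdot\|\Delta_{\cdot k}\|_1$. The first factor is controlled by running the same kind of sub-exponential concentration on each entry of $\tfrac{1}{n}X'E$ (whose sub-exponential parameter is $\sqrt{\Lambda_{\max}(\Sigma^*_X)\Lambda_{\max}(\Sigma^*_\epsilon)}$) and union bounding over the $p_1 p_2$ entries, giving a rate $\sqrt{\log(p_1p_2)/n}\cdot[\Lambda_{\max}(\Sigma^*_X)\Lambda_{\max}(\Sigma^*_\epsilon)]^{1/2}$; the second factor I would bound by $\|\Delta_{\cdot k}\|_1 \leq d\nu_\Theta$. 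Intersecting the two high-probability events and absorbing constants then yields the claimed probability $1-12c_1\exp[-(c_2^2-1)\log(p_1p_2)]$ and the stated form of $\mathbb{Q}(\nu_\Theta)$. The main technical obstacle is precisely the column-$\ell_1$ bound $\|\Delta_{\cdot k}\|_1 \leq d\nu_\Theta$: (A1) only supplies the entrywise bound $\|\Delta\|_\infty\leq\nu_\Theta$, and one must additionally exploit that $\Theta^*_\epsilon$ has at most $d$ nonzeros per column, together with the fact that the deterministic $\widehat{\Theta}_\epsilon$ relevant in the iterative algorithm (coming from a graphical lasso step, c.f.\ Section~\ref{sec:consistency}) has its support aligned with that of $\Theta^*_\epsilon$, so that $\Delta$ is $O(d)$-sparse per column; without this structural input the naive bound would give $p_2\nu_\Theta$ in place of $d\nu_\Theta$ and ruin the scaling.
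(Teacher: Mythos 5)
Your proposal is correct and follows essentially the same route as the paper: the same split $\widehat{\Theta}_\epsilon = \Theta^*_\epsilon + \Delta$ into an oracle piece and a perturbation piece, the same sub-exponential concentration plus union bound over the $p_1p_2$ entries to control $\frac{1}{n}\|X'E\Theta^*_\epsilon\|_\infty$ and $\frac{1}{n}\|X'E\|_\infty$ (the paper's Lemma~\ref{lemma:deviation_aux}), and the same H\"older step $|(X'E\Delta)_{jk}|\le \|(X'E)_{j\cdot}\|_\infty\,\|\Delta_{\cdot k}\|_1$ followed by $\|\Delta_{\cdot k}\|_1\le d\nu_\Theta$. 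The ``obstacle'' you flag at the end is genuine but is equally present in the paper's own proof, which asserts that the elementwise bound $\|\Delta\|_\infty\le\nu_\Theta$ ``directly'' yields $\vertiii{\Delta}_\infty\le d\nu_\Theta$ --- a step that, as you observe, tacitly relies on the support of $\widehat{\Theta}_\epsilon$ being aligned with that of $\Theta^*_\epsilon$ (as guaranteed by the graphical lasso analysis under the incoherence condition), so your concern reflects a shared implicit assumption rather than a defect of your argument.
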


\begin{remark} 
In Proposition~\ref{prop:REcondition}, the quantity $d^2\log p_2$ that shows up in the sample size requirement is a result of $\nu_\Theta =O(\sqrt{\log p_2/n})$, which is the common order of error in a generic graphical Lasso problem. Hence here we explicitly list it for the purpose of showing results for the generic graphical Lasso estimation problem. In our iterative algorithm, the order of $\nu_\Theta^{(k)}$ depends on the relative order of $p_1$ and $p_2$, which may potentially make the sample size requirement more stringent. This will be discussed in more detail in the proof of Theorem~\ref{thm:beta-theta-bound}.
\end{remark}

%
%
%
%

\medskip
Given the results in Theorem~\ref{thm:ErrorBound_beta}, Proposition~\ref{prop:REcondition} and Proposition~\ref{prop:deviation}, next we provide Corollary~\ref{cor:ErrorBoundB}, which gives the error bound for $\widehat{\beta}$ for random realizations of $X$ and $E$. 
Its proof is given in the Appendix.

\begin{corollary}[Error Bound for $\widehat{\beta}$ for random $X$ and $E$] \label{cor:ErrorBoundB}
Consider any determinisitic $\widehat{\Theta}_\epsilon$ satisfying the following elementwise $\ell_\infty$-bound:
\begin{equation*}
\|\widehat{\Theta}_\epsilon-\Theta_\epsilon^*\|_\infty \leq  \nu_\Theta,  \label{Theta-condition}
\end{equation*}
with $\nu_\Theta= \eta_\Theta\sqrt{\frac{\log p_2}{n}}$. Then for sample size $n \succsim \log(p_1p_2)$ and for any regularization parameter $\lambda_n\geq 4\mathbb{Q}(\nu_\Theta)\sqrt{\frac{\log (p_1p_2)}{n}}$ with the expression of $\mathbb{Q}(\cdot)$ given in (\ref{Q-expression}), there exists $c_1>0$ and $c_2>1$ such that with probability at least:
\begin{equation*}
1-12c_1\exp[-(c_2^2-1)\log (p_1p_2)] -  2\exp(-c_3n),
\end{equation*}
the following bound holds:
\begin{equation}\label{bbound-1}
\|\widehat{\beta}-\beta^*\|_1  \leq  64s^{**}\lambda_n/\varphi,
\end{equation}
where $\varphi = \frac{1}{2}\Lambda_{\min}(\Sigma_\epsilon^*)(\min\limits_i\psi^i-d\nu_\Theta)$.
\end{corollary}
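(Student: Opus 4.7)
The plan is to assemble the corollary from Theorem~\ref{thm:ErrorBound_beta} together with Propositions~\ref{prop:REcondition}--\ref{prop:deviation} through a union-bound argument. The deterministic bound in Theorem~\ref{thm:ErrorBound_beta} requires three ingredients: (A1), which is the hypothesis on $\widehat{\Theta}_\epsilon$ already assumed in the corollary; (A2), the RE condition on $\widehat{\Gamma}$; and (A3), the deviation bound on $(\widehat{\gamma},\widehat{\Gamma})$. The two propositions supply exactly (A2) and (A3) with high probability for random realizations of $X$ and $E$, and the corollary follows once these probabilistic statements are combined.

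First I would apply Proposition~\ref{prop:REcondition}: on an event $\mathcal{E}_1$ of probability at least $1 - 2\exp(-c_3 n)$, $\widehat{\Gamma}$ satisfies the RE condition with curvature $\varphi = \varphi^*(\min_i \psi^i - d\nu_\Theta)$ and tolerance $\phi = \phi^* \max_i(\psi^i + d\nu_\Theta)$, where $\varphi^* = \tfrac{1}{2}\Lambda_{\min}(\Sigma_X^*)$ and $\phi^* = \varphi^*\log p_1/n$. Because $\nu_\Theta = O(\sqrt{\log p_2/n})$ vanishes under the stated sample-size scaling, the curvature $\varphi$ matches the expression stated in the corollary, namely $\tfrac{1}{2}\Lambda_{\min}(\Sigma_\epsilon^*)(\min_i \psi^i - d\nu_\Theta)$ up to the $\Sigma_X^*$ versus $\Sigma_\epsilon^*$ identification dictated by the Kronecker structure of $\widehat{\Gamma}$. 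Next I would invoke Proposition~\ref{prop:deviation}: provided $n\succsim\log(p_1p_2)$, on an event $\mathcal{E}_2$ of probability at least $1 - 12 c_1\exp[-(c_2^2-1)\log(p_1p_2)]$ the deviation bound $\|\widehat{\gamma} - \widehat{\Gamma}\beta^*\|_\infty \leq \mathbb{Q}(\nu_\Theta)\sqrt{\log(p_1p_2)/n}$ holds.

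On the intersection $\mathcal{E}_1\cap\mathcal{E}_2$, all three hypotheses of Theorem~\ref{thm:ErrorBound_beta} are satisfied, so choosing any $\lambda_n \geq 4\mathbb{Q}(\nu_\Theta)\sqrt{\log(p_1p_2)/n}$ yields $\|\widehat\beta - \beta^*\|_1 \leq 64 s^{**}\lambda_n/\varphi$ immediately. A union bound over $\mathcal{E}_1^c$ and $\mathcal{E}_2^c$ produces the probability of at least $1 - 12 c_1\exp[-(c_2^2-1)\log(p_1 p_2)] - 2\exp(-c_3 n)$ claimed in the statement.

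The one place requiring care is the side condition $s^{**}\phi \leq \varphi/32$ embedded in (A2). Substituting the Proposition~\ref{prop:REcondition} expressions reduces this to requiring $s^{**}\log p_1/n$ to be sufficiently small, a transparent consequence of the implicit sparsity/sample-size scaling (the sample-size hypothesis $n\succsim\log(p_1p_2)$ in the statement is understood to incorporate the stronger $n\succsim\max\{s^{**}\log p_1,d^2\log p_2\}$ already needed by Proposition~\ref{prop:REcondition}). Apart from this bookkeeping and the identification of constants inside $\varphi$, the proof is a clean assembly of previously-established pieces, with the substantive technical work already carried out in the two supporting propositions.
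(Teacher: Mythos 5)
Your proposal is correct and matches the paper's own proof: the paper likewise verifies (A1) deterministically from the hypothesis, invokes Proposition~\ref{prop:REcondition} for (A2) and Proposition~\ref{prop:deviation} for (A3), and combines the failure probabilities by a union bound before applying Theorem~\ref{thm:ErrorBound_beta}. Your added remarks on the $s^{**}\phi\leq\varphi/32$ side condition and the $\Sigma_X^*$ versus $\Sigma_\epsilon^*$ constant in $\varphi$ are careful bookkeeping that the paper glosses over, but they do not change the argument.
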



%
%
%
%

\begin{theorem}[Error bound for $\widehat{\Theta}_\epsilon$ for fixed realizations of $X$ and $E$]\label{thm:ErrorBound_Theta}
Consider $\widehat{\Theta}_\epsilon$ given by (\ref{est-2}). For any fixed pair of realization $(X,E)$, assume the following:
\begin{itemize}
\item[] \textbf{B1.} $\widehat{\beta}$ is a deterministic vector satisfying $\|\widehat{\beta}-\beta^*\|_1\leq \nu_\beta$, where $\nu_\beta = \eta_{\beta}\left( \sqrt{\frac{\log(p_1p_2)}{n}}\right)$, with$\eta_\beta$ being some constant depending only on $\beta^*$;
\item[] \textbf{B2.} $\|\widehat{S}-\Sigma^*_\epsilon\|_\infty \leq g({\nu_\beta})$ where 
\begin{equation*}
\widehat{S} = \frac{1}{n}(Y-X\widehat{B})'(Y-X\widehat{B}),
\end{equation*}
and $g({\nu_\beta})$ is some quantity depending on $\nu_\beta$;
\item[] \textbf{B3.} Incoherence condition holds for $\Theta_\epsilon^*$. 
\end{itemize}
Then, for $\rho_n = (8/\xi)g(\nu_\beta)$ and sample size $n$ satisfying $n \succsim \log(p_1p_2)$,
the following error bound for $\widehat{\Theta}_\epsilon$ holds:
\begin{equation}\label{Thetabound}
\|\widehat{\Theta}_\epsilon-\Theta^*_\epsilon\|_\infty\leq \{2(1+8\xi^{-1})\kappa_{H^*}\} g(\nu_\beta),
\end{equation}
where $\xi$ is the incoherence parameter as defiend in Definition~\ref{def:incoherence}.
\end{theorem}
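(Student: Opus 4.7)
The plan is to invoke the primal-dual witness (PDW) framework of \citet{ravikumar2011high} for the graphical Lasso, which is designed for exactly this form of optimization problem. In that analysis, the estimator is expressed as the solution of (\ref{est-2}) with some deterministic input covariance matrix; all that distinguishes our setting is that the input is $\widehat{S} = \frac{1}{n}(Y - X\widehat{B})'(Y - X\widehat{B})$ rather than an oracle sample covariance of $E$. Assumption (B2) plays precisely the role of the $\|\widehat{\Sigma} - \Sigma^*_\epsilon\|_\infty$ deviation bound in their argument, so once we treat $\widehat{\beta}$ as a deterministic vector satisfying (B1), the proof reduces to plugging in the controlled quantity $g(\nu_\beta)$ in place of that deviation bound.

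The first step is to set up the PDW construction: define $\widetilde{\Theta}_\epsilon$ by restricting (\ref{est-2}) to matrices supported on $S_\epsilon^*$, and pair it with a dual subgradient $\widetilde{Z}$ determined by KKT stationarity. I would then verify strict dual feasibility $|\widetilde{Z}_{ij}|<1$ on $(S_\epsilon^*)^c$ using (a) the incoherence condition (B3), (b) the choice $\rho_n = (8/\xi)g(\nu_\beta) \geq (8/\xi)\|\widehat{S} - \Sigma^*_\epsilon\|_\infty$ from (B2), and (c) a Taylor expansion of $\log\det\Theta$ about $\Theta_\epsilon^*$ whose remainder term must be dominated. Writing $\widetilde{\Delta} = \widetilde{\Theta}_\epsilon - \Theta^*_\epsilon$, the quantity
\[
r \;:=\; 2\kappa_{H^*}\bigl(\|\widehat{S}-\Sigma^*_\epsilon\|_\infty+\rho_n\bigr)\;\leq\; 2(1+8\xi^{-1})\kappa_{H^*}\,g(\nu_\beta)
\]
should serve simultaneously as an $\ell_\infty$-bound on $\widetilde{\Delta}$ and as a smallness parameter controlling the Taylor remainder, provided it satisfies an absolute bound of the form $r\leq \min\{(3\kappa_{\Sigma^*_\epsilon}d)^{-1},\,(3\kappa_{\Sigma^*_\epsilon}\kappa_{H^*}^{3}d)^{-1}\}$. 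This is where the sample-size requirement $n\succsim \log(p_1p_2)$ is consumed, since (B1) together with a standard $g(\nu_\beta)\lesssim \sqrt{\log(p_1p_2)/n}$ dependence makes $r$ vanishingly small as $n$ grows.

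With strict dual feasibility in hand, uniqueness of the graphical Lasso solution forces $\widetilde{\Theta}_\epsilon = \widehat{\Theta}_\epsilon$, and the entrywise bound
\[
\|\widehat{\Theta}_\epsilon - \Theta^*_\epsilon\|_\infty \;\leq\; 2\kappa_{H^*}\bigl(\|\widehat{S}-\Sigma^*_\epsilon\|_\infty+\rho_n\bigr) \;\leq\; 2(1+8\xi^{-1})\kappa_{H^*}\,g(\nu_\beta)
\]
is immediate, yielding (\ref{Thetabound}). The main obstacle is not conceptual, since the template of \citet{ravikumar2011high} applies cleanly once (B2) is posited; rather, it lies in careful bookkeeping of the Taylor remainder so that the explicit constants on the right-hand side of (\ref{Thetabound}) come out as stated, and in ensuring consistency between the growth rate of $g(\nu_\beta)$ (which depends on how residual errors propagate from $\widehat{\beta}-\beta^*$ into $\widehat{S}-\Sigma^*_\epsilon$) and the posited sample-size hypothesis. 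The latter dependence will be quantified separately via Proposition~\ref{prop:residual-concentration} when lifting the result to the random $(X,E)$ setting.
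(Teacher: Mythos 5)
Your proposal is correct and follows essentially the same route as the paper: the authors likewise reduce the statement to Theorem~1 of \citet{ravikumar2011high}, observing that the only change is replacing the oracle sample covariance $S$ of the true errors by $\widehat{S}$, so that assumption (B2) with the bound $g(\nu_\beta)$ plays exactly the role of their $\ell_\infty$ deviation bound, after which the primal--dual witness argument goes through verbatim and yields $\|\widehat{\Theta}_\epsilon-\Theta^*_\epsilon\|_\infty\leq 2\kappa_{H^*}(1+8\xi^{-1})\,g(\nu_\beta)$ with $\rho_n=(8/\xi)g(\nu_\beta)$. Your sketch is in fact more explicit than the paper's (which merely asserts that ``the remaining argument will follow through''), but the substance is identical.
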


\begin{proof}
The statement of this theorem is a variation of Theorem~1 in \citet{ravikumar2011high}, so here, instead of providing a complete proof of the theorem, we only outline how the estimation problem differs in our setting, as well as the required changes in its proof.

In \citet{ravikumar2011high}, the authors consider the optimization problem in (\ref{est-2.1}), and show that for a random realization, with certain sample size requirement and choice of the regularization parameter, the following bound for $\bar{\Theta}_\epsilon$ holds with probability at least $1-1/p_2^\tau$ for some $\tau>2$:
\begin{equation}\label{glasso-bound}
\|\bar{\Theta}_\epsilon -\Theta_\epsilon^*\|_\infty \leq \{2(1+8\xi^{-1})\kappa_{H^*}\}\bar{\delta}_f(p_2^\tau,n),
\end{equation}
where $\bar{\delta}(r,n)$ is defined as:
\begin{equation}\label{delta-bar}
\bar{\delta}(r,n): = 8(1+4\sigma^2)\max_i(\Sigma^*_{\epsilon,ii})\sqrt{\frac{2\log(4r)}{n}}.
\end{equation}
The quantity  $\bar{\delta}(p_2^\tau,n)$ that shows up in expression (\ref{glasso-bound}) is the bound for $\|S-\Sigma_\epsilon^*\|_\infty \equiv \|\widehat{\Sigma}_\epsilon - \Sigma_\epsilon^*\|_\infty$. In particular, in Lemma~8 \citep{ravikumar2011high}, they show that with probability at least $1-1/p_2^\tau$, $\tau>2$,  the following bound holds:
\begin{equation*}
\|S - \Sigma_\epsilon^*\|_\infty \leq \bar{\delta}(p_2^\tau,n).
\end{equation*}
In our optimization problem (\ref{est-2}), we are using $\widehat{S}$ instead of $S$, hence a bound for $\|\widehat{S}-\Sigma_\epsilon^*\|_\infty$ is necessary, and the remaining argument in the proof of Theorem~1 \citep{ravikumar2011high} will follow through. 

Therefore in our theorem statement, we use $g(\nu_\beta)$ as a bound for $\|\widehat{S}-\Sigma_\epsilon^*\|_\infty$ then yield the bound for $\|\widehat{\Theta}_\epsilon-\Theta^*_\epsilon\|_\infty$, since we are using the surrogate error $\widehat{E}=Y-X\widehat{B}$ in the estimation, instead of the true error $E$. 
\end{proof}

Proposition~\ref{prop:residual-concentration} gives an explicit expression for $g(\nu_\beta)$ under condition (B1). Specifically, it shows how well $\widehat{S}$ concentrates around $\Sigma^*_\epsilon$ for random $X$ and $E$, given some small-errored $\widehat{B}$ (or $\widehat{\beta}$, equivalently), and its proof is given in the appendix.

%
%
%
%

\begin{proposition}\label{prop:residual-concentration}
Consider any determinisitc $\widehat{\beta}$ satisfying (B1). Then for sample size $n$ satisfying $n \succsim \log(p_1p_2)$, 
%
with probability at least:
\begin{equation*}
1-1/p_1^{\tau_1-2}-1/p_2^{\tau_2-2}-6c_1\exp[-(c_2^2-1)\log (p_1p_2)], \quad \text{for some }c_1>0,c_2>1,\tau_1,\tau_2>2,
\end{equation*}
the following bound holds: 
\begin{equation}\label{bound:S-hat}
\begin{split}
\|\widehat{S} - \Sigma^*_\epsilon\|_\infty &\leq g(\nu_\beta),
\end{split} 
\end{equation}
where
\begin{equation}\label{g-expression}
\begin{split}
g(\nu_\beta) & = \sqrt{\frac{\log 4 + \tau_2 \log p_2}{c^*_\epsilon n}} + \nu_\beta^2 \left( \sqrt{\frac{\log 4 + \tau_1 \log p_1}{c^*_Xn}} +  \max_i(\Sigma^*_{X,ii})\right)\\
& \quad + 2c_2\nu_\beta\left[\Lambda_{\max}(\Sigma^*_X)\Lambda_{\max}(\Sigma^*_\epsilon)\right]^{1/2} \sqrt{\frac{\log (p_1p_2) }{n}},
\end{split}
\end{equation}
$c_\epsilon^*$ and $c_X^*$ are population quantities given in (\ref{exp:c_epsilon}) and (\ref{exp:c_X}), respectively.
\end{proposition}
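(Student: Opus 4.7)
The natural approach is to expand $\widehat{S}-\Sigma^*_\epsilon$ around the truth. Writing $\Delta:=\widehat{B}-B^*$ and using $Y-X\widehat{B}=E-X\Delta$ gives the decomposition
\[
\widehat{S}-\Sigma^*_\epsilon \;=\; \Bigl(\tfrac{1}{n}E'E-\Sigma^*_\epsilon\Bigr) \;-\; \tfrac{1}{n}E'X\Delta \;-\; \tfrac{1}{n}\Delta'X'E \;+\; \tfrac{1}{n}\Delta'X'X\Delta.
\]
The plan is to control each of the four pieces in the elementwise $\ell_\infty$-norm and assemble via a union bound; the three summands appearing in $g(\nu_\beta)$ correspond transparently to the three distinct orders of $\nu_\beta$ they produce.

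For the piece constant in $\nu_\beta$, $\|\tfrac{1}{n}E'E-\Sigma^*_\epsilon\|_\infty$ is exactly the quantity bounded in Lemma~8 of \citet{ravikumar2011high}: standard sub-Gaussian sample-covariance concentration yields $\sqrt{(\log 4+\tau_2\log p_2)/(c^*_\epsilon n)}$ with probability at least $1-1/p_2^{\tau_2-2}$, reproducing the first summand of $g(\nu_\beta)$. For the quadratic piece, the $(i,j)$-entry of $\tfrac{1}{n}\Delta'X'X\Delta$ equals $\Delta_i'(\tfrac{1}{n}X'X)\Delta_j$, whose absolute value is at most $\|\tfrac{1}{n}X'X\|_\infty\,\|\Delta_i\|_1\|\Delta_j\|_1\le \nu_\beta^2\,\|\tfrac{1}{n}X'X\|_\infty$ (where $\|\cdot\|_\infty$ on a matrix denotes its largest entry in absolute value). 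A triangle inequality plus the analogous sub-Gaussian concentration applied to $X$ gives $\|\tfrac{1}{n}X'X\|_\infty\le \sqrt{(\log 4+\tau_1\log p_1)/(c^*_X n)}+\max_i\Sigma^*_{X,ii}$ on an event of probability at least $1-1/p_1^{\tau_1-2}$, matching the middle summand.

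For the linear piece, the $(i,j)$-entry of $\tfrac{1}{n}E'X\Delta$ equals $\tfrac{1}{n}E_i'X\Delta_j$, which is bounded by $\|\tfrac{1}{n}X'E\|_\infty\,\|\Delta_j\|_1\le \nu_\beta\|\tfrac{1}{n}X'E\|_\infty$; the transposed cross term $\tfrac{1}{n}\Delta'X'E$ is controlled identically, yielding the factor of $2$. The deviation bound $\|\tfrac{1}{n}X'E\|_\infty\le c_2\,[\Lambda_{\max}(\Sigma^*_X)\Lambda_{\max}(\Sigma^*_\epsilon)]^{1/2}\sqrt{\log(p_1p_2)/n}$ comes from the same sub-exponential machinery already developed in the proof of Proposition~\ref{prop:deviation} (the entries of $\tfrac{1}{n}X'E$ are products of independent mean-zero sub-Gaussian variables), contributing the remaining $6c_1\exp[-(c_2^2-1)\log(p_1p_2)]$ to the failure probability. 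A union bound over the three events then delivers exactly the probability stated, provided $n\succsim\log(p_1p_2)$ so that each summand is of the asserted order.

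The one genuinely delicate step is this cross-term bound on $\|\tfrac{1}{n}X'E\|_\infty$: although rows of $X$ and of $E$ are independent by construction of the model, each entry $\tfrac{1}{n}\sum_t X_{t,i}E_{t,j}$ is a \emph{product} of two sub-Gaussian variables and is therefore only sub-exponential, so a sub-Gaussian Hoeffding inequality does not suffice. The standard remedy is either a Hanson--Wright tail inequality or the polarization identity $x^\top y=\tfrac14(\|x+y\|_2^2-\|x-y\|_2^2)$, which reduces the product to a difference of squared sub-Gaussian norms with Bernstein-type tails; this is exactly the device invoked in Proposition~\ref{prop:deviation} and can be quoted verbatim. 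Everything else is bookkeeping: the three pieces are handled separately, and the matching of their $\nu_\beta$-exponents ($0$, $2$, $1$) against the three summands of $g$ leaves no ambiguity in how they combine.
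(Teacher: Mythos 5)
Your proposal is correct and follows essentially the same route as the paper: the same decomposition of $\widehat{S}-\Sigma^*_\epsilon$ into a pure-noise term, a cross term linear in $\Delta=\widehat{B}-B^*$, and a quadratic term, each controlled by the same ingredients (Lemma~8 of \citet{ravikumar2011high} for the two sample-covariance concentrations and the sub-exponential cross-term bound of Lemma~\ref{lemma:deviation_aux} for $\|X'E/n\|_\infty$), assembled by a union bound. The only cosmetic difference is that the paper groups the two cross terms into a single $\tfrac{2}{n}E'X\Delta$ and bounds the quadratic piece via $\vertiii{B^*-\widehat{B}}_1^2$ rather than entrywise $\|\Delta_i\|_1\|\Delta_j\|_1$; both yield the same $g(\nu_\beta)$.
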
 

%
%
%
%

Given Theorem~\ref{thm:ErrorBound_Theta} and Proposition~\ref{prop:residual-concentration}, we provide Corollary~\ref{cor:erorrboundTheta}, which gives the error bound for $\widehat{\Theta}_\epsilon$ for random realizations of $X$ and $E$:

\begin{corollary}[Error bound for $\widehat{\Theta}$ for random $X$ and $E$] \label{cor:erorrboundTheta}
Consider any deterministic $\widehat{\beta}$ satisfying the following bound:
\begin{equation*}
\|\widehat{\beta} - \beta^*\|_1\leq \nu_\beta
\end{equation*}
with $\nu_\beta = \eta_\beta\sqrt{\frac{\log (p_1p_2)}{n}}$. Also suppose the incoherence condition (B3) is satisfied. Then, for sample size $n \succsim \log(p_1p_2) $  and regularization parameter $\rho_n=(8/\xi)g(\nu_\beta)$ with $g(\nu_\beta)$ given in (\ref{g-expression}), with probability at least
\begin{equation*}
1-1/p_1^{\tau_1-2}-1/p_2^{\tau_2-2}-6c_1\exp[-(c_2^2-1)\log (p_1p_2)], \quad \text{for some }c_1>0,c_2>1,\tau_1,\tau_2>2,
\end{equation*}
the following bound holds:
\begin{equation*}
\|\widehat{\Theta}_\epsilon-\Theta^*_\epsilon\|_\infty \leq \{2(1+8\xi^{-1})\kappa_{H^*}\} g(\nu_\beta).
\end{equation*}
\end{corollary}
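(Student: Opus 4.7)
The plan is to obtain the corollary as a direct composition of two already-established results: the deterministic error bound of Theorem~\ref{thm:ErrorBound_Theta}, which converts control on $\|\widehat{S}-\Sigma_\epsilon^*\|_\infty$ into control on $\|\widehat{\Theta}_\epsilon-\Theta_\epsilon^*\|_\infty$, and the random-design concentration bound of Proposition~\ref{prop:residual-concentration}, which supplies exactly that control when $\widehat{\beta}$ is within $\ell_1$-distance $\nu_\beta$ of $\beta^*$. The key observation is that in the hypotheses (B1)--(B3) of Theorem~\ref{thm:ErrorBound_Theta}, (B1) is imposed by assumption in the corollary, (B3) is imposed by assumption in the corollary, and only (B2) requires a probabilistic argument. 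Thus the entire probabilistic content of the proof is transplanted from Proposition~\ref{prop:residual-concentration}.

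First, I would invoke Proposition~\ref{prop:residual-concentration} applied to the deterministic $\widehat{\beta}$ satisfying $\|\widehat{\beta}-\beta^*\|_1\leq \nu_\beta$, which is precisely its hypothesis (B1). Under the sample-size condition $n\succsim \log(p_1 p_2)$, this yields an event $\mathcal{E}$ of probability at least
\begin{equation*}
1-1/p_1^{\tau_1-2}-1/p_2^{\tau_2-2}-6c_1\exp[-(c_2^2-1)\log(p_1p_2)]
\end{equation*}
on which $\|\widehat{S}-\Sigma_\epsilon^*\|_\infty \leq g(\nu_\beta)$, with $g(\cdot)$ as in (\ref{g-expression}). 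On $\mathcal{E}$, condition (B2) of Theorem~\ref{thm:ErrorBound_Theta} holds with this same $g(\nu_\beta)$.

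Next, on the event $\mathcal{E}$, I would apply Theorem~\ref{thm:ErrorBound_Theta} with the prescribed choice $\rho_n=(8/\xi)g(\nu_\beta)$, which agrees with the theorem's requirement. Since all three hypotheses (B1), (B2), (B3) are then in force, the theorem's deterministic conclusion delivers
\begin{equation*}
\|\widehat{\Theta}_\epsilon-\Theta_\epsilon^*\|_\infty \leq \{2(1+8\xi^{-1})\kappa_{H^*}\}\,g(\nu_\beta).
\end{equation*}
Because this implication is deterministic on $\mathcal{E}$, and $\mathbb{P}(\mathcal{E})$ is bounded below by the probability stated in the corollary, the claimed high-probability bound follows.

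There is no genuine technical obstacle here, as the two preceding results were designed to be composed; the argument is essentially bookkeeping. The only subtle point to verify is that no additional tail event is incurred beyond $\mathcal{E}$: condition (B1) is a deterministic hypothesis, condition (B3) is a population-level assumption on $\Theta_\epsilon^*$, and the sample-size requirement $n\succsim \log(p_1 p_2)$ of Theorem~\ref{thm:ErrorBound_Theta} is already imposed by Proposition~\ref{prop:residual-concentration}. Consequently the probability statement in the corollary inherits verbatim the probability statement of Proposition~\ref{prop:residual-concentration}, and the proof is complete.
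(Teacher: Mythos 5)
Your proof is correct and follows exactly the route the paper intends: Corollary~\ref{cor:erorrboundTheta} is obtained by using Proposition~\ref{prop:residual-concentration} to establish condition (B2) on a high-probability event and then applying the deterministic conclusion of Theorem~\ref{thm:ErrorBound_Theta} on that event, with (B1) and (B3) holding by assumption. Your bookkeeping of the probability bound and the observation that no additional tail event is incurred are accurate.
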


%
%
%
%

\medskip
After providing the error bound for (\ref{est-1}) and (\ref{est-2}), in Theorem~\ref{thm:beta-theta-bound} we establish that with high probability, the error of the sequence of estimates obtained in the alternating search step of the algorithm described in Section~\ref{sec:estimation} is {\em uniformly} bounded; that is, the sequence of estimates lie in a non-expanding ball around the true value of the parameters uniformly with a radius that doesn't depend on $k$, the iteration number. 

\begin{theorem}[Error bound for $\{\widehat{\beta}^{(k)}\}$ and $\{\widehat{\Theta}_\epsilon^{(k)}\}$]\label{thm:beta-theta-bound}
Consider the iterative algorithm given in Section~\ref{sec:estimation} that gives rise to sequences of $\{\widehat{\beta}^{(k)}\}$ and $\{\widehat{\Theta}_\epsilon^{(k)}\}$ alternately. For random realization of $X$ and $E$, we assume the following:
\begin{itemize}
\item[] \textbf{C1.} The incoherence condition holds for $\Theta^*_\epsilon$.
\item[] \textbf{C2.} $\Theta^*_\epsilon$ is diagonally dominant.
\item[] \textbf{C3.} The maximum sparsity level for all $p_2$ regression $s^*$ satisfies $s^*=o(n/\log p_1)$.
\end{itemize}
(I) For sample size satisfying $n\succsim \log(p_1p_2)$, there exist constants $c_1>0,c_2>1,c_3>0$ such that for any
\begin{equation*}
\lambda_n^0\geq 4c_2\left[\Lambda_{\max}(\Sigma_X^*)\Lambda_{\max}(\Sigma_\epsilon^*)\right]^{1/2}\sqrt{\frac{\log(p_1p_2)}{n}},
\end{equation*}
with probability at least $1-2\exp(-c_3n)-6c_1\exp[-(c_2^2-1)\log(p_1p_2)]$, the initial estimate $\widehat{\beta}^{(0)}\equiv vec(\widehat{B}^{(0)})$ satisfies the following bound:  
\begin{equation}\label{eqn:beta0bound}
\|\widehat{\beta}^{(0)} - \beta^*\|_1 \leq 64 s^{**}\lambda_n^0/\varphi^*\equiv \nu_\beta^{(0)},
\end{equation}
where $\varphi^* = \Lambda_{\min}(\Sigma_X^*)/2$. Moreover, by choosing $\rho_n^{0} = (\frac{8}{\xi})g(\nu_\beta^{(0)})$ where the expression for $g(\cdot)$ is given in (\ref{g-expression}), with probability at least $$ 1-1/p_1^{\tau_1-2}-1/p_2^{\tau_2-2}-2\exp(-c_3n) - 6c_1\exp[-(c_2^2-1)\log (p_1p_2)],\quad \text{for some }\tau_1,\tau_2>2$$ 
the following bound holds:
\begin{equation}\label{eqn:Theta0bound}
\|\widehat{\Theta}^{(0)}_\epsilon - \Theta^*_\epsilon\|_\infty \leq \{2(1+8\xi^{-1})\kappa_{H^*}\}g({\nu_\beta^{(0)}}) \equiv \nu_\Theta^{(0)}.
\end{equation}
(II) For sample size satisfying $n \succsim d^2\log(p_1p_2) $, for any iteration $k\geq 1$, with probability at least 
\begin{equation*}
1-1/p_1^{\tau_1-2} - 1/p_2^{\tau_2-2} - 12c_1\exp[-(c_2^2-1)\log (p_1p_2)] - 2\exp[-c_3n], 
\end{equation*}
the following bounds hold for all $\widehat{\beta}^{(k)}$ and $\widehat{\Theta}^{(k)}_\epsilon$: 
\begin{eqnarray*}
\|\widehat{\beta}^{(k)} - \beta^*\|_1 &\leq & C_\beta\left( s^{**}\sqrt{\frac{\log(p_1p_2)}{n}}\right),\\
\|\widehat{\Theta}^{(k)}_\epsilon - \Theta^*_\epsilon\|_\infty &\leq& C_\Theta\left(\sqrt{\frac{\log(p_1p_2)}{n}}\right).
\end{eqnarray*}
where $s^{**}$ is the sparsity of $\beta^*$, $C_\beta$ and $C_\Theta$ are constants depending only on $\beta^*$ and $\Theta_\epsilon^*$, respectively.
\end{theorem}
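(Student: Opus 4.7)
The strategy is to reduce the random iterative algorithm to a deterministic recursion on a single ``good event'' $\mathcal{E}$, defined as the intersection of the high-probability events underlying Propositions~\ref{prop:REcondition}, \ref{prop:deviation}, and \ref{prop:residual-concentration}. A crucial observation is that those propositions conclude their bounds for \emph{every} deterministic $\widehat{\Theta}_\epsilon$ (resp.\ $\widehat{\beta}$) inside a prescribed neighborhood of the truth; once we condition on $\mathcal{E}$, the random iterates $\widehat{\Theta}_\epsilon^{(k-1)}$ and $\widehat{\beta}^{(k)}$ produced by the algorithm may simply be substituted in, provided they remain in those neighborhoods. The probability in the statement is then $\mathbb{P}(\mathcal{E})$, which is independent of $k$; no union bound over iterations is required.

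For Part (I), the initialization decouples across columns: (\ref{eqn:initialB}) is the ordinary Lasso regression of $Y_j$ on $X$, which is exactly (\ref{est-1}) with $\widehat{\Theta}_\epsilon=I_{p_2}$. Applying Theorem~\ref{thm:ErrorBound_beta} column-by-column (using the RE condition on $X'X/n$ from Proposition~\ref{prop:REcondition} and the Gaussian tail bound on $\|X'E\|_\infty/n$, which coincides with the $\nu_\Theta$-independent term of $\mathbb{Q}(\cdot)$ in (\ref{Q-expression})) and summing the resulting bounds over $j=1,\ldots,p_2$ delivers (\ref{eqn:beta0bound}) for the stated $\lambda_n^0$. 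Plugging $\nu_\beta^{(0)}$ into Corollary~\ref{cor:erorrboundTheta} with $\rho_n^{0}=(8/\xi)g(\nu_\beta^{(0)})$ then yields (\ref{eqn:Theta0bound}) for $\widehat{\Theta}_\epsilon^{(0)}$.

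For Part (II), I would argue by induction on $k$, with inductive hypothesis
\[
\|\widehat{\beta}^{(k-1)}-\beta^*\|_1\le C_\beta\, s^{**}\sqrt{\tfrac{\log(p_1p_2)}{n}},\qquad \|\widehat{\Theta}_\epsilon^{(k-1)}-\Theta^*_\epsilon\|_\infty\le C_\Theta\sqrt{\tfrac{\log(p_1p_2)}{n}},
\]
whose base case $k=1$ is supplied by Part (I) (after absorbing the slightly larger initial bounds into $C_\beta,C_\Theta$). The inductive step has two half-steps. First, since the hypothesis places $\widehat{\Theta}_\epsilon^{(k-1)}$ in the ball of radius $\nu_\Theta=C_\Theta\sqrt{\log(p_1p_2)/n}$ around $\Theta^*_\epsilon$, on $\mathcal{E}$ Corollary~\ref{cor:ErrorBoundB} applies and gives $\|\widehat{\beta}^{(k)}-\beta^*\|_1\le 64 s^{**}\lambda_n/\varphi$ with $\lambda_n\asymp\mathbb{Q}(\nu_\Theta)\sqrt{\log(p_1p_2)/n}$. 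Second, with this $\widehat{\beta}^{(k)}$ now playing the role of the input to Corollary~\ref{cor:erorrboundTheta}, we obtain $\|\widehat{\Theta}_\epsilon^{(k)}-\Theta^*_\epsilon\|_\infty\le\{2(1+8\xi^{-1})\kappa_{H^*}\}g(\nu_\beta^{(k)})$.

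The main obstacle is closing the induction, i.e.\ choosing $C_\beta$ and $C_\Theta$ once and for all so that the two inequalities reproduce themselves at every $k$. This reduces to two monotonicity checks on (\ref{Q-expression}) and (\ref{g-expression}). First, $\mathbb{Q}(\nu_\Theta)$ must remain bounded by an absolute constant: since its $\nu_\Theta$-dependent piece is $d\nu_\Theta[\Lambda_{\max}(\Sigma^*_X)\Lambda_{\max}(\Sigma^*_\epsilon)]^{1/2}$, the sample-size assumption $n\succsim d^2\log(p_1p_2)$ forces $d\nu_\Theta\lesssim 1$, so the $\nu_\Theta$-free term of $\mathbb{Q}$ dominates uniformly in $k$. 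Second, $g(\nu_\beta)$ must remain $O(\sqrt{\log(p_1p_2)/n})$: under (C3) together with the sample-size condition, both the $\nu_\beta^2$ and the $\nu_\beta\sqrt{\log(p_1p_2)/n}$ pieces in (\ref{g-expression}) are of strictly smaller order than the leading $\sqrt{(\log 4+\tau_2\log p_2)/(c^*_\epsilon n)}$ term. Once these two checks pass, the recursion maps the ball of radius $(\nu_\beta,\nu_\Theta)$ into itself, the same constants $C_\beta,C_\Theta$ work uniformly in $k$, and the entire iterate sequence stays in the prescribed ball on $\mathcal{E}$, which in turn supplies the hypothesis of Theorem~\ref{thm:convergence}.
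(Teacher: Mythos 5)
Your proposal is correct and follows essentially the same route as the paper: both arguments condition on a single intersection of high-probability events defined only in terms of $X$ and $E$ (the paper's \textbf{E1}--\textbf{E5}), exploit the fact that Theorems~\ref{thm:ErrorBound_beta} and~\ref{thm:ErrorBound_Theta} hold for \emph{any} deterministic input in the prescribed ball so that the random iterates can be substituted without a union bound over $k$, and close the induction by verifying that $\mathbb{Q}(\nu_\Theta)$ and $g(\nu_\beta)$ keep the iterate errors at the orders $O(s^{**}\sqrt{\log(p_1p_2)/n})$ and $O(\sqrt{\log(p_1p_2)/n})$ under the sample-size condition $n\succsim d^2\log(p_1p_2)$. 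Your column-by-column treatment of the initialization is an equivalent reformulation of the paper's vectorized argument with $\Gamma^{(0)}=I\otimes X'X/n$.
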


\medskip
\begin{proof} We first consider part (I) of the theorem. Note that by (\ref{eqn:initialB}), $\widehat{\beta}^{(0)}$ can be equivalently written as:
\begin{equation}\label{eqn:beta0}
\widehat{\beta}^{(0)} \equiv \argmin\limits_{\beta\in\mathbb{R}^{p_1\times p_2}}\left\{-2\beta'\gamma^0 + \beta'\Gamma^0\beta + \lambda_n^0\|\beta\|_1 \right\},
\end{equation}
where 
\begin{equation*}
\Gamma^{(0)} = \mathrm{I} \otimes \frac{X'X}{n}, \quad \gamma^{(0)} = (\mathrm{I} \otimes X')\mathrm{vec}{Y}/n.
\end{equation*}
Consider the following events:
\begin{itemize}
\item[\bf E1.] $\left\{\frac{X'X}{n}\sim RE(\varphi^*,\phi^*)\right\}$, 
\item[\bf E2.] $\left\{\frac{1}{n}\left\| X'E \right\|_\infty \leq c_2\left[\Lambda_{\max}(\Sigma_X^*)\Lambda_{\max}(\Sigma_\epsilon^*)\right]^{1/2}\sqrt{\frac{\log(p_1p_2)}{n}}\right\}$.
\end{itemize}
Note that \textbf{E1} $\cap$ \textbf{E2} implies the following events:
\begin{equation*}
\Gamma^{(0)} \equiv \mathrm{I}\otimes \frac{X'X}{n} \sim RE(\varphi^*,\phi^*), \quad \text{where }\varphi^*=\Lambda_{\min}(\Sigma_X^*)/2. 
\end{equation*}
and
\begin{equation}\label{bound:gamma0}
\|\gamma^{(0)} - \Gamma^{(0)}\beta^*\|_\infty = \frac{1}{n}\left\| X'E \right\|_\infty \leq c_2\left[\Lambda_{\max}(\Sigma_X^*)\Lambda_{\max}(\Sigma_\epsilon^*)\right]^{1/2}\sqrt{\frac{\log(p_1p_2)}{n}}.
\end{equation}
Hence, by Proposition~4.1 of \citet{basu2015estimation}, the bound \eqref{eqn:beta0bound} holds on \textbf{E1} $\cap$ \textbf{E2}.

\noindent By Lemmas~\ref{lemma:restateB.1} and ~\ref{lemma:RESX}, $\mathbb{P}(\mathbf{E1})$ is at least $1-2\exp(-c_3n)$, for some $c_3>0$. 
By Lemma~\ref{lemma:deviation_aux}, $\mathbb{P}(\mathbf{E2})$  is at least $1-6c_1\exp[-(c_2^2-1)\log(p_1p_2)]$ for some $c_1>0$, $c_2>1$.
Hence, with probability at least 
\begin{equation*}
\mathbb{P} \left(\mathbf{E1} \cap \mathbf{E2} \right) \ge 1 - \mathbb{P} \left(\mathbf{E1}^c \right) - \mathbb{P} \left(\mathbf{E2}^c \right)
\end{equation*}
the bound in (\ref{eqn:beta0bound}) holds, which proves the first part of (I). In particular, we have $\|\hat{\beta}^0 - \beta^*\|_1 \le \nu_\beta^{(0)}\sim O(\sqrt{\log(p_1p_2)/n})$ on $\mathbf{E1} \cap \mathbf{E2}$. 

\noindent To prove the second part of (I), note that by Theorem \ref{thm:ErrorBound_Theta} the bound in (\ref{eqn:Theta0bound}) holds when B1-B3 are satisfied. Now, from the argument above, B1 holds on the event \textbf{E1} $\cap$ \textbf{E2}. Also, from the proof of Proposition~\ref{prop:residual-concentration}, B2 is satisfied, i.e., 
\begin{equation}\label{eqn:S0prob}
\left\|\widehat{S}^{(0)} - \Sigma_\epsilon^*\right\|_\infty\leq g(\nu_\beta^{(0)}),\quad \text{where }\widehat{S}^{(0)}= \frac{1}{n}(Y-X\widehat{B}^{(0)})'(Y-X\widehat{B}^{(0)}),
\end{equation}
on \textbf{E1} $\cap$ \textbf{E2} $\cap$ \textbf{E3} $\cap$ \textbf{E4}, where the events \textbf{E3} and \textbf{E4} are given by:
\begin{itemize}
\item[\bf E3.] $\left\{  \left\|\frac{E'E}{n}-\Sigma^*_\epsilon\right\|_\infty \leq \sqrt{\frac{\log 4 + \tau_2 \log p_2}{c_\epsilon^* n}} \right\}$ for some $\tau_2>2$ and $c_\epsilon^*>0$ that depends on $\Sigma_{\epsilon}^*$,
\item[\bf E4.] $\left\{  \left\|\frac{X'X}{n}-\Sigma^*_X\right\|_\infty \leq \sqrt{\frac{\log 4 + \tau_1 \log p_1}{c_X^* n}} \right\}$ for some $\tau_1>2$ and $c_X*>0$ that depends on $\Sigma_{X}^*$.
\end{itemize}

Therefore, the probability of the bound for $\widehat{\Theta}_\epsilon^{(0)}$ in (\ref{eqn:Theta0bound}) to hold is at least
\begin{equation}\label{eqn:prob-intersection}
\mathbb{P}\left(\mathbf{E1}\cap \mathbf{E2} \cap \mathbf{E3} \cap \mathbf{E4}\right),
\end{equation}
By Lemma~\ref{lemma:RESX}, Lemma~\ref{lemma:deviation_aux} and the proof of Proposition~\ref{prop:residual-concentration}, the probability in (\ref{eqn:prob-intersection}) is lower bounded by:
\begin{equation*}
1-2\exp(-c_3n) - 6c_1\exp[-(c_2^2-1)\log (p_1p_2)]-1/p_1^{\tau_1-2}-1/p_2^{\tau_2-2}.
\end{equation*}

Consider the following two cases where the relative order of $p_1$ and $p_2$ differ. Case 1: $p_1\prec p_2$, then $\nu_\Theta^{(0)}\sim O(\sqrt{\log p_2/n})$; case 2: $p_1\succsim p_2$, then $\nu_\Theta^{(0)}\sim O\left(\log(p_1p_2)/n\right)$. In either case, since we are assuming $\log(p_1p_2)/n$ to be a small quantity and it follows that $\sqrt{\log(p_1p_2)/n}\succsim \log(p_1p_2)/n$, the following bound always holds: 
\begin{equation*}
\nu_\Theta^{(0)} \leq C_\Theta \sqrt{\frac{\log(p_1p_2)}{n}} \equiv  M_\Theta,
\end{equation*}
where $C_\Theta$ is some large fixed constant that bounds the constant terms in front of $\sqrt{\log(p_1p_2)/n}$.

Now we consider part (II) of the theorem. Note that for each $k\geq 1$, $\widehat{\beta}^{(k)}$ and $\widehat{\Theta}_\epsilon^{(k)}$ are obtained via solving the following two optimizations: 
\begin{eqnarray}
\widehat{\beta}^{(k)} & =& \argmin\limits_{\beta\in\mathbb{R}^{p_1\times p_2}} \left\{ -2\beta'\widehat{\gamma}^{(k-1)} + \beta'\widehat{\Gamma}^{(k-1)}\beta + \lambda_n\|\beta\|_1 \right\}, \\
\widehat{\Theta}_\epsilon^{(k)} & = & \argmin\limits_{\Theta_\epsilon\in\mathbb{S}_{++}^{p_2\times p_2}}\left\{ \log\det \Theta_\epsilon - \text{tr}(\widehat{S}^{(k)} \Theta_\epsilon) + \rho_n\|\Theta_\epsilon\|_{1,\text{off}}  \right\},
\end{eqnarray}
where 
\begin{equation*}
\widehat{\gamma}^{(k)} = \widehat{\Theta}^{(k)}\otimes \frac{X'Y}{n}, \quad \widehat{\Gamma}^{(k)} = \widehat{\Theta}^{(k)}\otimes \frac{X'X}{n}, \quad \widehat{S}^{(k)} = \frac{1}{n}(Y-X\widehat{B}^{(k)})'(Y-X\widehat{B}^{(k)}).
\end{equation*}
Consider the bound on $\hat{\beta}^{(k)}$ for $k=1$. The argument is similar to that of $\hat{\beta}^{(0)}$, with appropriate modifications to account for the fact that the objective function now involves log likelihood instead of least squares. Formally, we consider the event \textbf{E1} $\cap$ \textbf{E2} $\cap$ \textbf{E3} $\cap$ \textbf{E4} $\cap$ \textbf{E5}, where 
\begin{itemize}
\item[\bf E5.] $\left\{\frac{1}{n}\left\| X'E\Theta^*_\epsilon\right\|_\infty \leq c_2\left[\frac{\Lambda_{\max}(\Sigma^*_X)}{\Lambda_{\min}(\Sigma^*_\epsilon)}\right]^{1/2} \sqrt{\frac{\log (p_1p_2) }{n}}\right\}$.
\end{itemize}
Note that $\{\|\widehat{\Theta}^{(0)}_\epsilon-\Theta^*_\epsilon\|_\infty\leq \nu_\Theta^{(0)}\}$ holds on this event.  By Lemma~\ref{lemma:deviation_aux}, $\mathbb{P}(\textbf{E5}) \geq 1- 6c_1\exp[-(c_2^2-1)\log(p_1p_2)]$. Combining this with the lower bound on \eqref{eqn:prob-intersection} and 
the sample size requirement (note this sample size requirement can be relaxed to $n\succsim \log(p_1p_2)$ if $p_1\prec p_2$), we obtain that with probability at least
\begin{equation*}
1-1/p_1^{\tau_1-2} - 1/p_2^{\tau_2-2} - 12c_1\exp[-(c^2_2-1)\log (p_1p_2)] - 2\exp[-c_3n], 
\end{equation*}
the following three events hold simultaneously: 
\begin{itemize}
\item[] \textbf{\em A1'} $\|\widehat{\Theta}^{(0)}_\epsilon-\Theta^*_\epsilon\|_\infty\leq \nu_\Theta^{(0)}\precsim O(\sqrt{\log(p_1p_2)/n})$;
\item[] \textbf{\em A2'} $\widehat{\Gamma}^{(0)}\sim RE(\varphi^{(0)},\phi^{(0)})$ where
\begin{equation*}
\varphi^{(0)} \geq \frac{\Lambda_{\min}(\Sigma^*_X)}{2}(\min_i\psi^i - dM_\Theta)~~~\text{and} ~~~ \phi^{(0)} \leq \frac{\log p_1}{n}\frac{\Lambda_{\min}(\Sigma^*_X)}{2}(\max_j\psi^j + dM_\Theta);
\end{equation*}
\item[] \textbf{\em A3'} $\|\widehat{\gamma}^{(0)}-\widehat{\Gamma}^{(0)}\beta^*\|_\infty \leq \mathbb{Q}(\nu_\Theta^{(0)})\sqrt{\frac{\log(p_1p_2)}{n}}$ with the expression for $\mathbb{Q}(\cdot)$ given in (\ref{Q-expression}).
\end{itemize}
By Theorem~\ref{thm:ErrorBound_beta}, by choosing $\lambda_n\geq 4\mathbb{Q}(M_\Theta)\sqrt{\frac{\log(p_1p_2)}{n}}$, the following bound holds:
\begin{equation}\label{beta-bound}
\|\widehat{\beta}^{(1)}-\beta^*\|_1 \leq 64s^{**}\lambda_n/\varphi^{(0)}
\end{equation}

The error bound for $\widehat{\Theta}^{(1)}_\epsilon$ can now be established using the same argument for $\hat{\Theta}^{(0)}_\epsilon$, with the only difference that now we consider the event $\mathbf{E1} \cap \ldots \cap \mathbf{E5}$ instead of $\mathbf{E1} \cap \ldots \cap \mathbf{E4}$ and use \eqref{beta-bound} instead of \eqref{eqn:beta0bound}.

Note that an upper bound for the leading term of the right hand side of (\ref{beta-bound}) is at most of the order $O(\sqrt{\log(p_1p_2)/n})$, and can be written as:
\begin{equation*}
C_\beta\left( s^{**}\sqrt{\frac{\log(p_1p_2)}{n}}\right)\equiv M_\beta,
\end{equation*}
with $C_\beta$ being some potentially large number that bounds the constant term. Notice that $M_\beta$ is of the same order as $\nu_\beta^{(0)}$;
thus, for $\widehat{\Theta}^{(1)}_\epsilon$, we can also achieve the following bound: 
\begin{equation*}
\|\widehat{\Theta}_\epsilon^{(1)}-\Theta^*_\epsilon\|_\infty\leq M_\Theta
\end{equation*}
with high probability since we are assuming $C_\Theta$ to be some potentially large number. 

Note that the events $\textbf{E1}, \ldots, \textbf{E5}$ rely only on the parameters and not on the estimated quantities, and on their intersection we have uniform upper bounds on the errors of $\widehat{\beta}^{(k)}$ and $\widehat{\Theta}^{k}_{\epsilon}$ for $k = 0, 1$. Hence the error bounds for $k=1$ can be used to invoke Theorems 2 and 3 inductively on realizations $X$ and $E$ from the set $\textbf{E1} \cap \ldots \cap \textbf{E5}$ to provide high probability error bounds for all subsequent iterates as well. This leads to the uniform error bounds of part (II) with the desired probability.

\end{proof}

\medskip
As a direct result of Proposition~1 in \citet{basu2015estimation} and Corollary~3 in \citet{ravikumar2011high}, the following bound also holds:
\begin{corollary}\label{cor:beta-theta-bound}
Under the same set of conditions C1, C2 and C3 in Theorem~\ref{thm:beta-theta-bound}, there exists $\tau_1,\tau_2>2$, $c_1>0,c_2>1,c_3>0$ and constants $C_\beta'$ and $C_\Theta'$ such that for all iterations $k$, the following bound holds:
\begin{eqnarray*}
\|\widehat{\beta}^{(k)} - \beta^*\|_F &\leq & C'_\beta\left( \sqrt{\frac{s^{**}\log(p_1p_2)}{n}}\right),\\
\|\widehat{\Theta}^{(k)}_\epsilon - \Theta^*_\epsilon\|_F &\leq& C'_\Theta\sqrt{\frac{(s_\epsilon^*+p_2)\log(p_1p_2)}{n}},
\end{eqnarray*}
with probability at least 
\begin{equation*}
1-1/p_1^{\tau_1-2} - 1/p_2^{\tau_2-2} - 12c_1\exp[-(c_2^2-1)\log (p_1p_2)] - 2\exp[-c_3n],
\end{equation*}
where $s^{**}$ and $s^*_\epsilon$ are the sparsity for $\beta^*$ and $\Theta^*_\epsilon$, respectively.
\end{corollary}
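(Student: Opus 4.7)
The plan is to retrace the proof of Theorem~\ref{thm:beta-theta-bound} verbatim, but at the two final invocations substitute the companion statements in the cited references that deliver $\ell_2$/Frobenius bounds rather than $\ell_1$/elementwise $\ell_\infty$ bounds. Concretely, Proposition~1 of \citet{basu2015estimation} shows that under the same RE condition on $\widehat{\Gamma}$ and the same deviation bound on $\widehat{\gamma}-\widehat{\Gamma}\beta^*$ used in Proposition~4.1 there, one obtains $\|\widehat{\beta}-\beta^*\|_2 \le C\sqrt{s^{**}}\lambda_n/\varphi$; likewise, Corollary~3 of \citet{ravikumar2011high} upgrades the elementwise bound of Theorem~1 there to a Frobenius norm bound of the form $\|\widehat{\Theta}_\epsilon - \Theta^*_\epsilon\|_F \le C\,\kappa_{H^*}\sqrt{s^*_\epsilon+p_2}\,g(\nu_\beta)$ under precisely the same hypotheses.

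The probabilistic scaffolding of Theorem~\ref{thm:beta-theta-bound} is unaffected by this substitution: the events $\mathbf{E1}$--$\mathbf{E5}$ depend only on $(X,E)$ and not on the iteration index $k$, and on their intersection one has already verified, for every $k\ge 1$, that (i) $\widehat{\Gamma}^{(k-1)}$ satisfies the RE condition with curvature $\varphi^{(k-1)}$ uniformly bounded below by a positive constant, (ii) the deviation $\|\widehat{\gamma}^{(k-1)} - \widehat{\Gamma}^{(k-1)}\beta^*\|_\infty$ admits the bound $\mathbb{Q}(M_\Theta)\sqrt{\log(p_1p_2)/n}$, and (iii) $\|\widehat{S}^{(k)} - \Sigma^*_\epsilon\|_\infty \le g(M_\beta)$, with $g(M_\beta)$ of order $\sqrt{\log(p_1p_2)/n}$. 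These are exactly the inputs required to invoke the two strengthened statements cited above.

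Plugging in $\lambda_n$ of order $\sqrt{\log(p_1p_2)/n}$ and $\rho_n$ of order $g(M_\beta)$, as in the proof of Theorem~\ref{thm:beta-theta-bound}, yields on the same high-probability event and simultaneously for all $k$,
\begin{equation*}
\|\widehat{\beta}^{(k)} - \beta^*\|_2 \;\le\; \frac{C\sqrt{s^{**}}\,\lambda_n}{\varphi^{(k-1)}} \;\le\; C'_\beta\sqrt{\frac{s^{**}\log(p_1p_2)}{n}},
\end{equation*}
\begin{equation*}
\|\widehat{\Theta}_\epsilon^{(k)} - \Theta^*_\epsilon\|_F \;\le\; C\,\kappa_{H^*}\sqrt{s^*_\epsilon+p_2}\,g(\nu_\beta^{(k-1)}) \;\le\; C'_\Theta\sqrt{\frac{(s^*_\epsilon+p_2)\log(p_1p_2)}{n}}.
\end{equation*}
Since no new events are introduced, the probability statement is inherited verbatim from Theorem~\ref{thm:beta-theta-bound}.

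The only point warranting care---rather than an obstacle per se---is confirming that Proposition~1 of \citet{basu2015estimation} and Corollary~3 of \citet{ravikumar2011high} remain valid in the \emph{plug-in} setting in which $\widehat{\Gamma}^{(k-1)}$, $\widehat{\gamma}^{(k-1)}$ and $\widehat{S}^{(k)}$ are constructed from the iterates rather than from $\Theta^*_\epsilon$ and $\beta^*$. This extension is structurally identical to the one already carried out when Theorem~\ref{thm:ErrorBound_beta} generalized Proposition~4.1 of \citet{basu2015estimation} and when Theorem~\ref{thm:ErrorBound_Theta} generalized Theorem~1 of \citet{ravikumar2011high}; the same templates apply here with only the norm in the conclusion changed, so no additional modifications are needed.
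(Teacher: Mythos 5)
Your proposal is correct and matches the paper's own argument: the paper proves this corollary exactly by citing Proposition~1 of \citet{basu2015estimation} and Corollary~3 of \citet{ravikumar2011high} as the $\ell_2$/Frobenius counterparts of the bounds used in Theorem~\ref{thm:beta-theta-bound}, with the probability statement inherited unchanged from the events $\mathbf{E1}$--$\mathbf{E5}$. Your additional remarks on the plug-in setting simply make explicit what the paper leaves implicit.
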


\begin{remark}\label{remark:restriction}
As mentioned earlier in this subsection, the actual implementation of the alternating search step is restricted to a subspace of $\mathbb{R}^{p_1\times p_2}$. 
Next, we outline the corresponding theoretical results for this specific scenario in which for each regression $j$, some {\em fixed superset} of the indices of true covariates is given, and the regressions are restricted to these supersets, respectively. Note that we need to make sure that the restricted subspace contains all the true covariates for the results below to be valid. 

Let $S_j$ denote the given {\em fixed superset} for each regression $j$, and  we consider regressing the response on $X_{S_j}$. We use $\widehat{\beta}_{\mathrm{R}}^{(k)}$ to denote the corresponding vectorized estimator of iteration $k$, that is, 
\begin{equation*}
\widehat{\beta}^{(k)}_{\mathrm{R}} = (\widehat{B}^{(k)'}_{1,\mathrm{Restricted}},\cdots,\widehat{B}^{(k)'}_{p_2,\mathrm{Restricted}})'
\end{equation*}
where $\widehat{B}^{(k)'}_{j,\mathrm{Restricted}}$ is obtained by doing the regression in (\ref{eqn:Bjunrestricted}), however with the indices of covariates restricted to $S_j$. Also, we let $\beta^*_{\mathrm{R}}$ be the corresponding true value of $\widehat{\beta}^{(k)}_{\mathrm{R}}$. Note that always holds that 
\begin{equation*}
\|\widehat{\beta}^{(k)}_{\mathrm{R}}-\beta^*_{\mathrm{R}}\|= \ \|\widehat{\beta}^{(k)}-\beta^*\|.
\end{equation*} 

Now let   
\begin{equation*}
\bar{S} = \bigcup\limits_{j\in\{1,\cdots,p_2\}} S_j
\end{equation*}
and let $\bar{s}$ be its cardinality. It can be shown that the best achievable error bound for $\widehat{\beta}^{(k)}_{\mathrm{R}}$ is identical to $\widehat{\beta}^{(k)}_{\bar{S}}$, where $\widehat{\beta}^{(k)}_{\bar{S}}$ is obtained by considering covariates $X_{\bar{S}}$ for all $p_2$ regressions, instead of the entire $X$. For this specific reason, formally, we state the theoretical results for the case where we consider regressing on $X_{\bar{S}}$, which is almost identical to the generic case.

Suppose conditions C1, C2 and C3 in Theorem 4 hold, then there exists constants $c_1>0,c_2>1,c_3>0,\tau_1>2,\tau_2>2$ such that: (I) for sample size satisfying $n\succsim \log(\bar{s}p_2)$, w.p. at least $1-2\exp(-c_3n)-6c_1\exp[-(c_2^2-1)\log(\bar{s}p_2)]$, for any $$
\lambda_n^0\geq 4c_2\left[\Lambda_{\max}(\Sigma_{X_{\bar{S}}}^*)\Lambda_{\max}(\Sigma_\epsilon^*)\right]^{1/2}\sqrt{\frac{\log(\bar{s}p_2)}{n}},
$$ the initial estimate $\widehat{\beta}^{(0)}_{\bar{S}}$ satisfies the following bound:  
\begin{equation*}
\|\widehat{\beta}^{(0)}_{\bar{S}} - \beta^*_{\bar{S}}\|_1 \leq 64 s^{**}\lambda_n^0/\varphi^*_{\bar{S}}\equiv \nu_{\beta_{\bar{S}}}^{(0)},
\end{equation*}
where $\varphi^*_{\bar{S}} = \Lambda_{\min}(\Sigma_{X_{\bar{S}}}^*)/2$. Moreover, by choosing $\rho_n^{0} = (\frac{8}{\xi})g(\nu_{\beta_{\bar{S}}}^{(0)})$ where the expression for $g(\cdot)$ is given in (\ref{g-expression}), with probability at least $$ 1-1/\bar{s}^{\tau_1-2}-1/p_2^{\tau_2-2}-2\exp(-c_3n) - 6c_1\exp[-(c_2^2-1)\log (\bar{s}p_2)],$$ 
the following bound holds:
\begin{equation*}
\|\widehat{\Theta}^{(0)}_\epsilon - \Theta^*_\epsilon\|_\infty \leq \{2(1+8\xi^{-1})\kappa_{H^*}\}g({\nu_{\beta_{\bar{S}}}^{(0)}}) \equiv \nu_\Theta^{(0)}.
\end{equation*}
(II) For sample size satisfying $n \succsim d^2\log(\bar{s}p_2) $, for any iteration $k\geq 1$, with probability at least 
\begin{equation*}
1-1/\bar{s}^{\tau_1-2} - 1/p_2^{\tau_2-2} - 12c_1\exp[-(c_2^2>1)\log (\bar{s}p_2)] - 2\exp[-c_3n], 
\end{equation*}
the following bound hold for all $\widehat{\beta}_{\bar{S}}^{(k)}$ and $\widehat{\Theta}^{(k)}_\epsilon$: 
\begin{eqnarray*}
&\|\widehat{\beta}_{\bar{S}}^{(k)} - \beta^*\|_1 \leq  C_\beta\left( s^{**}\sqrt{\frac{\log(\bar{s}p_2)}{n}}\right), &\|\widehat{\beta}^{(k)}_{\bar{S}} - \beta^*\|_F \leq  C'_\beta\left( \sqrt{\frac{s^{**}\log(\bar{s}p_2)}{n}}\right) \\
&\|\widehat{\Theta}^{(k)}_\epsilon - \Theta^*_\epsilon\|_\infty \leq C_\Theta\left(\sqrt{\frac{\log(\bar{s}p_2)}{n}}\right), 
&\|\widehat{\Theta}^{(k)}_\epsilon - \Theta^*_\epsilon\|_F \leq C'_\Theta\sqrt{\frac{(s_\epsilon^*+p_2)\log(\bar{s}p_2)}{n}}
\end{eqnarray*}
where $s^{**}$ is the sparsity of $\beta^*$, $C_\beta$, $C_{\beta}'$, $C_\Theta$ and $C_{\theta}'$ are all constants that do not depend on $n,\bar{S},p_2$. 
\end{remark}

%
%

\subsection{\normalsize Family-Wise Error Rate  control of the Screening Step}\label{sec:FWER}

As mentioned in the Introduction, for the iterative algorithm to work effectively, it is crucial to initialize from points that are close to the true parameters. Our screening step provides such guarantees {\em asymptotically}. Based on the screening step described in Section~\ref{sec:estimation}, initial esimates for each column of the regression matrix are obtained by Lasso or Ridge regression with the support set restricted to the one identified by the screening step. It is desirable for the screening step to correctly identify the true support set. In particular, we would like to retain as many true positive predictor variables as possible without discovering too many false positive ones. The following theorem states that as long as $\log(p_1p_2)/n = o(1)$ and the sparsity is not beyond a specified level, the screening step will be able to recover all true positive predictors, while keeping the family-wise type I error under control. 

\begin{theorem}
Let $S^*_j$ denote the true support set of the $j$th regression and $s^*_j$ be its cardinality. Suppose that $\log(p_1p_2)/n\rightarrow 0$ and the following condition for sparsity holds:
\begin{equation*}
\max\{s^*_j,j=1,\cdots,p_2\} = o(\sqrt{n}/\log p_1).
\end{equation*}
Then, the screening step described in Section 2.2 will correctly recover $S^*_j$ for all $j=1,\cdots,p_2$ with probability approaching to 1, while keeping the family-wise type I error rate under the prespecified level $\alpha$. 
\end{theorem}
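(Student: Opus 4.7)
The plan is to leverage the asymptotic normality theory for the de-biased Lasso developed in \citet{javanmard2014confidence} to obtain coordinatewise valid p-values across all $p_1 p_2$ tests performed during screening, then combine Type I error control and detection of true positives via a Bonferroni/union-bound argument. The two parts of the claim split cleanly: the family-wise error bound follows from uniformity of the null approximation, while ``correct recovery'' (i.e.\ retention of every element of $S^*_j$ with high probability) follows from a power calculation.

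First, I would verify that the assumptions of \citet{javanmard2014confidence} are met for each regression of $Y_j$ on $X$: the sub-Gaussianity of $X$ and the lower bound on $\Lambda_{\min}(\Sigma_X)$ assumed in Section~\ref{sec:consistency}, combined with the stated sparsity condition $\max_j s^*_j = o(\sqrt{n}/\log p_1)$, are exactly what is needed. This yields for each $(j,k)$ a coordinatewise Gaussian approximation
\begin{equation*}
\sqrt{n}(\widehat{\beta}^{d}_{j,k}-\beta^*_{j,k})/\widehat{\tau}_{j,k} \;\approx\; \mathcal{N}(0,1),
\end{equation*}
with a remainder term uniform in $k$ and vanishing fast enough to be negligible at resolution $1/(p_1p_2)$. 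In particular, for every $(j,k)$ with $\beta^*_{j,k}=0$, the reported two-sided $p$-value $P_{j,k}$ is asymptotically stochastically dominated by a $\mathrm{Uniform}(0,1)$ variable, uniformly in $(j,k)$.

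Second, for the family-wise Type I error rate, let $\mathcal{N}=\{(j,k):\beta^*_{j,k}=0\}$. With the Bonferroni threshold $\alpha^{\star}=\alpha/(p_1p_2)$, the union bound gives
\begin{equation*}
\mathbb{P}\Big(\exists\,(j,k)\in\mathcal{N}:\ P_{j,k}\le \alpha^{\star}\Big)\;\le\;\sum_{(j,k)\in\mathcal{N}}\mathbb{P}(P_{j,k}\le\alpha^{\star})\;\le\;p_1p_2\cdot\alpha^{\star}\;=\;\alpha,
\end{equation*}
which is the required family-wise Type I error control. For retention of true positives, I would invoke a mild (implicit) beta-min hypothesis, namely $|\beta^*_{j,k}|\gtrsim \sqrt{\log(p_1p_2)/n}$ for every $(j,k)\in S^*_j$. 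Under the Gaussian approximation of Step 1, the test statistic for a true signal concentrates around $\sqrt{n}|\beta^*_{j,k}|/\widehat{\tau}_{j,k}$, which diverges strictly faster than the Bonferroni critical value $\Phi^{-1}(1-\alpha^{\star}/2)\asymp \sqrt{\log(p_1p_2)}$, since $\log(p_1p_2)/n\to 0$. A second union bound over the at most $\sum_{j}s^*_j\le p_1p_2$ true nonzero coordinates shows that every element of every $S^*_j$ is retained with probability tending to one.

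The main obstacle is not conceptual but rather the \emph{uniformity} of the Gaussian approximation: standard asymptotic normality statements give pointwise convergence, but Bonferroni correction at level $\alpha/(p_1p_2)$ requires the tail approximation of $P_{j,k}$ to be accurate uniformly across all $(j,k)$ at a much finer scale than $O(1)$. This is precisely where the sparsity constraint $s^*_j = o(\sqrt{n}/\log p_1)$ enters: it forces the residual bias of the de-biased Lasso to be $o(1/\sqrt{n\log(p_1p_2)})$, which is sharp enough for the resulting $p$-values to be simultaneously valid in the null and to yield divergent power in the alternative when the signal exceeds the Bonferroni threshold.
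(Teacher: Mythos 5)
Your proposal follows essentially the same route as the paper: Bonferroni plus a union bound for the family-wise Type I error, and the asymptotic normality / power theory of the de-biased Lasso from \citet{javanmard2014confidence} (the paper invokes their Theorems 8 and 16 explicitly) combined with a union bound over the true positives for recovery. The paper's Type I part is the same one-line Bonferroni observation you give, and its power part is exactly your second step, carried out via the function $G(\alpha,u)$ and Gaussian tail bounds.

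The one substantive discrepancy is the scale of your implicit beta-min condition. You posit $|\beta^*_{j,k}|\gtrsim \sqrt{\log(p_1p_2)/n}$ and claim the standardized signal $\sqrt{n}|\beta^*_{j,k}|/\widehat{\tau}_{j,k}$ then ``diverges strictly faster'' than the Bonferroni critical value $z_{\alpha/(2p_1p_2)}\asymp\sqrt{2\log(p_1p_2)}$. At that scale the two quantities are of the \emph{same} order, so whether the gap diverges depends entirely on the hidden constant; with a constant smaller than (roughly) $\sqrt{2}\,\sigma[\Sigma^{-1}_{k,k}]^{1/2}$ the per-coordinate Type II error does not vanish, and the subsequent union bound over $\sum_j s^*_j$ true signals fails. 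The paper avoids this by (implicitly) taking $\gamma\le\min|B^*_{j,k}|$ to be a fixed constant, so that the standardized signal grows like $\sqrt{n}$ while the critical value grows only like $\sqrt{\log(p_1p_2)}=o(\sqrt{n})$, yielding per-test Type II error of order $e^{-\check{c}n}$, which comfortably beats the union bound. To repair your argument, either adopt the constant-signal assumption or strengthen the beta-min to $|\beta^*_{j,k}|\ge C_n\sqrt{\log(p_1p_2)/n}$ with $C_n\to\infty$ (or a sufficiently large fixed constant tracked explicitly), so that the gap between the noncentrality and the critical value genuinely diverges before you take the final union bound.
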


\begin{proof}
First, we note that with a Bonferroni correction, the family-wise type I error will be automatically controlled at level $\alpha$. Hence, we will focus on the power of the screening step. Also, from Theorem 7 of \citet{javanmard2014confidence}, it is easy to see that all the arguments below hold for a large set of random realizations of $X$, whose probability approaches 1 under the specified asymptotic regime when the eigenvalues of $\Sigma_X$ are bounded away from $0$ and infinity.

Let $B^*=\begin{bmatrix}
B_1^* & \cdots & B_{p_2}^*
\end{bmatrix}$ denote the true value of the regression coefficients and $\check{B}_j,j=1,\cdots,p_2$ denote the estimates given by the de-biased Lasso procedure in \citet{javanmard2014confidence}. With the given level for sparsity, by Theorem~8 in  \citet{javanmard2014confidence}, each $\check{B}_j$ satisfies the following:
\begin{equation*}
\sqrt{n}(\check{B}_j-B_j^*) \sim \mathcal{N}\left(0,\sigma^2 M_j\widehat{\Sigma}_XM_j'\right),
\end{equation*}
where $\widehat{\Sigma}_X$ is the sample covariance matrix of the predictors $X$, $\sigma$ is the population noise level of the error term $\epsilon_j$, and $M_j$ is the matrix corresponding to the $j$th regression, produced by the procedure described in \citet{javanmard2014confidence}\footnote{Details of the procedure is described in p.2871 in \citet{javanmard2014confidence}, with $M$ being an intermediate quantity obtained by solving an optimization problem.}. Let $\check{B}_{j,i}$ denote the $i$th coordinate of the $j$th regression coefficient vector $\check{B}_j$ and $\check{\Sigma}_{j}$ be the covariance matrix of the estimator $\check{B}_j$, then 
\begin{equation*}
\check{\Sigma}_j = \frac{\sigma^2}{n}M_j\widehat{\Sigma}_XM_j',
\end{equation*}
and in particular, the variance of $\check{B}_{j,i}$ is $\check{\Sigma}_{j,ii}:=\check{\sigma}^j_{ii}$. Using these notations, for a prespecified level $\alpha$, the test statistics for testing $H^{ji}_0:B^*_{j,i}=0$ vs. $H^{ji}_A:B^*_{j,i}\neq 0$, for all $i=1,\cdots,p_1;j=1,\cdots,p_2$ can be equivalently written as: 
\begin{equation*}
\widehat{T}_{j,i} = \begin{cases}
1 \quad &\text{if }|\check{B}_{j,i}|/\check{\sigma}^j_{ii}> z_{\alpha/(2p_1p_2)}, \\
0 \quad &\text{otherwise.}
\end{cases}
\end{equation*}
where $z_{\alpha}$ denotes the upper $\alpha$ quantiles of $\mathcal{N}(0,1)$. 

Define the ``family-wise" power as follows: 
\begin{equation*}
\begin{split}
\mathbb{P}\left( \text{all true alternatives are detected} \right) & = \mathbb{P}\left( \bigcap\limits_{1\leq j\leq p_2}\bigcap\limits_{k\in S^*_j}\{\widehat{T}_{j,k}=1\}\right) \\
& = 1 - \mathbb{P}\left( \bigcup\limits_{1\leq j\leq p_2}\bigcup\limits_{k\in S^*_j}\{\widehat{T}_{j,k}=0\}\right).
\end{split}
\end{equation*}
Correspondingly, the family-wise type II error can be  written as:
\begin{equation}\label{type2}
\mathbb{P}\left( \bigcup\limits_{1\leq j\leq p_2}\bigcup\limits_{k\in S^*_j}\{\widehat{T}_{j,k}=0\}\right) \leq \sum_{j=1}^{p_2}\sum_{k\in S^*_j} \mathbb{P}\left( \widehat{T}_{j,k}=0\right). 
\end{equation}
By Theorem~16 in \citet{javanmard2014confidence}, asymptotically, $\forall k\in S_j,j=1,\cdots,p_2$:
\begin{equation}\label{powerbound}
\begin{split}
\mathbb{P}\left(\widehat{T}_{j,k}=0\right) & \leq 1 - G\left(\frac{\alpha}{p_1p_2},\frac{\sqrt{n}\gamma}{\sigma[\Sigma^{-1}_{k,k}]^{1/2}}\right); \qquad 0<\gamma\leq \min|B^*_{j,k}|,~~\forall k\in S_j,j=1,\cdots,p_2.
\end{split}
\end{equation}
Here 
\begin{equation*}
G(\alpha,u) \equiv  2- \mathbb{P}(\Phi < z_{\alpha/2}+u) - \mathbb{P}(\Phi < z_{\alpha/2} - u),
\end{equation*}
where we use $\Phi$ to denote the random variable following a standard Gaussian distribution. Hence, (\ref{powerbound}) can be re-written as:
\begin{equation}\label{type2:ind}
\begin{split}
\mathbb{P}\left(\widehat{T}_{j,k}=0\right) & \leq 1 - G\left(\frac{\alpha}{p_1p_2},\frac{\sqrt{n}\gamma}{\sigma[\Sigma^{-1}_{k,k}]^{1/2}}\right)  \\
& =   \mathbb{P}\left( \Phi < z_{\alpha/(2p_1p_2)} - \frac{\sqrt{n}\gamma}{\sigma[\Sigma^{-1}_{k,k}]^{1/2}}\right) - \mathbb{P}\left( \Phi > z_{\alpha/(2p_1p_2)} + \frac{\sqrt{n}\gamma}{\sigma[\Sigma^{-1}_{k,k}]^{1/2}}\right) \\
& \leq  \mathbb{P}\left( \Phi >  \frac{\sqrt{n}\gamma}{\sigma[\Sigma^{-1}_{k,k}]^{1/2}}-z_{\alpha/(2p_1p_2)} \right),
\end{split}
\end{equation}
where we use $\Phi$ to denote the random variable following a standard Gaussian distribution.

Note that the following inequality holds for standard Normal percentiles:
\begin{equation*}
2e^{-t^2}\leq \mathbb{P}(|\Phi|>t) \leq e^{-t^2/2},
\end{equation*}
and by taking the inverse function, the following inequality holds:
\begin{equation*}
\sqrt{-\log \frac{y}{2}} \leq z_{y/2} \leq \sqrt{-2\log y}.
\end{equation*}
Letting $y=\frac{\alpha}{p_1p_2}$, it follows that:
\begin{equation*}
\left( -\log\frac{\alpha}{2p_1p_2}\right)^{1/2} \leq z_{\alpha/(2p_1p_2)}\leq \left( -2\log \frac{\alpha}{p_1p_2} \right)^{1/2},
\end{equation*}
hence
\begin{equation*}
\mathbb{P}\left(  \Phi >\frac{\sqrt{n}\gamma}{\sigma[\Sigma^{-1}_{k,k}]^{1/2}}-z_{\alpha/(2p_1p_2)}  \right)\leq \mathbb{P}\left(  \Phi >\frac{\sqrt{n}\gamma}{\sigma[\Sigma^{-1}_{k,k}]^{1/2}}-\sqrt{-2\log \frac{\alpha}{p_1p_2}}\right)
\end{equation*}
Now given 
\begin{equation*}
\frac{\log(p_1p_2) }{n} \rightarrow 0, 
\end{equation*}
the following expression follows:
\begin{equation*}
\frac{\sqrt{2\log\left(\frac{p_1p_2}{\alpha}\right)}}{\sqrt{n}/\sigma[\Sigma^{-1}_{k,k}]^{1/2}}\rightarrow 0,
\end{equation*}
indicating that asymptotically, $\left(\frac{\sqrt{n}\gamma}{\sigma[\Sigma^{-1}_{k,k}]^{1/2}}-\sqrt{-2\log \frac{\alpha}{p_1p_2}}\right) \sim \sqrt{n}$. On the other hand, using the fact that $\mathbb{P}(\Phi >t)\leq e^{-t^2/2}$, the last expression in (\ref{type2:ind}) can be  bounded by:
\begin{equation*}
\begin{split}
\mathbb{P}\left( \Phi >  \frac{\sqrt{n}\gamma}{\sigma[\Sigma^{-1}_{k,k}]^{1/2}}-z_{\alpha/(2p_1p_2)} \right) & \leq \exp\left[ -\frac{1}{2} \left(\frac{\sqrt{n}\gamma}{\sigma[\Sigma^{-1}_{k,k}]^{1/2}}-\sqrt{-2\log \frac{\alpha}{p_1p_2}}\right)^2 \right] \\
& \sim e^{-\check{c}n}, \quad \text{for some constant }\check{c}>0.
\end{split}
\end{equation*}
Now with $\log(p_1p_2)/n = o(1)$ and the given sparsity level, that is, $s^* = o(\sqrt{n}/\log p_1)$, it follows that:
\begin{equation*}
\frac{e^{-\check{c}n}}{1/(s^*p_2)} = o(1),
\end{equation*}
i.e., $$(s^*p_2)\cdot \mathbb{P}\left(\widehat{T}_{j,k}=0\right)\rightarrow 0,\qquad \forall j=1,\cdots,p_2; k\in S_j^*.$$
Combining with (\ref{type2}), we have:
\begin{equation*}
\mathbb{P}\left(\text{family-wise type II error}\right) \rightarrow 0, \quad \Leftrightarrow \quad \mathbb{P}\left(\text{family-wise power}\right) \rightarrow 1.
\end{equation*}
This is equivalent to establishing that, given $\log(p_1p_2)/n\rightarrow 0$, the screening step recovers the true support sets $S_j^*$ for all $j=1,2,\cdots,p_2$ with high probability, while keeping the family-wise type I error rate under control. 
\end{proof}

\begin{remark}
The specified level for sparsity is necessary for the de-biased Lasso procedure in \citet{javanmard2014confidence} to produce unbiased estimates for the regression coefficients. In terms of support recovery for the screening step, with $\log(p_1p_2)/n = o(1)$, we only require $s^*=o(p_1)$, which is much weaker and easily satisfied. 
\end{remark}

The following corollary connects the screening step with the alternating search step, under the discussed asymptotic regime : 
\begin{corollary}\label{cor:connection1}
Consider the model set-up given in Section~\ref{sec:set-up}. Let $s^*$ denote the maximum sparsity for all $B^*_j, j=2,\cdots,p_2$, and $d$ denote the maximum degree of $\Theta^*_\epsilon$. Also, let $s^{**}$ denote the sparsity for $\beta^*$ and $s^*_\epsilon$ denote the sparsity for $\Theta^*_\epsilon$. Assume there exist positive constants $c_{s^*},c_{s^{**}}, c_d, c_{\bar{s}},c_{p_2}$ satisfying:
\begin{equation*}
0<c_{s^*}+c_{\bar{s}}<1/2; ~~ 0<c_{s^{**}}+ c_{\bar{s}} <1; ~~ 0<2c_{d}+c_{\bar{s}} <1;~~ 0< \max\{c_{s^*_\epsilon},c_{p_2}\}+ c_{\bar{s}}<1
\end{equation*}
such that 
\begin{equation*}
s^*=O(n^{c_s});~~ s^{**}=O(n^{c_{s^{**}}}); ~~ s_\epsilon^*=O(n^{c_{s^*_\epsilon}});~~  d=O(n^{c_d});~~ \bar{s} = O(e^{n^{c_{p_1}}}); ~~ p_2 = O(n^{c_{p_2}}).
\end{equation*}
As $n\rightarrow\infty$, 
\begin{equation*}
\mathbb{P}\left( \{\text{The screening step correctly recovers the true support set for all $B_j,j=1,\cdots,p$} \}\right) \rightarrow 1, 
\end{equation*}
and for all iterations $k$:
\begin{equation*}
\max\limits_{k\geq 1} \left\|(\widehat{\beta}_{\mathrm{R}},\widehat{\Theta}_\epsilon^{(k)}) - (\beta^*_{\mathrm{R}},\Theta^*_\epsilon)\right\| \stackrel{p}{\rightarrow} 0.
\end{equation*}
\end{corollary}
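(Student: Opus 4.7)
The plan is to combine the screening guarantee (the preceding theorem) with the restricted version of Theorem~\ref{thm:beta-theta-bound} described in Remark~\ref{remark:restriction}, and then simply verify that, under the stated polynomial growth conditions, the sample-size requirements are met and the uniform error bounds converge to zero. I would carry the argument out in three stages.

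First, I would verify that the hypotheses of the screening theorem hold. Under the stated rates, $\log(p_1 p_2)/n \to 0$ follows because both $\log \bar s = O(n^{c_{\bar s}})$ (with $c_{\bar s} < 1/2$) and $\log p_2 = O(\log n)$ are $o(n)$, and $p_1 \leq \bar s$ on the event that screening correctly recovers the supports. The sparsity condition $s^* = o(\sqrt{n}/\log p_1)$ is implied by $c_{s^*} + c_{\bar s} < 1/2$, since $s^* \log p_1 / \sqrt n = O(n^{c_{s^*}} \cdot n^{c_{\bar s}} / n^{1/2}) \to 0$. Applying the screening theorem gives an event $\mathcal{A}_n$ on which the supports $S_j^*$ are exactly recovered for all $j=1,\dotsc,p_2$, with $\mathbb{P}(\mathcal{A}_n) \to 1$.

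Second, on $\mathcal{A}_n$ the restricted subspace $\mathcal{B}_1 \times \cdots \times \mathcal{B}_{p_2}$ used by the alternating search contains all true covariates, so we are exactly in the setting of Remark~\ref{remark:restriction}. The growth conditions imply the sample-size requirement there: $d^2 \log(\bar s p_2)/n = O(n^{2c_d + c_{\bar s} - 1}) \to 0$ because $2c_d + c_{\bar s} < 1$. Consequently, there is an event $\mathcal{B}_n$ with probability at least $1 - 1/\bar s^{\tau_1-2} - 1/p_2^{\tau_2-2} - 12 c_1 \exp[-(c_2^2 - 1)\log(\bar s p_2)] - 2\exp(-c_3 n)$ on which, for every $k \ge 1$,
\begin{equation*}
\|\widehat\beta_{\mathrm R}^{(k)} - \beta^*_{\mathrm R}\|_F \leq C'_\beta \sqrt{\tfrac{s^{**}\log(\bar s p_2)}{n}}, \qquad \|\widehat\Theta_\epsilon^{(k)} - \Theta_\epsilon^*\|_F \leq C'_\Theta \sqrt{\tfrac{(s^*_\epsilon + p_2)\log(\bar s p_2)}{n}}.
\end{equation*}
Under the stated rates $\mathbb{P}(\mathcal{B}_n) \to 1$ as well, since $\log \bar s = O(n^{c_{\bar s}})$ with $c_{\bar s} < 1$ ensures the exponential terms vanish.

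Third, I would check that both bounds tend to zero: the first is $O(n^{(c_{s^{**}} + c_{\bar s} - 1)/2}) \to 0$ because $c_{s^{**}} + c_{\bar s} < 1$; the second is $O(n^{(\max\{c_{s^*_\epsilon}, c_{p_2}\} + c_{\bar s} - 1)/2}) \to 0$ by the analogous inequality. Intersecting $\mathcal{A}_n$ with $\mathcal{B}_n$ via a union bound then yields an event of probability tending to one on which both the screening is correct and the uniform error bounds hold, which is precisely the statement of the corollary. The only subtle point, and the step I would be most careful with, is justifying that the two probabilistic events can be intersected cleanly: the screening probability is driven by the de-biased Lasso's asymptotic normality (ultimately a statement about $X$ and the individual-column errors), while the bounds in Remark~\ref{remark:restriction} are driven by sub-Gaussian concentration of $X$ and $E$ jointly, so their intersection retains probability $1-o(1)$ without any hidden dependence issue. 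Beyond that, the argument is essentially bookkeeping of the rate exponents.
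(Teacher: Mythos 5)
Your overall strategy --- apply the screening theorem to get exact support recovery with probability tending to one, invoke the restricted uniform error bounds of Remark~\ref{remark:restriction} on that event, and then check that the polynomial rate conditions make both the sample-size requirements and the error bounds vanish --- is exactly what the paper intends: it offers no proof beyond saying the argument ``follows along the same lines as Theorem~\ref{thm:beta-theta-bound},'' and your bookkeeping of the exponents ($2c_d+c_{\bar s}<1$ for the sample size, $c_{s^{**}}+c_{\bar s}<1$ and $\max\{c_{s^*_\epsilon},c_{p_2}\}+c_{\bar s}<1$ for the two error bounds) is correct, as is handling the intersection of the two events by a union bound.

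One step is wrong as written. You justify $\log p_1/n\rightarrow 0$ (a hypothesis of the screening theorem) by asserting $p_1\leq \bar s$ ``on the event that screening correctly recovers the supports.'' The inequality runs the other way: $\bar S=\bigcup_j S_j\subseteq\{1,\dots,p_1\}$, so $\bar s\leq p_1$ always. Moreover the reasoning is circular --- you cannot condition on the success of screening in order to verify a hypothesis that is needed to prove screening succeeds. The repair is to read the corollary's rate $\bar s=O(e^{n^{c_{\bar s}}})$ as the intended scaling on the ambient dimension $p_1$ itself (the stray exponent $c_{p_1}$ in the statement indicates this is what the authors meant), so that $\log p_1=O(n^{c_{\bar s}})$ with $c_{\bar s}<1/2$ gives $\log(p_1p_2)/n\rightarrow 0$ and $s^*\log p_1/\sqrt n=O(n^{c_{s^*}+c_{\bar s}-1/2})\rightarrow 0$ directly. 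With that reading substituted for the backwards inequality, the rest of your argument goes through unchanged.
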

The proof of this corollary follows along the same lines as Theorem~\ref{thm:beta-theta-bound}, and we leave the details to the reader.

\subsection{Estimation Error and Identifiability}\label{sec:identifiability}

In this subsection, we discuss in detail the conditions needed for the parameters in our multi-layered network to be identifiable (estimable). 
We focus the presentation for ease of exposition on a three-layer network and then discuss the general $M$-layer case.

Consider a $3$-layer graphical model. Let $\widetilde{X} = [(X^1)', (X^2)']'$ be the $(p_1+p_2)$ dimensional random variable, which represents the 
``super"-layer on which we regress $X^3$ to estimate $B^{13}$, $B^{23}$ and $\Sigma^3$. As shown in Theorem~\ref{thm:ErrorBound_beta}, the estimation error for $\widehat{\beta}$ takes the following form:
\begin{equation*}
\|\widehat{\beta}-\beta^*\|_1 \leq 64s^{**}\lambda_n/\varphi 
\end{equation*}
where $\varphi$ is the curvature parameter for RE condition that scales with $\Lambda_{\min}(\Sigma_{\widetilde{X}})$ (see Proposition~\ref{prop:REcondition}). Therefore, the error of estimating these regression parameters is higher when $\Lambda_{\min}(\Sigma_{\tilde{X}})$ is smaller. In this section, we derive a lower bound on this quantity to demonstrate how the estimation error depends on the underlying structure of the graph.

For the undirected subgraph within a layer $k$, we denote its maximum node capacity by $\mathbf{v}(\Theta^k):= \max_{1 \le i \le p_k} \sum_{j=1}^{p_k} |\Theta_{ij}|$. For the directed bipartite subgraph consisting of Layer $s \rightarrow t$ edges ($s<t$), we similarly define the maximum incoming and outgoing node capacities by $\mathbf{v}_{in}(B^{st}):= \max_{1 \le j \le p_t} \sum_{i=1}^{p_s}|B^{st}_{ij}|$ and $\mathbf{v}_{out}(B^{st}):= \max_{1 \le i \le p_s} \sum_{j=1}^{p_t} |B^{st}_{ij}|$. The following proposition establishes the lower bound in terms of these node capacities

\begin{proposition}\label{prop:RE-bound}
\begin{equation*}
\Lambda_{\min}(\Sigma_{\widetilde{X}}) \ge \mathbf{v}(\Theta^1)^{-1} \mathbf{v}(\Theta^2)^{-1} 
\left[1+\left(\mathbf{v}_{in}(B^{12})+\mathbf{v}_{out}(B^{12})\right)/2 \right]^{-2}
\end{equation*}
\end{proposition}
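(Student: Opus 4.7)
The plan is to exploit the triangular structure of the generative model so that $\Sigma_{\widetilde X}$ factors nicely, reducing the bound on its smallest eigenvalue to three easy-to-handle pieces: the spectra of the within-layer precisions $\Theta^1$, $\Theta^2$, and the operator norm of a simple block-triangular ``propagation'' matrix built from $B^{12}$.

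\textbf{Step 1: Triangular factorization of $\Sigma_{\widetilde X}$.} Since $X^2 = (B^{12})^\top X^1 + \epsilon^2$ with $\epsilon^2 \perp X^1$, I would write $\widetilde X = L \widetilde Z$, where $\widetilde Z = ((X^1)^\top, (\epsilon^2)^\top)^\top$ and
\begin{equation*}
L = \begin{bmatrix} I_{p_1} & 0 \\ (B^{12})^\top & I_{p_2} \end{bmatrix}, \qquad L^{-1} = \begin{bmatrix} I_{p_1} & 0 \\ -(B^{12})^\top & I_{p_2} \end{bmatrix}.
\end{equation*}
Because $\mathrm{Cov}(\widetilde Z) = \mathrm{blkdiag}(\Sigma^1, \Sigma^2)$, this gives $\Sigma_{\widetilde X} = L\,\mathrm{blkdiag}(\Sigma^1,\Sigma^2)\,L^\top$, and thus
\begin{equation*}
\Theta_{\widetilde X} = L^{-\top} D_\Theta L^{-1}, \qquad D_\Theta := \mathrm{blkdiag}(\Theta^1, \Theta^2).
\end{equation*}

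\textbf{Step 2: Sub-multiplicativity.} Since $\Lambda_{\min}(\Sigma_{\widetilde X}) = 1/\Lambda_{\max}(\Theta_{\widetilde X})$, I would upper bound $\Lambda_{\max}(\Theta_{\widetilde X})$ using sub-multiplicativity of the spectral norm:
\begin{equation*}
\Lambda_{\max}(\Theta_{\widetilde X}) \le \|L^{-\top}\|_{\mathrm{op}} \cdot \|D_\Theta\|_{\mathrm{op}} \cdot \|L^{-1}\|_{\mathrm{op}} = \|L^{-1}\|_{\mathrm{op}}^2 \cdot \|D_\Theta\|_{\mathrm{op}}.
\end{equation*}

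\textbf{Step 3: Bound the middle factor via the infinity norm.} For a symmetric matrix $A$, $\|A\|_{\mathrm{op}} \le \|A\|_\infty$, so $\|\Theta^k\|_{\mathrm{op}} \le \mathbf{v}(\Theta^k)$ for $k=1,2$ and therefore $\|D_\Theta\|_{\mathrm{op}} = \max(\|\Theta^1\|_{\mathrm{op}}, \|\Theta^2\|_{\mathrm{op}}) \le \max(\mathbf{v}(\Theta^1),\mathbf{v}(\Theta^2))$. Under the diagonal dominance assumption (C2) for $\Theta^k$, each $\mathbf{v}(\Theta^k)\ge |\Theta^k_{ii}|\ge 1$ (the diagonals of a precision matrix whose inverse is a correlation-like matrix exceed one), so $\max(\mathbf{v}(\Theta^1),\mathbf{v}(\Theta^2))\le \mathbf{v}(\Theta^1)\mathbf{v}(\Theta^2)$, yielding the product form appearing in the proposition.

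\textbf{Step 4: Bound $\|L^{-1}\|_{\mathrm{op}}$ via Hölder/AM--GM.} Reading off row and column sums of $L^{-1}$ directly from its explicit block form, the non-trivial row sums come from the lower block, which contribute $1 + \sum_i|B^{12}_{i,j}|$ per row $j$, giving $\|L^{-1}\|_\infty = 1 + \mathbf{v}_{in}(B^{12})$; similarly, the non-trivial column sums come from the left block giving $\|L^{-1}\|_1 = 1 + \mathbf{v}_{out}(B^{12})$. Then
\begin{equation*}
\|L^{-1}\|_{\mathrm{op}} \le \sqrt{\|L^{-1}\|_1\,\|L^{-1}\|_\infty} = \sqrt{(1+\mathbf{v}_{in}(B^{12}))(1+\mathbf{v}_{out}(B^{12}))} \le 1 + \tfrac{1}{2}\bigl(\mathbf{v}_{in}(B^{12})+\mathbf{v}_{out}(B^{12})\bigr),
\end{equation*}
where the last step is AM--GM. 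Combining Steps 2--4 and inverting gives the stated lower bound. The main subtlety (and the only non-routine step) is Step~3: realizing that the diagonal dominance assumption on $\Theta^k$ legitimizes replacing $\max$ by the product, so that the bound naturally acquires the symmetric product form in the proposition rather than the tighter but asymmetric $\max$ form.
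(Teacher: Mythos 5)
Your proof is correct and follows essentially the same route as the paper's: the triangular factorization $\Sigma_{\widetilde X} = L\,\mathrm{blkdiag}(\Sigma^1,\Sigma^2)\,L^\top$ (your $L^{-1}$ is the paper's $P$), sub-multiplicativity of the operator norm, the inequality $\|A\|\le\sqrt{\|A\|_1\|A\|_\infty}$, and AM--GM. The only delicate point, which you rightly flag and the paper glosses over entirely, is replacing $\max\left(\|\Theta^1\|,\|\Theta^2\|\right)$ by the product $\mathbf{v}(\Theta^1)\,\mathbf{v}(\Theta^2)$; this requires each factor to be at least one, a normalization of the variances that neither your argument (the diagonals of a precision matrix need not exceed one without scaling assumptions) nor the paper fully establishes.
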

\begin{proof}
From the structural equations of a multi-layered graph introduced in Section~\ref{sec:set-up}, and setting $\epsilon^1:= X^1$, we can write 
\begin{equation}\label{eqn:invert-SEM}
\left[\begin{array}{c}\epsilon^1 \\ \epsilon^2 \end{array} \right] = 
\left[\begin{array}{cc}I & 0 \\ -(B^{12})' & I \end{array}\right] 
\left[\begin{array}{c} X^1 \\ X^2 \end{array} \right]
\end{equation}
Define $P = [I, 0; -(B^{12})', 0]$. Then,  $P \widetilde{X}$ is a centered Gaussian random vector with a block diagonal variance-covariance matrix  $diag(\Sigma^1, \Sigma^2)$. Hence, the concentration matrix of $\widetilde{X}$ takes the form
\begin{equation*}
\Theta_{\tilde{X}} = \Sigma^{-1}_{\tilde{X}} = \left[ \begin{array}{cc}I & -B^{12} \\ 0 & I\end{array} \right] \left[ \begin{array}{cc} \Theta^1 & 0 \\ 0 & \Theta^2 \end{array} \right] \left[\begin{array}{cc}I & 0 \\ -(B^{12})' & 0  \end{array} \right]
\end{equation*}
This leads to an upper bound  
\begin{equation*}
\| \Theta_{\widetilde{X}} \| \le \| \Theta^1 \| \| \Theta^2 \| \|P\|^2
\end{equation*}
The result then follows by using the matrix norm inequality $\|A\| \le \sqrt{\|A\|_1 \|A\|_{\infty}}$ \citep{golub2012matrix}, where $\|A\|_1$ and $\|A\|_{\infty}$ denote the maximum absolute row and column sums of $A$, and the fact that $\Lambda_{\min}(\Sigma_{\tilde{X}}) = \| \Theta_{\tilde{X}} \|^{-1}$. 
\end{proof}

The three components in the lower bound demonstrate how the structure of Layers 1 and 2 impact the accurate estimation of directed edges to Layer 3. Essentially, the bound suggests that accurate estimation is possible when the total effect (incoming and outgoing edges) at every node of each of the three subgraphs is not very large. 

This is inherently related to the identifiability of the multi-layered graphical models and our ability to distinguish between the parents from different layers. For instance, if a node in Layer $2$ has high partial correlation with nodes of Layer $1$, i.e., a node in Layer 2 has parents from many nodes in Layer 1 and yields a large $\mathbf{v}_{in}(B^{12})$; or similarly, a node in Layer $1$ is the parent of many nodes in Layer $2$, yielding a large $\mathbf{v}_{out}(B^{12})$. In either case, we end up with some large lower bound for $\Lambda_{\min}(\Sigma_{\widetilde{X}})$ and it can be hard to distinguish Layer $1 \rightarrow 3$ edges from Layer $2 \rightarrow 3$ edges.

For a general $M$-layer network, the argument in the proof of Proposition \ref{prop:RE-bound} can be genaralized in a straightforward manner, with a modified $P$ of the form 
\begin{equation*}
P = \left[\begin{array}{cccc}
I & 0 & \ldots & 0 \\ 
-(B^{12})' & I & \ldots & 0 \\
\vdots & \vdots & \vdots & 0 \\
-(B^{1, M-1})' & -(B^{2, M-1})' & \ldots & I
 \end{array} \right]
\end{equation*}
and combining node capacities for different layers. The conclusion is qualitatively similar, i.e., the estimation error of a $M$-layer graphical model is small as long as the maximum node capacities of different inter-layer and intra-layer subgraphs are not too large.

\section{Performance Evaluation and Implementation Issues} \label{sec: Implementation}

In this section, we present selected simulation results for our proposed method, in two-layer and three-layer network settings. Further, we introduce some acceleration techniques that can speed up the algorithm and reduce computing time. 

\subsection{\normalsize Simulation Results}
For the 2-layer network, as mentioned in Section \ref{sec:set-up}, since the main target of our proposed algorithm is to provide estimates for $B^*$ and $\Theta_\epsilon^*$ (since $\Theta_X$ can be estimated separately), we only present evaluation results for $B^*$ and $\Theta_\epsilon^*$ estimates. Similarly, for the three-layer network, we only present evaluation results involving Layer 3, using the notation in Section \ref{sec:identifiability}, that is, $B^*_{XZ},B^*_{YZ}$ and $\Theta^*_{\epsilon,Z}$ estimates, which is sufficient to show how our proposed algorithm works in the presence of a 
``super" - layer, taking  advantange of the separability of the log-likelihood. 

\textbf{\textit{2-layered Network.}} To compare the proposed method with the most recent methodology that also provides estimates for the regression parameters and the preccision matrix (CAPME, \cite{cai2012covariate}), we use the exact same model settings that have been used in that paper. Specifically, we consider the following two models: 
\begin{itemize}
\item Model A: Each entry in $B^*$ is nonzero with probability $5/p_1$, and off-diagonal entries for $\Theta_\epsilon^*$ are nonzero with probability $5/p_2$. 
\item Model B: Each entry in $B^*$ is nonzero with probability $30/p_1$, and off-diagonal entries for $\Theta_\epsilon^*$ are nonzero with probability $5/p_2$. 
\end{itemize}
As in \citet{cai2012covariate}, for both models, nonzero entries of $B^*$ and $\Theta_\epsilon^*$ are generated from $\mathsf{Unif}\left[(-1,-0.5)\cup(0.5,1)\right]$, and diagonals of $\Theta_\epsilon^*$ are set identical such that the condition number of $\Theta_\epsilon^*$ is $p_2$. 
\begin{table}[H]
\centering
\caption{Model Dimensions for Model A and B}
\begin{tabular}{cc}
\hline
    	& $(p_1,p_2,n)$ \\ \hline
Model A	 & $p_1=30, p_2=60, n=100$ \\
																		& $p_1=60, p_2=30, n=100$ \\
																		& $p_1=200, p_2=200, n=150$ \\														& $p_1=300, p_2=300, n=150$ \\ 
\hline
Model B& $p_1=200,p_2=200,n=100$ \\
																		& $p_1= 200, p_2=200, n=200$ \\
\hline
\end{tabular}
\end{table}

To evaluate the selection performance of the algorithm, we use sensitivity (SEN), specificity (SPE) and Mathews Correlation Coefficient (MCC) as criteria:
\begin{equation*}
\textrm{SEN} = \frac{\textrm{TN}}{\textrm{TN}+\textrm{FP}},\quad \textrm{SPE} = \frac{\textrm{TP}}{\textrm{TP}+\textrm{FN}}, \quad \textrm{MCC} = \frac{\textrm{TP}\times\textrm{TN}-\textrm{FP}\times \textrm{FN}}{\sqrt{(\textrm{TP}+\textrm{FP})(\textrm{TP}+\textrm{FN})(\textrm{TN}+\textrm{FP})(\textrm{TN}+\textrm{FN})}}.
\end{equation*}
Further, to evaluate the accuracy of the magnitude of the estimates, we use the relative error in Frobenius norm:
\begin{equation*}
\textrm{rel-Fnorm} = \frac{\|\widetilde{B}-B^*\|_\textrm{F}}{\|B^*\|_\textrm{F}}\quad \text{or}\quad \frac{\|\widetilde{\Theta}_\epsilon-\Theta_\epsilon^*\|_\textrm{F}}{\|\Theta_\epsilon^*\|_\textrm{F}}.
\end{equation*}
Tables~\ref{tb:B-eval} and \ref{tb:Theta-eval} show the results for both the regression matrix and the precision matrix. For the precision matrix estimation, we compare our result with those available in \citet{cai2012covariate}, denoted as CAPME.
\begin{table}[h]
\setlength\extrarowheight{2pt}
\centering
\caption{Simulation results for regression matrix over 50 replications}\label{tb:B-eval}
\begin{tabular}{cccccc}
\specialrule{.1em}{0.1em}{0em} 
		& $(p_1,p_2,n)$ &  SEN & SPE & MCC & rel-Fnorm \\  \hline
Model A	 & (30,60,100) &  0.96(0.018) & 0.99(0.004) & 0.93(0.014) & 0.22(0.029) \\
		& (60,30,100) & 0.99(0.009) & 0.99(0.003) & 0.93(0.017) &   0.18(0.021)\\
		& (200,200,150) & 0.99(0.001) & 0.99(0.001) & 0.88(0.009) & 0.18(.007)\\
		& (300,300,150) & 1.00(0.001) & 0.99(0.001) & 0.84(0.010) &  0.21(0.007)\\ 
\hline 
Model B & (200,200,200) & 0.970(0.004) & 0.982(0.001) & 0.927(0.002) & 0.194 (0.009)\\ 
        & (200,200,100) & 0.32(0.010) & 0.99(0.001) & 0.49(0.009) & 0.85(0.006)\\
\specialrule{.1em}{0em}{0.1em}
\end{tabular}\medskip

\caption{Simulation results for precision matrix over 50 replications}\label{tb:Theta-eval}
\begin{tabular}{ccccccc}
\specialrule{.1em}{0.1em}{0em} 
		& $(p_1,p_2,n)$ &  & SEN & SPE & MCC & rel-Fnorm \\  \hline
Model A	 & (30,60,100) &  & 0.77(0.031) & 0.92(0.007) & 0.56(0.030) & 0.51(0.017)\\
		&			   & CAPME & 0.58(0.03) & 0.89(0.01) & 0.45(0.03) &  \\ 
		& (60,30,100) &  & 0.76(0.041) & 0.89(0.015) & 0.59(0.039) & 0.49(0.014) \\
		& (200,200,150) &  & 0.78(0.019) & 0.97(0.001) & 0.55(0.012) & 0.60(0.007) \\
		& (300,300,150) &  & 0.71(0.017)& 0.98(0.001) & 0.51(0.011) & 0.59(0.005)\\ 
\hline 
Model B & (200,200,200) &  & 0.73(0.023) & 0.94(0.003) & 0.39(0.017) & 0.62(0.011)\\ 
		& 				& CAPME & 0.36(0.02) & 0.97(0.00) & 0.35(0.01) &  \\
        & (200,200,100) & 		& 0.57(0.027) & 0.44(0.007) & 0.04(0.008) & 0.84(0.002)\\
        &				& CAPME & 0.19(0.01) & 0.87(0.00) & 0.04(0.01) & \\
\specialrule{.1em}{0em}{0.1em}
\end{tabular}
\end{table}

As it can be seen from Tables~\ref{tb:B-eval} and \ref{tb:Theta-eval}, the sample size is a key factor that affects the performance. Our proposed algorithm performs extremely well in its selection properties on $B$ and strikes a good balance between sensitivity and specificity in estimating $\Theta_\epsilon$\footnote{We suggest using $\alpha=0.1$ as the FWER thresholding level. For tuning parameter selection, we suggest doing a grid search for $(\lambda_n,\rho_n)$ on $[0,0.5\sqrt{\log p_1/n}]\times[0,0.5\sqrt{\log p_2/n}]$ with BIC.}. For most settings, it provides substantial 
improvements over the CAPME estimator.

\medskip
\textbf{\textit{3-layer Network.}} For a 3-layer network, we consider the following data generation mechanism: for all three models A, B and C, each entry in $B_{XY}$ is nonzero with probability $5/p_1$, each entry in $B_{XZ}$ and $B_{YZ}$ is nonzero with probability $5/(p_1+p_2)$, and off-diagonal entries in $\Theta_{\epsilon,Z}$ are nonzero with probability $5/p_3$. Similar to the 2-layered set-up, the nonzero entries in $\Theta_{\epsilon,Z}$ are generated from $\mathsf{Unif}[(-1,-0.5)\cup(0.5,1)]$ with its diagnals set identical such that its condition number is $p_3$. For the regression matrices in the three models, nonzeros in $B_{XY}$ are generated from $\mathsf{Unif}[(-1,-0.5)\cup(0.5,1)]$, and nonzeros in $B_{XZ}$ and $B_{YZ}$ are generated from $\left\{\mathsf{Unif}[(-1,-0.5)\cup(0.5,1)]* \text{Signal.Strength}\right\}$, where the signal strength in the three models are given by 1, 1.5 and 2, respectively. More specifically, for Model A, B and C, nonzeros in $B_{XZ}$ or $B_{YZ}$ are generated from $\mathsf{Unif}[(-1,-0.5)\cup(0.5,1)]$, $\mathsf{Unif}[(-1.5,-0.75)\cup(0.75,1.5)]$ and $\mathsf{Unif}[(-2,-1)\cup(1,2)]$, respectively. 
\begin{table}[H]
\setlength\extrarowheight{2pt}
\centering
\caption{Model Dimensions and Signal Strength for Model A, B and C}
\begin{tabular}{ccc}
\hline
 & Layer 3 Signal.Strength & $(p_1,p_2,p_3,n)$ \\  \hline
 Model A & $1$ & (50,50,50,200) \\
 Model B &  $1.5$ & (50,50,50,200) \\
 Model C & $2$ & (50,50,50,200) \\
		 &  	& (20,80,50,200) \\
		 &  	& (80,20,50,200) \\
		 &  	& (100,100,100,200) \\
\hline
\end{tabular}
\end{table}\medskip

As mentioned in the beginning of this subsection, we only evaluate the algorithm's performance on $B_{XZ},B_{YZ}$ and $\Theta_{\epsilon,Z}$. 

\begin{table}[h]
\setlength\extrarowheight{2pt}
\centering
\caption{Simulation results for regression matrix $B_{XZ}$ over 50 replications}\label{tb:Bxz-eval}
\begin{tabular}{cccccc}
\specialrule{.1em}{0.1em}{0em} 
		& $(p_1,p_2,p_3,n)$ &  SEN & SPE & MCC & rel-Fnorm \\  \hline
Model A	 & (50,50,50,200) & 0.51(0.065) & 0.99(0.001)& 0.69(0.049) & 0.68(0.050) \\
Model B & (50,50,50,200)  & 0.85(0.043) & 0.99(0.001) & 0.898(0.025) & 0.36(0.056) \\
Model C & (50,50,50,200) &  0.97(0.018) & 0.99(0.002) & 0.96(0.016) & 0.16(0.040) \\
		& (20,80,50,200) & 0.55(0.078) & 0.99(0.001) & 0.72(0.059) & 0.63(0.066) \\
		& (80,20,50,200) & 0.99(0.006) & 0.99(0.002) & 0.94(0.017) & 0.076(0.032) \\ 
		& (100,100,100,200)& 1.00(0.001) & 0.99(0.001) & 0.87(0.016) & 0.07(0.007)\\
\specialrule{.1em}{0em}{0.1em}
\end{tabular}\medskip

\caption{Simulation results for regression matrix $B_{YZ}$ over 50 replications}\label{tb:Byz-eval}
\begin{tabular}{cccccc}
\specialrule{.1em}{0.1em}{0em} 
		& $(p_1,p_2,p_3,n)$ &  SEN & SPE & MCC & rel-Fnorm \\  \hline
Model A	 & (50,50,50,200) &  0.53(0.051) & 1.00(0.000) & 0.72(0.036) & 0.65(0.041) \\
Model B & (50,50,50,200)  & 0.90(0.033) & 1.00(0.000) & 0.95(0.019) & 0.25(0.049) \\
Model C & (50,50,50,200) &  0.98(0.013) & 1.00(0.000) & 0.99(0.007) & 0.12(0.042) \\
		& (20,80,50,200) & 0.95(0.013) & 1.00(0.000) & 0.98(0.007) & 0.19(0.030) \\
		& (80,20,50,200) & 0.96(0.027) & 0.99(0.001) & 0.97(0.022) & 0.14(0.063) \\
		& (100,100,100,200) & 1.00(0.000) & 1.00(0.000) & 0.99(0.002) & 0.025(0.002)  \\
\specialrule{.1em}{0em}{0.1em}
\end{tabular}\medskip

\caption{Simulation results for regression matrix $\Theta_{\epsilon,Z}$ over 50 replications}\label{tb:ThetaZ-eval}
\begin{tabular}{cccccc}
\specialrule{.1em}{0.1em}{0em} 
		& $(p_1,p_2,p_3,n)$ &  SEN & SPE & MCC & rel-Fnorm \\  \hline
Model A	 & (50,50,50,200) & 0.69(0.044) & 0.638(0.032) & 0.20(0.036) & 0.82(0.017) \\
Model B & (50,50,50,200)  & 0.77(0.050) & 0.82(0.036) & 0.42(0.071) & 0.68(0.040) \\
Model C & (50,50,50,200) &  0.88(0.041) & 0.91(0.019) & 0.63(0.059) & 0.56(0.034) \\
		& (20,80,50,200) & 0.72(0.041) & 0.80(0.028) & 0.36(0.050) & 0.72(0.021) \\
		& (80,20,50,200) &  0.90(0.028) & 0.92(0.011) & 0.68(0.039) & 0.58(0.018) \\ 
		& (100,100,100,200) & 0.96(0.014) & 0.96(0.003) & 0.68(0.016) & 0.049(0.010)\\
\specialrule{.1em}{0em}{0.1em}
\end{tabular}
\end{table}\medskip
Based on the results shown in Tables~\ref{tb:Bxz-eval}, \ref{tb:Byz-eval} and \ref{tb:ThetaZ-eval}, the signal strength across layers affects the accuracy of the estimation, which is in accordance with what has been discussed regarding identifiability. Overall, the MLE estimator performs satisfactorily across a fairly wide range of settings and in many cases achieving very high values for the MCC criterion.

\subsubsection{Simulation Results for non-Gaussian data}

In many applications, the data may not be exactly Gaussian, but approximately Gaussian. Next, we present selected  simulation results when the data comes from some distribution that deviates from Gaussian. Specifically, we consider two types of deviations based on the following transformations: (i) a truncated empirical cumulative distribution function and (ii) a shrunken empirical cumulative distribution functions as discussed in \citet{huge}. In both simulation settings, we consider Model A with $(p_1,p_2,n)=(30,60,100)$ under the two-layer setting, and the transformation is applied to errors in Layer 2. Table~\ref{tb:npn} shows the simulation results for these two scenarios over 50 replications.
\begin{table}
\caption{Simulation results for $B$ and $\Theta_{\epsilon}$ over 50 replications under npn transformation}\label{tb:npn}
\begin{tabular}{c|c|cccc}
\specialrule{.1em}{0.1em}{0em} 
Setting & Parameter& SEN & SPE & MCC & rel-Fnorm \\  \hline
Model A $(30,60,100)$  & $B$ & 0.96(0.017) & 0.99(0.003) & 0.94(0.012) &   0.20(0.028) \\ 
shrunken & $\Theta_\epsilon$ & 0.76(0.031) & 0.91(0.008) & 0.55(0.030) & 0.51(0.019) \\ \hline
Model A $(30,60,100)$  & $B$ & 0.96(0.021) & 0.98(0.004) & 0.93(0.015) & 0.21(0.034) \\ 
truncation & $\Theta_\epsilon$ & 0.76(0.033) & 0.92(0.008) & 0.56(0.035) & 0.52(0.023) \\
\specialrule{.1em}{0em}{0.1em}
\end{tabular}
\end{table}

Based on the results in Table~\ref{tb:npn}, relatively small deviatiosn from the Gaussian distribution does not affect the performace of the MLE estimates under the examined settings that are comparable to those obtained with Gaussian distributed data.

\subsection{A comparison with the two-step estimator in \citet{cai2012covariate}}

Next,  we present a comparison between the MLE estimator and the two-step estimator of \citet{cai2012covariate}. Specifically, we use the CAPME estimate as an initializer for the MLE procedure and examine its evolution over successive iterations.  We evaluate the value of the objective function at each iteration, and also compare it to the value of the objective function evaluated at our initializer (screening $+$ Lasso/Ridge) and the estimates afterwards. For illustration purposes, we only show the results for a single relaization under Model A with $p_1=30,p_2=60,n=100$, although similar results were obtained in other simulation settings. Figure~\ref{fig:comparison} shows the value of the objective function as a function of the iteration under both initialization procedures, while Table~\ref{tb:comparison} shows how the cardinality of the estimates changes over iterations for both initializers. It can be seen that the iterative MLE algorithm significantly improves the value of the objective function over the CAPME initialization and also that the set of directed and undirected edges stabilizes after a couple iterations.
\begin{figure}[htbp]
\centering
\caption{Comparison between Cai's estimate and our estimate}\label{fig:comparison}
\includegraphics[scale=0.6]{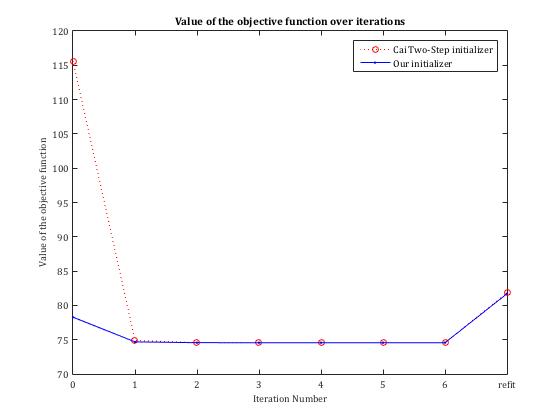}
\end{figure} 
\begin{table}[h]
\centering
\caption{Change in cardinality over iterations for $B$ and $\Theta_\epsilon$} \label{tb:comparison}
\begin{tabular}{cccccccccc}
\specialrule{.1em}{0.1em}{0em} 
	& 	 & 	0 & 1 & 2 & 3 & 4 & 5 & 6 & refit \\ \hline
Our  initializer & $\widehat{B}^{(k)}$ & 275 & 275& 275& 275& 275 & 275 & 275 &  275 \\ 
	& $\widehat{\Theta}_\epsilon^{(k)}$ & 282 & 255&  247 & 247 & 248 & 248  & 248 & 260\\
CAPME initializer & $\widehat{B}^{(k)}$ &  433 & 275& 275& 275& 275 & 275 & 275 &  275 \\
					& $\widehat{\Theta}_\epsilon^{(k)}$ & 979 & 267  & 250 & 249 & 249 & 248 & 248 & 260 \\
\specialrule{.1em}{0.1em}{0em} 
\end{tabular} 
\end{table}

Based on Figure~\ref{fig:comparison} and Table~\ref{tb:comparison}, we notice that Cai et. al's two-step estimator yields larger value of the objective function compared with our initializer that is obtained through screening followed by Lasso. However, over subsequent iterations, both initializers yield the same value in the objective function, which keeps decreasing according to the nature of block-coordinate descent.  

\subsection{\normalsize Implementation issues}

Next, we introduce some acceleration techniques for the MLE algorithm aiming to reduce  computing time, yet maintaining estimation accuracy over iterations.

\medskip
\textbf{\textit{$\bs{(p_2+1)}$-block update.}} In Section \ref{sec:methodology}, we update $B$ and $\Theta_\epsilon$ by (\ref{eqn:updateB}) and (\ref{eqn:updateTheta}), respectively, and within each iteration, the updated $B$ is obtained by an application of cyclic $p_2$-block coordinate descent with respect to each of its columns until convergence. As shown in Section \ref{sec:convergence}, the outer 2-block update guarantees the MLE iterative algorithm to converge to a stationary point. However in practice, we can speed up the algorithm by updating $B$ without waiting for it to reach the minimizer for every iteration other than the first one. More precisely, for the alternating search step, we take the following steps when actually implementing the proposed algorithm with initializer $\widehat{B}^{(0)}$ and $\widehat{\Theta}_\epsilon^{(0)}$: 
\begin{itemize}
\item[--] Iteration 1: update $B$ and $\Theta_\epsilon$ as follows, respectively:
\begin{equation*}
\widehat{B}^{(1)} = \argmin\limits_{B\in \mathcal{B}_1\times \cdots\times \mathcal{B}_{p_2}} \left\{ \frac{1}{n}\sum_{i=1}^{p_2}\sum_{j=1}^{p_2}(\sigma_{\epsilon}^{ij})^{(0)}(Y_i-XB_i)^\top (Y_j - XB_j) + \lambda_n\sum_{j=1}^{p_2}\|B_j\|_1 \right\},
\end{equation*}
and
\begin{equation*}
 \widehat{\Theta}_\epsilon^{(1)} = \argmin\limits_{\Theta_\epsilon\in\mathbb{S}_{++}^{p_2\times p_2}} \left\{
 \log\det\Theta_\epsilon - \text{tr}(\widehat{S}^{(1)}\Theta_\epsilon) + \rho_n\|\Theta_\epsilon\|_{1,\text{off}}
 \right\},
 \end{equation*}
where $\widehat{S}^{(1)}$ is the sample covariance matrix of $\widehat{E}^{(1)}\equiv Y- X\widehat{B}^{(1)}$. 
\item[--] For iteration $k\geq 2$, while not converged: 
\begin{itemize}
\item[$\cdot$] For $j=1,\cdots,p_2$, update $B_j$ once by:
\begin{equation*}
\widehat{B}_j^{(k)} = \argmin\limits_{B_j\in\mathcal{B}_j} \left\{ \frac{(\sigma^{jj}_\epsilon)^{(k-1)}}{n}\|Y_j+r_j^{{(k)}}-XB_j\|_2^2 + \lambda_n\|B_j\|_1 \right\},
\end{equation*}
where 
\begin{equation}\label{eqn:updater}
r_j^{(k)} = \frac{1}{(\sigma^{jj}_\epsilon)^{(k-1)}}\left[ \sum_{i=1}^{j-1} (\sigma^{ij}_\epsilon)^{(k-1)}(Y_i-X\widehat{B}_i^{(k)}) + \sum_{i=j+1}^{p_2} (\sigma^{ij}_\epsilon)^{(k-1)} (Y_i-X\widehat{B}_i^{(k-1)})\right].
\end{equation}
\item[$\cdot$] Update $\Theta_\epsilon$ by:
\begin{equation*}
 \widehat{\Theta}_\epsilon^{(k)} = \argmin\limits_{\Theta_\epsilon\in\mathbb{S}_{++}^{p_2\times p_2}} \left\{
 \log\det\Theta_\epsilon - \text{tr}(\widehat{S}^{(k)}\Theta_\epsilon) + \rho_n\|\Theta_\epsilon\|_{1,\text{off}}
 \right\},
 \end{equation*}
 where $\widehat{S}^{(k)}$ is defined similarly. 
\end{itemize}
\end{itemize}
Intuitively, for the first iteration, we wait for the algorithm to complete the whole cyclic $p_2$ block-coordinate descent step, as the first iteration usually achieves a big improvement in the value of the objective function compared to the initialization values, as depicted in Figure~\ref{fig:comparison}.
However, in subsequent iterations, the changes in the objective function become relatively small, so we do $(p_2+1)$ successive block-updates in every iteration, and start to update $\Theta_\epsilon$ once a full $p_2$ block update in $B$ is completed, instead of waiting for the update in $B$ proceeds cyclically until convergence. In practice, this way of updating $B$ and $\Theta_\epsilon$ leads to faster convergence in terms of total computing time, yet yields the same estimates compared with the exact $2$-block update shown in Section \ref{sec:methodology}.   \\

\medskip
\textbf{\textit{Parallelization.}} A number of steps of the MLE algorithm is parallelizable. In the screening step, when applying the de-biased Lasso procedure \citep{javanmard2014confidence} to obtain the $p$-values, we need to implement $p_2$ separate regressions, which can be distributed to different compute nodes and carried out in parallel. So does the refitting step, in which we refit each column in $B$ in parallel. 

Moreover, according to \citet{bradley2011parallel,richtarik2012parallel,scherrer2012scaling} and a series of similar studies, though the block update in the alternating search step is supposed to be carried out sequentially, we can implement the update in parallel to speed up convergence, yet empirically yield identical estimates. This parallelization can be applied to either the minimization with respect to $B$ within the 2-block update method, or the minimization with respect to each column of $B$ for the $(p_2+1)$-block update method. Either way, $r_j^{(k)}$ in (\ref{eqn:updater}) is substituted by
\begin{equation*}
r_{j,\text{parallel}}^{(k)} = \frac{1}{(\sigma^{jj}_\epsilon)^{(k-1)}}\sum_{i\neq j}^{p_2} (\sigma_\epsilon^{ij})^{(k-1)}(Y_i-X\widehat{B}_i^{(k-1)}),
\end{equation*} 
which is not updated until we have updated $B_j$'s once for all $j=1,\cdots,p_2$ in parallel.\\

\medskip
The table below shows the elapsed time for carrying out our proposed algorithm using 2-block/$(p_2+1)$ -block update with/without parallelization, under the simulation setting where we have $p_1=p_2=200,n=150$. The screening step and refitting step are both carried out in parallel for all four different implementations\footnote{For parallelization, we distribute the computation on 8 cores.}. 
\begin{table}[H]
\centering
\caption{Computing time with different update methods}
\begin{tabular}{c|cccc}
\hline
	& 2-block & $(p_2+1)$-block & 2-block in parallel & $(p_2+1)$-block in parallel \\
\hline
elasped time (sec) & 5074 & 2556 & 848 & 763 \\
\hline 
\end{tabular}
\end{table}

As shown in the table, using $(p_2+1)$-block update and parallelization both can speed up convergence and reduce computing time, which takes only 1/7 of the computing time compared with using 2-block update without parallelization.

\begin{remark}
The total computing time depends largely on the number of bootstrapped samples we choose for the stability selection step. For the above displayed results, we used 50 bootstrapped samples to obtain the weight matrix. Nevertheless, one can select the number of bootstrap samples judiciously and reduce them if performance would  not be seriously impacted.
\end{remark}

\newpage
\bibliographystyle{plainnat}
\bibliography{00-bib.bib}

\begin{thebibliography}{31}
\providecommand{\natexlab}[1]{#1}
\providecommand{\url}[1]{\texttt{#1}}
\expandafter\ifx\csname urlstyle\endcsname\relax
  \providecommand{\doi}[1]{doi: #1}\else
  \providecommand{\doi}{doi: \begingroup \urlstyle{rm}\Url}\fi

\bibitem[Andersson et~al.(2001)Andersson, Madigan, and
  Perlman]{andersson2001alternative}
Steen~A Andersson, David Madigan, and Michael~D Perlman.
\newblock Alternative markov properties for chain graphs.
\newblock \emph{Scandinavian journal of statistics}, 28\penalty0 (1):\penalty0
  33--85, 2001.

\bibitem[Basu and Michailidis(2015)]{basu2015estimation}
Sumanta Basu and George Michailidis.
\newblock Regularized estimation in sparse high-dimensional time series models.
\newblock \emph{Annals of Statistics}, 43\penalty0 (4):\penalty0 1535–--1567,
  2015.

\bibitem[Bradley et~al.(2011)Bradley, Kyrola, Bickson, and
  Guestrin]{bradley2011parallel}
Joseph~K Bradley, Aapo Kyrola, Danny Bickson, and Carlos Guestrin.
\newblock Parallel coordinate descent for l1-regularized loss minimization.
\newblock \emph{arXiv preprint arXiv:1105.5379}, 2011.

\bibitem[B{\"u}hlmann and Van De~Geer(2011)]{buhlmann2011statistics}
Peter B{\"u}hlmann and Sara Van De~Geer.
\newblock \emph{Statistics for high-dimensional data: methods, theory and
  applications}.
\newblock Springer Science \& Business Media, 2011.

\bibitem[Cai et~al.(2012)Cai, Li, Liu, and Xie]{cai2012covariate}
T~Tony Cai, Hongzhe Li, Weidong Liu, and Jichun Xie.
\newblock Covariate-adjusted precision matrix estimation with an application in
  genetical genomics.
\newblock \emph{Biometrika}, page ass058, 2012.

\bibitem[Cai et~al.(2011)Cai, Liu, and Luo]{cai2011constrained}
Tony Cai, Weidong Liu, and Xi~Luo.
\newblock A constrained $\ell_1$ minimization approach to sparse precision
  matrix estimation.
\newblock \emph{Journal of the American Statistical Association}, 106\penalty0
  (494):\penalty0 594--607, 2011.

\bibitem[Candes and Tao(2007)]{candes2007dantzig}
Emmanuel Candes and Terence Tao.
\newblock The dantzig selector: statistical estimation when p is much larger
  than n.
\newblock \emph{Annals of Statistics}, pages 2313--2351, 2007.

\bibitem[Drton and Perlman(2008)]{drton2008sinful}
Mathias Drton and Michael~D Perlman.
\newblock A sinful approach to gaussian graphical model selection.
\newblock \emph{Journal of Statistical Planning and Inference}, 138\penalty0
  (4):\penalty0 1179--1200, 2008.

\bibitem[Friedman et~al.(2008)Friedman, Hastie, and
  Tibshirani]{friedman2008sparse}
Jerome Friedman, Trevor Hastie, and Robert Tibshirani.
\newblock Sparse inverse covariance estimation with the graphical lasso.
\newblock \emph{Biostatistics}, 9\penalty0 (3):\penalty0 432--441, 2008.

\bibitem[Frydenberg(1990)]{frydenberg1990chain}
Morten Frydenberg.
\newblock The chain graph markov property.
\newblock \emph{Scandinavian Journal of Statistics}, pages 333--353, 1990.

\bibitem[Golub and Van~Loan(2012)]{golub2012matrix}
Gene~H Golub and Charles~F Van~Loan.
\newblock \emph{Matrix computations}, volume~3.
\newblock JHU Press, 2012.

\bibitem[Javanmard and Montanari(2014)]{javanmard2014confidence}
Adel Javanmard and Andrea Montanari.
\newblock Confidence intervals and hypothesis testing for high-dimensional
  regression.
\newblock \emph{Journal of Machine Learning Research}, 15\penalty0
  (1):\penalty0 2869--2909, 2014.

\bibitem[Lauritzen(1996)]{lauritzen1996graphical}
Steffen~L Lauritzen.
\newblock \emph{Graphical models}.
\newblock Oxford University Press, 1996.

\bibitem[Lauritzen and Wermuth(1989)]{lauritzen1989graphical}
Steffen~L Lauritzen and Nanny Wermuth.
\newblock Graphical models for associations between variables, some of which
  are qualitative and some quantitative.
\newblock \emph{The Annals of Statistics}, pages 31--57, 1989.

\bibitem[Lee and Liu(2012)]{lee2012simultaneous}
Wonyul Lee and Yufeng Liu.
\newblock Simultaneous multiple response regression and inverse covariance
  matrix estimation via penalized gaussian maximum likelihood.
\newblock \emph{Journal of multivariate analysis}, 111:\penalty0 241--255,
  2012.

\bibitem[Lehmann and Casella(1998)]{lehmann1998theory}
Erich~Leo Lehmann and George Casella.
\newblock \emph{Theory of point estimation}, volume~31.
\newblock Springer Science \& Business Media, 1998.

\bibitem[Loh and Wainwright(2012)]{loh2011high}
Po-Ling Loh and Martin~J Wainwright.
\newblock High-dimensional regression with noisy and missing data: provable
  guarantees with nonconvexity.
\newblock \emph{Annals of Statistics}, 40\penalty0 (3):\penalty0 1637--1664,
  2012.

\bibitem[Meinshausen and B{\"u}hlmann(2006)]{meinshausen2006high}
Nicolai Meinshausen and Peter B{\"u}hlmann.
\newblock High-dimensional graphs and variable selection with the lasso.
\newblock \emph{Annals of Statistics}, pages 1436--1462, 2006.

\bibitem[Meinshausen and B{\"u}hlmann(2010)]{meinshausen2010stability}
Nicolai Meinshausen and Peter B{\"u}hlmann.
\newblock Stability selection.
\newblock \emph{Journal of the Royal Statistical Society: Series B (Statistical
  Methodology)}, 72\penalty0 (4):\penalty0 417--473, 2010.

\bibitem[Ravikumar et~al.(2011)Ravikumar, Wainwright, Raskutti, Yu,
  et~al.]{ravikumar2011high}
Pradeep Ravikumar, Martin~J Wainwright, Garvesh Raskutti, Bin Yu, et~al.
\newblock High-dimensional covariance estimation by minimizing ℓ1-penalized
  log-determinant divergence.
\newblock \emph{Electronic Journal of Statistics}, 5:\penalty0 935--980, 2011.

\bibitem[Richt{\'a}rik and Tak{\'a}{\v{c}}(2012)]{richtarik2012parallel}
Peter Richt{\'a}rik and Martin Tak{\'a}{\v{c}}.
\newblock Parallel coordinate descent methods for big data optimization.
\newblock \emph{Mathematical Programming}, pages 1--52, 2012.

\bibitem[Rothman et~al.(2010)Rothman, Levina, and Zhu]{rothman2010sparse}
Adam~J Rothman, Elizaveta Levina, and Ji~Zhu.
\newblock Sparse multivariate regression with covariance estimation.
\newblock \emph{Journal of Computational and Graphical Statistics}, 19\penalty0
  (4):\penalty0 947--962, 2010.

\bibitem[Sachs et~al.(2005)Sachs, Perez, Pe'er, Lauffenburger, and
  Nolan]{sachs2005causal}
Karen Sachs, Omar Perez, Dana Pe'er, Douglas~A Lauffenburger, and Garry~P
  Nolan.
\newblock Causal protein-signaling networks derived from multiparameter
  single-cell data.
\newblock \emph{Science}, 308\penalty0 (5721):\penalty0 523--529, 2005.

\bibitem[Scherrer et~al.(2012)Scherrer, Halappanavar, Tewari, and
  Haglin]{scherrer2012scaling}
Chad Scherrer, Mahantesh Halappanavar, Ambuj Tewari, and David Haglin.
\newblock Scaling up coordinate descent algorithms for large l1 regularization
  problems.
\newblock \emph{arXiv preprint arXiv:1206.6409}, 2012.

\bibitem[Sobel(2000)]{sobel2000causal}
Michael~E Sobel.
\newblock Causal inference in the social sciences.
\newblock \emph{Journal of the American Statistical Association}, 95\penalty0
  (450):\penalty0 647--651, 2000.

\bibitem[Tseng(2001)]{tseng2001convergence}
Paul Tseng.
\newblock Convergence of a block coordinate descent method for
  nondifferentiable minimization.
\newblock \emph{Journal of optimization theory and applications}, 109\penalty0
  (3):\penalty0 475--494, 2001.

\bibitem[Van~de Geer et~al.(2014)Van~de Geer, B{\"u}hlmann, Ritov, Dezeure,
  et~al.]{van2014asymptotically}
Sara Van~de Geer, Peter B{\"u}hlmann, Ya’acov Ritov, Ruben Dezeure, et~al.
\newblock On asymptotically optimal confidence regions and tests for
  high-dimensional models.
\newblock \emph{Annals of Statistics}, 42\penalty0 (3):\penalty0 1166--1202,
  2014.

\bibitem[Wang et~al.(2007)Wang, Satuluri, and Parthasarathy]{wang2007local}
Chao Wang, Venu Satuluri, and Srinivasan Parthasarathy.
\newblock Local probabilistic models for link prediction.
\newblock In \emph{Data Mining, 2007. ICDM 2007. Seventh IEEE International
  Conference on}, pages 322--331. IEEE, 2007.

\bibitem[Yang et~al.(2014)Yang, Baker, Ravikumar, Allen, and
  Liu]{yang2014mixed}
Eunho Yang, Yulia Baker, Pradeep Ravikumar, Genevera Allen, and Zhandong Liu.
\newblock Mixed graphical models via exponential families.
\newblock In \emph{Proceedings of the Seventeenth International Conference on
  Artificial Intelligence and Statistics}, pages 1042--1050, 2014.

\bibitem[Zhang and Zhang(2014)]{zhang2014confidence}
Cun-Hui Zhang and Stephanie~S Zhang.
\newblock Confidence intervals for low dimensional parameters in high
  dimensional linear models.
\newblock \emph{Journal of the Royal Statistical Society: Series B (Statistical
  Methodology)}, 76\penalty0 (1):\penalty0 217--242, 2014.

\bibitem[Zhao et~al.(2015)Zhao, Li, Liu, Roeder, Lafferty, and Wasserman]{huge}
Tuo Zhao, Xingguo Li, Han Liu, Kathryn Roeder, John Lafferty, and Larry
  Wasserman.
\newblock \emph{huge: High-Dimensional Undirected Graph Estimation}, 2015.
\newblock URL \url{http://CRAN.R-project.org/package=huge}.
\newblock R package version 1.2.7.

\end{thebibliography}

\newpage
\section{Appendix}
\subsection{\normalsize Proofs for Propositions and Auxillary Lemmas}
To prove Proposition~\ref{prop:REcondition}, we need the following two lemmas. Lemma~\ref{lemma:restateB.1} was originally provided as Lemma~B.1 in \citet{basu2015estimation}, which states that if the sample covariance matrix of $X$ satisfies the RE condition and $\Theta$ is diagonally dominant, then $(X'X/n)\otimes \Theta$ also satisfies the RE condition. Here we omit its proof and only state the main result. Lemma~\ref{lemma:RESX} verifies that with high probability, the sample covariance matrix of the design matrix $X$ satisfies the RE condition. 

\medskip
\begin{lemma}\label{lemma:restateB.1}
If $X'X/n\sim RE(\varphi^*,\phi^*)$, and $\Theta$ is diagonally dominant, that is, $\psi^i:= \sigma^{ii}-\sum_{j\neq i} \sigma^{ij}>0$ for all $i=1,2,\cdots,p_2$, where $\sigma^{ij}$ is the $ij$th entry in $\Theta$, then
\begin{equation*}
\Theta\otimes X'X/n \sim RE\left(\varphi^*\min_i \psi^i,\phi^*\max_i\psi^i\right).
\end{equation*}
\end{lemma}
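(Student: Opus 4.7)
The plan is to identify $\theta \in \mathbb{R}^{p_1 p_2}$ with $\mathrm{vec}(V)$ for a matrix $V = [V_1,\ldots,V_{p_2}] \in \mathbb{R}^{p_1 \times p_2}$ whose columns are $V_j$, so that the standard Kronecker identity gives
\begin{equation*}
\theta'\bigl(\Theta \otimes M\bigr)\theta \;=\; \sum_{i,j=1}^{p_2}\sigma^{ij}\,V_i'MV_j, \qquad M := X'X/n,
\end{equation*}
and then to collapse the double sum to a single weighted sum of individual quadratic forms $V_i'MV_i$, to which the given scalar RE condition on $M$ can be applied directly.

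For the off-diagonal contribution, since $M \succeq 0$ the Cauchy--Schwarz inequality in the $M$-semi-inner product yields $|V_i'MV_j| \le \tfrac12(V_i'MV_i + V_j'MV_j)$, and hence $\sigma^{ij}V_i'MV_j \ge -\tfrac{|\sigma^{ij}|}{2}(V_i'MV_i + V_j'MV_j)$. Summing over $i \ne j$ and using the symmetry $|\sigma^{ij}|=|\sigma^{ji}|$, both halves of the right-hand side collapse onto the same diagonal, producing
\begin{equation*}
\sum_{i,j}\sigma^{ij}V_i'MV_j \;\ge\; \sum_i\Bigl(\sigma^{ii} - \sum_{j\ne i}|\sigma^{ij}|\Bigr)V_i'MV_i \;=\; \sum_i \psi^i\,V_i'MV_i,
\end{equation*}
where each $\psi^i > 0$ under the diagonal dominance hypothesis.

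I would then apply the scalar RE bound $V_i'MV_i \ge \varphi^*\|V_i\|^2 - \phi^*\|V_i\|_1^2$ column by column. Since every $\psi^i>0$, bounding $\psi^i \ge \min_k\psi^k$ on the curvature term and $\psi^i \le \max_k\psi^k$ on the (subtracted) tolerance term, followed by the identity $\sum_i\|V_i\|^2 = \|\theta\|^2$ and the elementary inequality $\sum_i\|V_i\|_1^2 \le (\sum_i\|V_i\|_1)^2 = \|\theta\|_1^2$, delivers the target bound
\begin{equation*}
\theta'\bigl(\Theta \otimes M\bigr)\theta \;\ge\; \varphi^*\bigl(\min_i \psi^i\bigr)\|\theta\|^2 - \phi^*\bigl(\max_i\psi^i\bigr)\|\theta\|_1^2,
\end{equation*}
which is exactly the claimed RE property for $\Theta \otimes (X'X/n)$.

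The only delicate step is the sign bookkeeping at the Cauchy--Schwarz stage: absolute values of the off-diagonals arise naturally there, so the derivation matches the lemma's $\psi^i$ either under the convention that off-diagonal entries of the precision matrix share a consistent sign (as in the M-matrix regime typical for positively associated Gaussian variables) or by simply reading $\psi^i$ with absolute values on the off-diagonals; apart from this, everything else reduces to routine algebra and direct invocation of PSD-ness of $M$ together with the scalar RE hypothesis.
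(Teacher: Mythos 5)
The paper does not actually prove this lemma --- it is imported verbatim from Lemma B.1 of \citet{basu2015estimation} with the proof omitted --- so there is no in-paper argument to compare against; your proof is the standard one for this fact and is correct. The chain (Kronecker identity $\theta'(\Theta\otimes M)\theta=\sum_{i,j}\sigma^{ij}V_i'MV_j$, Cauchy--Schwarz in the $M$-semi-inner product to fold the off-diagonal terms onto the diagonal, column-wise application of the scalar RE bound, then $\sum_i\|V_i\|^2=\|\theta\|^2$ and $\sum_i\|V_i\|_1^2\le\|\theta\|_1^2$) is exactly how this is established in the source. Your closing caveat is also well taken and worth keeping: the Cauchy--Schwarz step genuinely produces $\sigma^{ii}-\sum_{j\ne i}|\sigma^{ij}|$, and the lemma as transcribed here (and again in Proposition~\ref{prop:REcondition}) drops the absolute values; without them the hypothesis $\psi^i>0$ is not even enough to make $\Theta$ positive semidefinite (e.g.\ $\sigma^{ii}=1$, $\sigma^{ij}=-10$), so the statement must be read with $|\sigma^{ij}|$ in the definition of $\psi^i$, exactly as you say. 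No gaps otherwise.
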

\medskip 

\begin{lemma}\label{lemma:RESX}
With probability at least $1-2\exp(-c_3n)$, for a zero-mean sub-Gaussian random design matrix $X\in\mathbb{R}^{n\times p_1}$, its sample covariance matrix $\widehat{\Sigma}_X$ satisfies the RE condition with parameter $\varphi^*$ and $\phi^*$, i.e.,
\begin{equation}
\widehat{\Sigma}_X\sim RE(\varphi^*,\phi^*),
\end{equation}
where $\widehat{\Sigma}_X=X'X/n$, $\varphi^*=\Lambda_{\min}(\Sigma^*_X)/2$, $\phi^*=\varphi^* \log p_1/n$.
\end{lemma}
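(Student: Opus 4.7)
The plan is to establish the Restricted Eigenvalue condition through the now-standard three-step pipeline of (a) pointwise concentration via the Hanson--Wright inequality for sub-Gaussian quadratic forms, (b) an $\epsilon$-net / union-bound argument that upgrades the pointwise statement to a uniform lower bound over sparse unit vectors, and (c) a ``widening'' argument \`a la Loh--Wainwright (2012, Lemma 12) / Rudelson--Zhou that converts the sparse-cone lower bound into the $\ell_1$-relaxed RE inequality
\[
\theta' \widehat{\Sigma}_X \theta \;\geq\; \varphi^* \|\theta\|_2^2 \;-\; \phi^*\|\theta\|_1^2,\qquad \forall\,\theta\in\mathbb{R}^{p_1}.
\]

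First, I would invoke Hanson--Wright on the i.i.d.\ zero-mean sub-Gaussian rows of $X$ (parameters $(\Sigma_X^*,\sigma_x^2)$): for any fixed unit vector $v\in\mathbb{R}^{p_1}$,
\[
\mathbb{P}\!\left(\bigl|v'(\widehat{\Sigma}_X-\Sigma_X^*)v\bigr| > t\right) \;\leq\; 2\exp\!\left(-cn\min\{t^2/\sigma_x^4,\,t/\sigma_x^2\}\right).
\]
Setting $t = \Lambda_{\min}(\Sigma_X^*)/4$ gives the pointwise bound $v'\widehat{\Sigma}_Xv \geq (3/4)\,\Lambda_{\min}(\Sigma_X^*)\|v\|_2^2$ with exponentially small failure probability.

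Next, I would promote this to a uniform bound over all $s$-sparse unit vectors by a standard covering argument: for each fixed support $S\subseteq\{1,\ldots,p_1\}$ with $|S|=s$, a $1/4$-net of the $\ell_2$-unit sphere in $\mathbb{R}^S$ has cardinality at most $9^s$, and a net-to-sphere approximation loses only a factor of $2$. Using $\binom{p_1}{s}\leq (ep_1/s)^s$ and taking a union bound, the total failure probability is at most $2\exp(-c_3 n)$ provided $n \succsim s\log p_1$, yielding
\[
\inf_{\|v\|_0\leq s,\,\|v\|_2=1}\; v'\widehat{\Sigma}_X v \;\geq\; \tfrac{1}{2}\Lambda_{\min}(\Sigma_X^*) \;=\; \varphi^*.
\]

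The remaining and crucial step is passing from this sparse-sphere lower bound to the $\ell_1$-penalized RE condition on all of $\mathbb{R}^{p_1}$, and this is where I expect the main technical care. Following Loh--Wainwright's decomposition, any $\theta$ is split into its restriction to the top-$s$ coordinates and a residual whose contribution is absorbed into an $\ell_1$ penalty; combined with the uniform sparse bound above, this produces
\[
\theta'\widehat{\Sigma}_X\theta \;\geq\; \varphi^*\|\theta\|_2^2 \;-\; \varphi^*\,\frac{\log p_1}{n}\,\|\theta\|_1^2,
\]
matching the target with $\phi^* = \varphi^*\log p_1/n$. The delicate part is choosing $s \asymp n/\log p_1$ in the covering step so that (i) the combinatorial factor $(ep_1/s)^s$ is still absorbed into $\exp(-c_3 n)$, and (ii) the resulting tolerance constant in $\phi^*$ is compatible with the downstream requirement $s^{**}\phi^* \leq \varphi^*/32$ used in Theorem~\ref{thm:ErrorBound_beta}; verifying these constants line up is the only nontrivial bookkeeping in the proof.
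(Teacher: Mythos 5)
Your proposal follows essentially the same route as the paper: the paper cites Lemma 15 of Loh and Wainwright for the uniform concentration of $v'(\widehat{\Sigma}_X-\Sigma_X^*)v$ over $2s$-sparse unit vectors (which is exactly the Hanson--Wright-plus-$\epsilon$-net argument you spell out), then applies their supplementary widening lemma with the choice $s\asymp n/\log p_1$ to obtain $RE(\varphi^*,\phi^*)$ with $\varphi^*=\Lambda_{\min}(\Sigma_X^*)/2$ and $\phi^*=\varphi^*\log p_1/n$. The only difference is cosmetic: you set the deviation level at $\Lambda_{\min}(\Sigma_X^*)/4$ while the paper uses $\Lambda_{\min}(\Sigma_X^*)/54$, the constant required by the hypotheses of the widening lemma as stated in Loh--Wainwright, which is precisely the constant-bookkeeping you flag at the end.
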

\begin{proof}
To prove this lemma, we first use Lemma 15 in \citet{loh2011high}, which states that if $X\in\mathbb{R}^{n\times p}$ is  zero-mean sub-Gaussian with parameter $(\Sigma,\sigma^2)$, then there exists a universal constant $c>0$ such that 
\begin{equation}
\mathbb{P}\left( \sup_{v\in\mathbb{K}(2s)}\left| \frac{\|Xv\|_2^2}{n} - \mathbb{E}\left[ \frac{\|Xv\|_2^2}{n}\right] \right| \geq t\right) \leq 2\exp\left( -cn \min(\frac{t^2}{\sigma^4},\frac{t}{\sigma^2})+ 2s\log p\right),
\end{equation}
where $\mathbb{K}(2s)$ is a set of $2s$ sparse vectors, defined as:
\begin{equation*}
\mathbb{K}(2s) : = \{v\in\mathbb{R}^p:\|v\|\leq 1,\|v\|_0\leq 2s\}.
\end{equation*}
By taking $t=\frac{\Lambda_{\min}(\Sigma^*_X)}{54}$, with probability at least $1-2\exp\left(-c'n + 2s\log p_1\right)$ for some $c'>0$, the following bound holds:
\begin{equation}\label{dev:Sx}
|v' (\widehat{\Sigma}_X-\Sigma^*_X) v | \leq \frac{\Lambda_{\min}(\Sigma^*_X)}{54}, \quad \forall v\in \mathbb{K}(2s).
\end{equation}
Then applying supplementary Lemma~13 in \citet{loh2011high}, for an estimator $\widehat{\Sigma}_X$ of $\Sigma^*_X$ satisfying the deviation condition in (\ref{dev:Sx}), the following RE condition holds:
\begin{equation*}
v'S_xv \geq \frac{\Lambda_{\min}(\Sigma^*_X)}{2}\|v\|_2^2 - \frac{\Lambda_{\min}(\Sigma^*_X)}{2s}\|v\|_1^2.
\end{equation*}
Finally, set $s=c''n/4\log p_1$, then with probability at least $1-2\exp(-c_3n)~(c_3>0)$, $\widehat{\Sigma}_X\sim RE(\varphi^*,\phi^*)$ with $\varphi^*=\Lambda_{\min}(\Sigma^*_X)/2$, $\phi^*= \varphi^*\log p_1/n$. 
\end{proof}

\medskip

With the above two lemmas, we are ready to prove Proposition~\ref{prop:REcondition}. 

\begin{proof}[Proof of Proposition~\ref{prop:REcondition}]
We first show that if $\Theta^*_\epsilon$ is diagonally dominant, then $\widehat{\Theta}_\epsilon$ is also diagonally dominant provided that the error of $\widehat{\Theta}_\epsilon$ is of the given order and $n$ is sufficiently large. Define 
\begin{equation*}
\widehat{\psi}^i = \widehat{\sigma}_\epsilon^{ii} - \sum\limits_{j\neq i}\widehat{\sigma}_\epsilon^{ij},
\end{equation*}
where $\widehat{\sigma}_\epsilon^{ij}$ is the $ij$th entry of $\widehat{\Theta}_\epsilon$, then $\widehat{\psi}^i$ is the gap between the diagonal entry and the off-diagonal entries of row $i$ in matrix $\widehat{\Theta}_\epsilon$. We can decompose $\widehat{\psi}^i$ into the following:
\begin{equation*}
\widehat{\psi}^i = \left[\sigma^{ii}_\epsilon - \sum\limits_{j\neq i}\sigma^{ij}_\epsilon\right] + \left[ (\widehat{\sigma}^{ii}_\epsilon - \sigma^{ii}_\epsilon) + \sum\limits_{j\neq i} (\sigma^{ij}_\epsilon - \widehat{\sigma}^{ij}_\epsilon)\right].
\end{equation*}
Recall that we define $\psi_i$ as $\psi^i=\sigma_\epsilon^{ii}-\sum_{j\neq i}^{p_2}\sigma_\epsilon^{ij}$. Hence
\begin{equation}\label{eqn:boundpsi}
\begin{split}
\min\widehat{\psi}^i& \geq \min_i \psi^i - \vertiii{\widehat{\Theta}_\epsilon-\Theta_\epsilon^*}_\infty \geq \min_i (\sigma_\epsilon^{ii} - \sum_{j\neq i}\sigma^{ij}_\epsilon) - d\nu_\Theta = \min\psi^i - d\nu_\Theta,\\
\max\widehat{\psi}^i& \leq \max_i \psi^i + \vertiii{\widehat{\Theta}_\epsilon-\Theta_\epsilon^*}_\infty \leq \max_i (\sigma_\epsilon^{ii} - \sum_{j\neq i}\sigma^{ij}_\epsilon) + d\nu_\Theta = \max\psi^i + d\nu_\Theta.
\end{split}
\end{equation}
Now given $\nu_\Theta = \eta_\Theta\frac{\log p_2}{n} = O(\sqrt{\log p_2/n})$, with $n\succsim d^2\log p_2$, $d\nu_\Theta=o(1)$, and it follows that
\begin{equation*}
\min\limits_i \psi^i - d\nu_\Theta \geq 0.
\end{equation*}
Now by Lemma~\ref{lemma:RESX}, $X'X/n\sim RE(\varphi^*,\phi^*)$ with high probability. Combine with Lemma~\ref{lemma:restateB.1} and inequality (\ref{eqn:boundpsi}), with probability at least $1-2\exp(-c_3n)$ for some  $c_3>0$,  $\widehat{\Gamma}$ satisfies the following RE condition:
\begin{equation}
\widehat{\Gamma} = \widehat{\Theta}_\epsilon\otimes (X'X/n)\sim RE\left(\varphi^* (\min_i\psi^i-d\nu_\Theta), \phi^* \max\limits_i(\psi^i+d\nu_\Theta)\right),
\end{equation}
where $\varphi^*=\Lambda_{\min}(\Sigma^*_X)/2$, $\phi^*=\varphi^*\log p_1/n$.
\end{proof}

\medskip

To prove Proposition~\ref{prop:deviation}, we first prove Lemma~\ref{lemma:deviation_aux}.

\begin{lemma}\label{lemma:deviation_aux}
Let $X\in\mathbb{R}^{n\times p}$ be a zero-mean sub-Gaussian matrix with parameter $(\Sigma_X,\sigma_X^2)$ and $E\in\mathbb{R}^{n\times p_2}$ be a zero-mean sub-Gaussian matrix with parameters $(\Sigma_\epsilon,\sigma^2_\epsilon)$. Moreover, $X$ and $E$ are independent. Let $\Theta_\epsilon:=\Sigma_\epsilon^{-1}$, then if $n\succsim \log (p_1p_2)$, the following two expressions hold with probability at least $1-6c_1\exp[-(c_2^2-1)\log (p_1p_2)]$ for some $c_1>0,c_2>1$, respectively:
\begin{equation}\label{deviation_aux-1}
\frac{1}{n}\left\|X'E\right\|_\infty \leq c_2\left[\Lambda_{\max}(\Sigma_X)\Lambda_{\max}(\Sigma_\epsilon)\right]^{1/2} \sqrt{\frac{\log (p_1p_2) }{n}}.
\end{equation}
and
\begin{equation}\label{deviation_aux-2}
\frac{1}{n}\left\| X'E\Theta_\epsilon\right\|_\infty \leq c_2\left[\frac{\Lambda_{\max}(\Sigma_X)}{\Lambda_{\min}(\Sigma_\epsilon)}\right]^{1/2} \sqrt{\frac{\log (p_1p_2) }{n}}.
\end{equation}
\end{lemma}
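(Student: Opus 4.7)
\textbf{Proof plan for Lemma~\ref{lemma:deviation_aux}.} My strategy is to reduce both bounds to concentration of squared norms of sub-Gaussian vectors, via the polarization identity, and then take a union bound over the $p_1 p_2$ entries. This is the same high-level approach taken in Proposition~2.4 of \citet{basu2015estimation}; the only twist is that here the two factors $X$ and $E$ have different dimensions, but this poses no real difficulty because we only care about one entry at a time.

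First I would fix $(i,j)$ and write, with $u=e_i\in\mathbb{R}^{p_1}$ and $v=e_j\in\mathbb{R}^{p_2}$,
\begin{equation*}
\frac{1}{n}(X'E)_{ij} \;=\; \frac{1}{n}\,u'X'Ev \;=\; \frac{1}{4n}\Bigl(\|Xu+Ev\|_2^2 - \|Xu-Ev\|_2^2\Bigr).
\end{equation*}
By independence of $X$ and $E$, the vector $Xu\pm Ev\in\mathbb{R}^n$ has i.i.d.\ zero-mean sub-Gaussian coordinates with variance $\Sigma_{X,ii}+\Sigma_{\epsilon,jj}\le \Lambda_{\max}(\Sigma_X)+\Lambda_{\max}(\Sigma_\epsilon)$, and the two squared-norm expectations cancel. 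Applying the standard sub-exponential (Hanson-Wright type) concentration bound for $\|Xu\pm Ev\|_2^2/n$ used throughout \citet{basu2015estimation}, for any $t\in(0,1)$ there exist universal constants $c,c_1>0$ with
\begin{equation*}
\mathbb{P}\!\left(\bigl|(X'E)_{ij}/n\bigr| > t\bigl[\Lambda_{\max}(\Sigma_X)\Lambda_{\max}(\Sigma_\epsilon)\bigr]^{1/2}\right) \le 6c_1\exp(-c n t^2).
\end{equation*}
Setting $t=c_2\sqrt{\log(p_1p_2)/n}$ for $c_2>1$ chosen so that $cc_2^2\ge c_2^2-1$ plus the constant absorbed into $c_1$, the per-entry probability is at most $6c_1\exp[-(c_2^2-1)\log(p_1p_2)]/(p_1p_2)$; a union bound over the $p_1p_2$ entries of $X'E$ then yields the bound \eqref{deviation_aux-1}, provided $n\succsim\log(p_1p_2)$ so that $t<1$ and we remain in the sub-Gaussian (quadratic) regime of the Bernstein-type inequality.

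For the second bound I would reduce it to the first by the following observation: since each row of $E$ is zero-mean Gaussian (more generally sub-Gaussian) with covariance $\Sigma_\epsilon$, the matrix $\widetilde E := E\Theta_\epsilon$ has i.i.d.\ rows with covariance $\Theta_\epsilon\Sigma_\epsilon\Theta_\epsilon=\Theta_\epsilon$, hence $\widetilde E$ is zero-mean sub-Gaussian with parameters $(\Theta_\epsilon,\tilde\sigma^2_\epsilon)$ for some $\tilde\sigma_\epsilon^2$ of the same order as $\sigma_\epsilon^2$ times $\Lambda_{\max}(\Theta_\epsilon)/\Lambda_{\min}(\Sigma_\epsilon)$. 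Moreover $X$ and $\widetilde E$ are independent. Repeating the polarization--concentration--union bound argument above with $E$ replaced by $\widetilde E$, and noting $\Lambda_{\max}(\Theta_\epsilon)=1/\Lambda_{\min}(\Sigma_\epsilon)$, produces exactly \eqref{deviation_aux-2}.

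The only delicate point is bookkeeping: to obtain the stated probability $1-6c_1\exp[-(c_2^2-1)\log(p_1p_2)]$ one must absorb the $p_1p_2$ factor from the union bound into the exponent, which forces the constraint $c_2>1$ and the sample size condition $n\succsim\log(p_1p_2)$ that guarantees the deviation $t$ lies in the sub-Gaussian regime of the Hanson--Wright inequality (so that the $\min(t^2/\alpha^2,t/\alpha)$ in the sub-exponential tail reduces to the $t^2$ term). This is the main technical step; everything else is the polarization identity and the independence argument. I expect no genuine obstacle, as the argument is a direct translation of the Basu--Michailidis proof to rectangular $X$ and $E$, together with the elementary covariance calculation $\mathrm{Cov}(E\Theta_\epsilon)=\Theta_\epsilon$.
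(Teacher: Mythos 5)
Your proposal is correct and follows essentially the same route as the paper: the paper simply invokes Lemma~14 of \citet{loh2011high} as a black box (whose proof is exactly your polarization--Hanson--Wright--union-bound argument), sets the cross-covariance to zero by independence, and handles the second bound by the identical observation that $E\Theta_\epsilon$ is sub-Gaussian with covariance $\Theta_\epsilon\Sigma_\epsilon\Theta_\epsilon=\Theta_\epsilon$ and parameter bounded by $\Lambda_{\max}(\Theta_\epsilon)=1/\Lambda_{\min}(\Sigma_\epsilon)$. One small bookkeeping point: to land on the \emph{product} $[\Lambda_{\max}(\Sigma_X)\Lambda_{\max}(\Sigma_\epsilon)]^{1/2}$ rather than the (weaker) arithmetic mean that your ``variance $\Sigma_{X,ii}+\Sigma_{\epsilon,jj}$'' step would yield, you should rescale $Xe_i$ and $Ee_j$ by their marginal standard deviations before polarizing, which is precisely how the cited lemma obtains the $\sigma_x\sigma_y$ factor.
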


\begin{proof}
The proof of this lemma uses Lemma 14 in \citet{loh2011high}, in which they show that if $X\in\mathbb{R}^{n\times p_1}$ is a zero-mean sub-Gaussian matrix with parameters $(\Sigma_x,\sigma_x^2)$ and $Y\in\mathbb{R}^{n\times p_2}$ is a zero-mean sub-Gaussian matrix with parameters $(\Sigma_y,\sigma_y^2)$, then if $n\succsim \log(p_1p_2)$, 
\begin{equation*}
\mathbb{P}\left(\left\| \frac{Y'X}{n} - cov(y_i,x_i)\right\|_\infty \geq t \right) \leq 6p_1p_2\exp\left(-cn\min\left\{\frac{t^2}{(\sigma_x\sigma_y)^2},\frac{t}{\sigma_x\sigma_y} \right\} \right)
\end{equation*}
where $X_i$ and $Y_i$ are the $i$th row of $X$ and $Y$, respectively.

Here, we replace $Y$ by $E$, and since $E$ and $X$ are independent, $cov(X_i,E_i)=0$. Let $t=c_2\sigma_X\sigma_\epsilon\sqrt{\log(p_1p_2)/n}$, $c_2>1$ we get:
\begin{equation*}
\mathbb{P}\left(\left\| \frac{X'E}{n}\right\|_\infty\geq c_2\sigma_X\sigma_\epsilon\sqrt{\frac{\log(p_1p_2)}{n}} \right)\leq 6c_1(p_1p_2)^{1-c_2^2} = 6c_1\exp\left[-(c_2^2-2)\log(p_1p_2)\right]
\end{equation*}
Note that the sub-Gaussian parameter satisfies $\sigma^2_X\leq \max_i(\Sigma_{X,ii})\leq \Lambda_{\max}(\Sigma_X)$. This directly gives the bound in (\ref{deviation_aux-1}).

To obtain the bound in (\ref{deviation_aux-2}), we note that if $E$ is sub-Gaussian with parameters $(\Sigma_\epsilon,\sigma_\epsilon^2)$, then $E\Theta$ is sub-Gaussian with parameter $(\Theta,\theta_\epsilon^2)$, where
\begin{equation*}
\theta_\epsilon^2 \leq \max_i(\Theta_{\epsilon,ii}) \leq \Lambda_{\max}(\Theta_\epsilon) = \frac{1}{\Lambda_{\min}(\Sigma_\epsilon)}.
\end{equation*}
Then we replace $Y$ by $E\Theta$ and yield the bound in (\ref{deviation_aux-2}).
\end{proof}

As a remark, here we note that the event in (\ref{deviation_aux-1}) and (\ref{deviation_aux-2}) may not be independent. However, the two events hold simultaneously with probability at least $1-2c_2\exp[-c_2\log(p_1p_2)]$, with this crude bound for probability hold for sure. 

\medskip
Now we are ready to prove Proposition~\ref{prop:deviation}. 

\begin{proof}[Proof of Proposition~\ref{prop:deviation}] 
First we note that 
\begin{equation*}
X'E\widehat{\Theta}_\epsilon= X'E\Theta_\epsilon + X'E (\widehat{\Theta}_\epsilon -\Theta^*_\epsilon),
\end{equation*}
which directly gives the following inequality:
\begin{equation}\label{decomp}
\begin{split}
\|\widehat{\gamma}-\widehat{\Gamma}\beta^*\|_\infty=\frac{1}{n}\left\|X'E\widehat{\Theta}_\epsilon\right\|_\infty  \leq \frac{1}{n} \left\|X'E\Theta_\epsilon^*\right\|_\infty + \frac{1}{n}\left\|X'E (\widehat{\Theta}_\epsilon-\Theta_\epsilon^*)\right\|_\infty.
\end{split}
\end{equation}
Now we would like to bound the two terms separately. 

The first term can be bounded by (\ref{deviation_aux-2}) in Lemma~\ref{lemma:deviation_aux}, that is:
\begin{equation*}
\frac{1}{n}\left\| X'E\Theta_\epsilon^*\right\|_\infty \leq c_2\left[\frac{\Lambda_{\max}(\Sigma_X)}{\Lambda_{\min}(\Sigma^*_\epsilon)}\right]^{1/2} \sqrt{\frac{\log (p_1p_2)}{n}}.
\end{equation*}
w.p. at least $1-6c_1\exp[-(c_2^2-1)\log (p_1p_2)]$.

For the second term, first we note that 
\begin{equation}\label{exp: 2nd term}
\begin{split}
\frac{1}{n}\left\|X'E (\widehat{\Theta}_\epsilon-\Theta_\epsilon^*)\right\|_\infty & = \frac{1}{n}\max\limits_{\substack{1\leq i\leq p_1\\ 1\leq j\leq p_2}} \left|e_i'X'E (\widehat{\Theta}_\epsilon-\Theta_\epsilon^*) e_j\right| \\
& \leq \frac{1}{n}\max\limits_{i} \left\|e_i'X'E \right\|_\infty \max\limits_j \left\| (\widehat{\Theta}_\epsilon-\Theta^*_\epsilon) e_j\right\|_1
\end{split}
\end{equation}
where we have $e_i\in\mathbb{R}^{p_1}$ and $e_j\in\mathbb{R}^{p_2}$, and the inequality comes from the fact that $|a'b|\leq \|a\|_\infty \|b\|_1$. Note that 
\begin{equation*}
\max\limits_{i} \left\|e_i'X'E \right\|_\infty = \|X'E\|_\infty
\end{equation*}
since $\|e_i'X'E \|_\infty$ gives the largest element (in absolute value) of the $i$th row of $X'E$, and taking the maximum over all $i$'s gives the largest element of $X'E$ over all entries. And for $\max\limits_j \left\| (\widehat{\Theta}_\epsilon-\Theta_\epsilon^*) e_j\right\|_1$, it holds that 
\begin{equation*}
\max\limits_j \left\| (\widehat{\Theta}_\epsilon-\Theta_\epsilon^*) e_j\right\|_1 = \vertiii{\widehat{\Theta}_\epsilon-\Theta_\epsilon^*}_1 = \vertiii{\widehat{\Theta}_\epsilon-\Theta_\epsilon^*}_\infty,
\end{equation*}
where $\vertiii{A}_1 :=\max_{\|x\|_1=1}\|Ax\|_1$ is the $\ell_1$-operator norm, and the last equality follows from the fact that $\vertiii{A}_1=\vertiii{A'}_\infty$. As a result,  (\ref{exp: 2nd term}) can be re-written as:
\begin{equation}\label{term2simplified}
\frac{1}{n}\left\|X'E (\widehat{\Theta}_\epsilon-\Theta_\epsilon^*)\right\|_\infty\leq \left(\frac{1}{n}\|X'E\|_\infty\right) \left( \vertiii{\widehat{\Theta}_\epsilon-\Theta_\epsilon^*}_\infty\right).
\end{equation}
Now, using (\ref{deviation_aux-1}), w.p. at least $1-6c_1\exp[-(c_2^2-1)\log (p_1p_2)]$, we have 
\begin{equation*}
\frac{1}{n}\left\|X'E\right\|_\infty \leq c_2\left[\Lambda_{\max}(\Sigma_X)\Lambda_{\max}(\Sigma^*_\epsilon)\right]^{1/2} \sqrt{\frac{\log (p_1p_2)}{n}},
\end{equation*}
and since $\|\widehat{\Theta}_\epsilon-\Theta_\epsilon^*\|_\infty\leq \nu_\Theta$, it directly follows that $\vertiii{\widehat{\Theta}_\epsilon-\Theta_\epsilon^*}_\infty\leq d\nu_\Theta$. Therefore, with probability at least $1-6c_1\exp[-(c_2^2-1)\log (p_1p_2)]$, 
\begin{equation}\label{bound:2nd}
\frac{1}{n}\left\|X'E (\widehat{\Theta}_\epsilon-\Theta_\epsilon^*)\right\|_\infty \leq c_2d\nu_\Theta\left[\Lambda_{\max}(\Sigma_X)\Lambda_{\max}(\Sigma^*_\epsilon)\right]^{1/2} \sqrt{\frac{\log (p_1p_2)}{n}}.
\end{equation}
Combine the two terms, we obtain the conclusion in Proposition~\ref{prop:deviation}.
\end{proof}

\medskip

\begin{proof}[Proof of Corollary~\ref{cor:ErrorBoundB}]
Here we examine the probability that events A1-A3 hold in Theorem~\ref{thm:ErrorBound_beta}. First we note that (A1) in Theorem~\ref{thm:ErrorBound_beta} holds deterministically. Now by Proposition~\ref{prop:REcondition}, (A2) is satisfied w.p. at least $1-2\exp(-c_3n)$. By Proposition~\ref{prop:deviation}, the deviation bound (A3) holds with probability at least $1-12c_1\exp[-(c_2^2-1)\log(p_1p_2)]$, where $\mathbb{Q}$ is specified in (\ref{Q-expression}). Combine all sample size requirement, the leading term becomes $n\succsim \log(p_1p_2)$. Therefore, for random pair $(X,E)$, with probability at least 
\begin{equation*}
1-12c_1\exp[-(c_2^2-1)\log (p_1p_2)] -  2\exp(-c_3n),
\end{equation*}
for some $c_1>0,c_2>1,c_3>0$, the bound in (\ref{bbound-1}) holds, as the result of Theorem~\ref{thm:ErrorBound_beta} and Proposition~\ref{prop:REcondition} and \ref{prop:deviation} combined. 
\end{proof}

\medskip

\begin{proof}[Proof of Proposition~\ref{prop:residual-concentration}]
First we note the following decomposition:
\begin{equation*}
\|\widehat{S} - \Sigma^*_\epsilon\|_\infty \leq   \| S - \Sigma_\epsilon \|_\infty + \|\widehat{S} - S \|_\infty : = \|W_1\|_\infty + \|W_2\|_\infty
\end{equation*}
where $S$ is the sample covariance matrix of the true errors $E$. 

For $W_1$, by Lemma~8 in \citet{ravikumar2011high}, for sample size $$n\geq 512(1+4\sigma_\epsilon^2)^4 \max_i(\Sigma^*_{\epsilon,ii})^4\log(4p_2^{\tau_2}),$$ the following bound holds w.p. at least $1-1/p_2^{\tau_2-2}(\tau_2>2)$:
\begin{equation}\label{exp:c_epsilon}
\|W_1\|_\infty \leq \sqrt{\frac{\log 4 + \tau_2 \log p_2}{c^*_\epsilon n}}, \quad \text{where }c_\epsilon^*=\left[ 128(1+4\sigma_\epsilon^2)^2\max\limits_{i}(\Sigma_{\epsilon,ii}^*)^2\right]^{-1}.
\end{equation}
For $W_2$, re-write it as:
\begin{equation}\label{W2:decom}
W_2 = \frac{2}{n} E'X(B^*-\widehat{B}) + (B^*-\widehat{B})'\left(\frac{X'X}{n}\right)  (B^*-\widehat{B}) 
\end{equation}
The first term in (\ref{W2:decom}) can be bounded as:
\begin{equation}\label{W1bound}
\left\|\frac{2}{n} E'X(B^*-\widehat{B})\right\|_\infty \leq 2 \vertiii{B^*-\widehat{B}}_1\left\|\frac{1}{n}X'E\right\|_\infty \leq 2\|\beta^*-\widehat{\beta}\|_1\cdot \left\|\frac{1}{n}X'E\right\|_\infty.
\end{equation}
By Lemma~\ref{lemma:deviation_aux},  with probability at least $1-6c_1\exp[-(c_2^2-1)\log (p_1p_2)]$, the following bound holds:
\begin{equation}\label{W2bound:1st}
\left\|\frac{2}{n} E'X(B^*-\widehat{B})\right\|_\infty \leq 2c_2\nu_\beta \left[\Lambda_{\max}(\Sigma_X)\Lambda_{\max}(\Sigma^*_\epsilon)\right]^{1/2} \sqrt{\frac{\log (p_1p_2) }{n}},
\end{equation}
with the sample size requirement being $n\succsim \log(p_1p_2)$. 

For the second term in (\ref{W2:decom}), we consider the following bound:
\begin{equation} \label{W2bound:2nd}
\begin{split}
\|(B^*-\widehat{B})'\left(\frac{X'X}{n}\right)  (B^*-\widehat{B}) \|_\infty & \leq \vertiii{B^*-\widehat{B}}_1 \left\|\left(\frac{X'X}{n}\right) (B^*-\widehat{B})\right\|_\infty\\
& \leq  \vertiii{B^*-\widehat{B}}^2_1   \left\| \left(\frac{X'X}{n}\right) \right\|_\infty
\end{split} 
\end{equation}
Here, we apply Lemma~8 in \citet{ravikumar2011high} to the design matrix $X$, for sample size $$n\geq 512(1+4\sigma_x^2)^4 \max_i(\Sigma_{X,ii})^4\log (4p_1^{\tau_1}),$$ the following bound holds w.p. at least $1-1/p_1^{\tau_1-2}(\tau_1>2)$:
\begin{equation}\label{exp:c_X}
\left\|\left(\frac{X'X}{n}\right) - \Sigma_X\right\|_\infty \leq \sqrt{\frac{\log 4 + \tau_1 \log p_1}{c^{*}_X n}}, \qquad \text{where }~c_X^*=\left[ 128(1+4\sigma_x^2)^2\max\limits_{i}(\Sigma_{X,ii})^2\right]^{-1}
\end{equation}
This indicates that with this choice of $n$, the following bound holds with probability at least $1-1/p_1^{\tau_1-2}(\tau_1>2)$:
\begin{equation*}
\left\| \left(\frac{X'X}{n}\right) \right\|_\infty \leq \sqrt{\frac{\log 4 + \tau_1 \log p_1}{c^*_Xn}} +  \max_i(\Sigma_{X,ii})
\end{equation*}
Combine with the bound in (\ref{W2bound:2nd}), with probability at least $1-1/p_1^{\tau_1-2}(\tau_1>2)$, the following bound holds:
\begin{equation} \label{W2bound:2nd-final}
\|(B^*-\widehat{B})'\left(\frac{X'X}{n}\right)  (B^*-\widehat{B}) \|_\infty  \leq \nu_\beta^2\left( \sqrt{\frac{\log 4 + \tau_1 \log p_1}{c^*_Xn}} +  \max_i(\Sigma_{X,ii})\right)
\end{equation}
Now combine (\ref{W1bound}), (\ref{W2bound:1st}) and (\ref{W2bound:2nd-final}), we reach the conclusion of Proposition 3, with the leading term in the sample size requirement being $n\succsim \log(p_1p_2)$. 
\end{proof}

\subsection{An example for multi-layered network estimation.} \label{appendix:example}
As mentioned at the beginning of Section~\ref{sec:methodology}, the proposed methodology is designed for obtaining MLEs for multi-layer Gaussian networks, but the problem breaks down into a sequence of 2-layered estimation problems. Here we give an detailed example to illustrate how our proposed methodology proceeds for a 3-layered network. \\

Suppose there are $p_1,p_2$ and $p_3$ nodes in Layers 1, 2 and 3, respectively. This three-layered network is modeled as follows: 
\begin{itemize}
 \setlength\itemsep{1pt}
\renewcommand\labelitemi{--}
\item $\bs{X}\sim \mathcal{N}(0,\Sigma_X)$, $\bs{X}\in\mathbb{R}^{p_1}$. 
\item For $j=1,\cdots,p_2$: $Y_j = \bs{X}'B_j^{xy}+\epsilon^Y_j$, $B_j^{xy}\in\mathbb{R}^{p_1}$. $(\epsilon^Y_1  \cdots \epsilon^Y_{p_2})'\sim \mathcal{N}(0,\Sigma_{\epsilon,Y})$. 
\item For $l=1,2,\cdots,p_3$: $Z_l = \bs{X}'B_l^{xz} + \bs{Y}'B_l^{yz}+\epsilon_l^Z$, $B_l^{xz}\in\mathbb{R}^{p_1}$ and $B_l^{yz}\in\mathbb{R}^{p_2}$. 
$(\epsilon^Z_1 \cdots \epsilon^Z_{p_3})'\sim \mathcal{N}(0,\Sigma_{\epsilon,Z})$.
\end{itemize}
The parameters of interest are : $\Theta_X$, $\Theta_{\epsilon,Y}:=\Sigma_{\epsilon,Y}^{-1}$, $\Theta_{\epsilon,Z}:=\Sigma_{\epsilon,Z}^{-1}$, which denote the within-layer conditional dependencies, and
\begin{equation*}
B_{XY} = \begin{bmatrix}
B_1^{xy} & \cdots & B_{p_2}^{xy}
\end{bmatrix},~~  B_{XZ} = \begin{bmatrix}
B_1^{xz} & \cdots & B_{p_3}^{xz}
\end{bmatrix} ~~\text{and}~~ B_{YZ} = \begin{bmatrix}
B_1^{yz} & \cdots & B_{p_3}^{yz},
\end{bmatrix}
\end{equation*}
which encode the across-layer dependencies.\\

Now given data $X\in\mathbb{R}^{n\times p_1}$, $Y\in\mathbb{R}^{n\times p_2}$ and $Z\in\mathbb{R}^{n \times p_3}$, all centered, the full log-likelihood can be written as:
\begin{equation}
\ell(Z,Y,X) = \ell(Z|Y,X; \Theta_{\epsilon,Z},B_{YZ},B_{XZ}) + \ell(Y|X;\Theta_{\epsilon,Y},B_{XY}) + \ell(X;\Theta_X).
\end{equation}
The separability of the log-likelihood enables us to ignore the inner structure of the combined layer $\widetilde{X}:=(X,Y)$ when trying to estimate the dependencies between Layer 1 and Layer 3, Layer 2 and Layer 3, as well as the conditional dependencies within Layer 3. As a consequence, the optimization problem minimizing the negative log-likelihood can be decomposed into three separate problems, i.e., solving for $\{\Theta_{\epsilon,Z},B_{XZ},B_{YZ}\}$, $\{\Theta_{\epsilon,Y},B_{XY}\}$ and $\{\Theta_X\}$, respectively. 

The estimation procedure described in Section~\ref{sec:estimation} can thus be carried out in a recursive way in a sense of what follows. To obtain estimates for $\{B_{XZ},B_{YZ},\Theta_{\epsilon,Z}\}$, based on the formulation in (\ref{eqn:obj}), we solve the following opmization problem:
\begin{eqnarray*}
\min\limits_{\substack{\Theta_{\epsilon,Z}\in\mathbb{S}_{++}^{p_3\times p_3}\\B_{XZ},B_{YZ}}}  \begin{Bmatrix} -\log\det\Theta_{\epsilon,Z} + \frac{1}{n}\sum_{j=1}^{p_3}\sum_{i=1}^{p_3} \sigma_Z^{ij}(Z_i-XB^{xz}_i-YB^{yz}_i)^\top(Z_j-XB^{xz}_j-YB^{yz}_j)  \vspace*{3mm}\\
+\lambda_n(\|B_{XZ}\|_1 + \|B_{YZ}\|_1)+ \rho_n\|\Theta_{\epsilon,Z}\|_{1,\text{off}}  \end{Bmatrix},
\end{eqnarray*}
which can be solved by treating the combined design matrix $\widetilde{X}=(X,Y)$ as a single super layer and $Z$ as the response layer, then apply each step described in Section~\ref{sec:estimation}. To obtain estimates for $B_{XY}$ and $\Theta_{\epsilon,Y}$, we can ignore the 3rd layer for now and apply the exact procedure all over again, by treating $Y$ as the response layer and $X$ as the design layer. The estimate for the precision matrix of the bottom layer $\Theta_X$ can be obtained by graphical lasso \citep{friedman2008sparse} or the nodewise regression \citep{meinshausen2006high}. \\

\end{document}